\documentclass[
notitlepage,
floatfix,
aps,
prl,
reprint,
twocolumn,
superscriptaddress,
]{revtex4-2}
\usepackage{graphicx}
\usepackage{lmodern}
\usepackage{amsmath,amsfonts,amssymb,amscd,mathtools}
\usepackage{ascmac}
\usepackage{mathrsfs}
\usepackage{amsthm}
\usepackage{dsfont}
\usepackage{enumitem}
\usepackage{mathtools}
\usepackage{braket}
\usepackage{bm}
\usepackage{verbatim}
\usepackage{framed}
\usepackage{listings}
\usepackage{tikz}
\usepackage{booktabs}
\usepackage{multirow}

\usepackage[dvipsnames]{xcolor}
\usepackage{hyperref}
\hypersetup{
	pdfnewwindow=true,
	colorlinks=true,
	citecolor=Blue,
	linkcolor=Blue,
	urlcolor=Blue
}
\definecolor{shadecolor}{gray}{0.9}

\newcommand{\rmid}{\mathrm{id}}

\DeclareMathOperator{\Tr}{Tr}

\newcommand{\eq}[1]{Eq.~\eqref{#1}}

\newcommand{\minimize}{\mathrm{Minimize}}
\newcommand{\maximize}{\mathrm{Maximize}}
\newcommand{\subto}{\mathrm{subject\:to}}

\renewcommand{\ge}{\geqslant}
\renewcommand{\le}{\leqslant}

\newcommand{\ketbra}[2]{\ket{{#1}}\!\!\bra{{#2}}}

\newcommand{\cD}{\mathcal{D}}

\newcommand{\cH}{\mathcal{H}}
\newcommand{\cI}{\mathcal{I}}
\newcommand{\cJ}{\mathcal{J}}

\newcommand{\cL}{\mathcal{L}}

\newcommand{\cN}{\mathcal{N}}

\newcommand{\cR}{\mathcal{R}}
\newcommand{\cS}{\mathcal{S}}

\newcommand{\cX}{\mathcal{X}}
\newcommand{\cY}{\mathcal{Y}}

\newcommand{\rA}{\mathrm{A}}
\newcommand{\rB}{\mathrm{B}}

\newcommand{\rH}{\mathrm{H}}

\newcommand{\rR}{\mathrm{R}}

\newcommand{\rX}{\mathrm{X}}

\newcommand{\sC}{\mathscr{C}}

\newcommand{\sF}{\mathscr{F}}

\newcommand{\sM}{\mathscr{M}}

\newcommand{\B}{\mathsf{B}}

\newcommand{\E}{\mathsf{E}}
\newcommand{\F}{\mathsf{F}}
\newcommand{\G}{\mathsf{G}}

\newcommand{\K}{\mathsf{K}}
\newcommand{\M}{\mathsf{M}}
\newcommand{\N}{\mathsf{N}}

\newcommand{\one}{\mathds{1}}
\newcommand{\zero}{\mathds{O}}

\newcommand{\mbC}{\mathbb{C}}

\renewcommand{\emph}[1]{{\textbf{#1}}}

\newtheorem{theorem}{Theorem}
\newtheorem{proposition}{Proposition}
\newtheorem{corollary}{Corollary}

\theoremstyle{definition}
\newtheorem{definition}{Definition}

\newtheorem{example}{Example}

\usepackage{mathtools}
\mathtoolsset{showonlyrefs}

\begin{document}
	
	\title{Joint Realizability Tradeoffs Bounded by Quantum Channel Incompatibility}
	\author{Shintaro Minagawa}
	\affiliation{Aix-Marseille University, CNRS, LIS, Marseille, France}
	\email{minagawa.shintaro@gmail.com}
	
	\author{Ryo Takakura}
	\affiliation{Center for Quantum Information and Quantum Biology, The University of Osaka, 1-2 Machikaneyama, Toyonaka, Osaka 560-0043, Japan\looseness=-1}
	\affiliation{Graduate School of Science, The University of Osaka, 1-1 Machikaneyama, Toyonaka, Osaka 560-0043, Japan\looseness=-1}
	\email{takakura.ryo.qiqb@osaka-u.ac.jp}
	
	\author{Kensei Torii}
	\affiliation{Department of Mathematical Informatics, Nagoya University, Furo-cho, Chikusa-ku, Nagoya 464-8601, Japan}
	\email{toriikensei@nagoya-u.jp}
	
	\begin{abstract}
		Incompatible quantum channels cannot be jointly and exactly realized, meaning that any approximate joint realization inevitably entails a tradeoff in implementation accuracy.
		While this notion of channel incompatibility unifies fundamental limitations such as measurement uncertainty, the no information without disturbance principle, and the no-cloning and no-broadcasting theorems, connecting these traditional relations directly to the resource-theoretic strength of incompatibility has remained elusive.
		In this Letter, we show that generalized robustness, a typical resource quantifier of channel incompatibility, lower bounds the total error of any approximate joint realization.
		Applying this result to measurement channels provides a unified, model-independent framework encompassing error-error and information-error-disturbance tradeoffs. 
		Furthermore, our robustness-based evaluation of disturbance outperforms an algebraic bound for all POVMs in dimensions up to six.
	\end{abstract}
	
	\maketitle
	
	\textit{Introduction}---.
	Quantum theory is governed by various no-go theorems, including the impossibility of simultaneous measurements \cite{Busch2016,guhne2023colloquium}, the ``no information without disturbance" principle \cite{busch2009no,dariano2017quantum}, and the no-broadcasting (no-cloning) theorem \cite{Wootters1982,Dieks1982,Wiesner1983,barnum1996noncommuting,barnum2007generalized}. 
	Despite arising in different contexts, these limitations are comprehensively described by one concept: incompatibility of quantum channels \cite{heinosaari2017incompatibility}. 
	In this framework, joint unmeasurability manifests as the incompatibility of measurement channels, the information-disturbance tradeoff emerges from the incompatibility between an informative measurement and the identity channel, and the no-broadcasting theorem reflects the incompatibility of two identity channels \cite{heinosaari2016invitation}.
	
	By definition, incompatible channels cannot be jointly and exactly realized, implying that any approximate joint realization inevitably forces a tradeoff in implementation accuracy.
	While quantum measurement theory extensively formalizes such limitations through various uncertainty relations \cite{heisenberg1927uber,ozawa2003universally,ozawa2004uncertainty,watanabe2011uncertainty,busch2013proof,branciard2013error,Buscemi2014} and information-disturbance tradeoffs \cite{maccone2006information,buscemi2006information,maccone2007entropic,kretschmann2008information,heinosaari2013qualitative,dariano2020information}, their derivations typically rely on the algebraic structure of quantum theory.
	Recognizing incompatibility as a quantifiable quantum resource \cite{Skrzypczyk2019,Carmeli2019,uola2019quantifying,mori2020operational,Uola2020,Ducuara2025} naturally raises a fundamental question; how does the inherent strength of incompatibility quantitatively constrain this tradeoff? 
	A quantitative connection bridging the operational limits in traditional quantum physics and the modern framework of incompatibility in quantum information theory has not yet been established.
	
	In this Letter, we fill this gap by means of the generalized robustness of incompatibility (RoI) of quantum channels \cite{mori2020operational,Uola2020}.
	We prove that RoI quantitatively bounds the tradeoff in any approximate joint realization of incompatible channels in terms of the diamond norm.
	Generalized robustness is a standard measure in quantum resource theories \cite{vidal1999robustness,Steiner2003,datta2009max,skrzypczyk2019robustness,oszmaniec2019operational,takagi2019general}, quantifying the minimal noise required to erase a given resource; in this case, rendering incompatible channels compatible. 
	While RoI exhibits favorable properties as a resource monotone \cite{chitambar2019quantum,gour2025quantum} and can be efficiently computed via semidefinite programming (SDP), its operational meaning has been tied to advantage in discrimination tasks rather than the physical difficulty of joint realization \cite{Skrzypczyk2019,Carmeli2019,uola2019quantifying,mori2020operational,Uola2020,Ducuara2025}.
	Our results overcome this limitation, establishing a direct link between operational limits in traditional quantum mechanics and the resource-theoretic strength of incompatibility in modern quantum information.

	Specifically, we demonstrate that the more incompatible the target channels are (measured by RoI), the more restricted their simultaneous approximation by a single joint channel must be. 
	Applying this general result to measurement channels yields a novel Heisenberg-type error-error uncertainty relation bounded by the resourcefulness of incompatibility.
	Moreover, we reveal a quantitative relation between the ability of a measurement to extract information and its incompatibility with the identity channel. 
	By combining this relation with our general inequality, we derive a rigorous information-error-disturbance tradeoff.
	The primary contribution of this work is not merely to reproduce known uncertainty relations, but to provide a unified framework where these tradeoffs naturally emerge from the principle of quantum channel incompatibility.
	In particular, our information-disturbance tradeoff bounds disturbance purely by resource strength, independent of algebraic properties of channels \cite{kretschmann2008information}. 
	We furthermore show that this resource-theoretic limit strictly improves known algebraic bounds \cite{heinosaari2013qualitative} in two to six-dimensional systems for any non-trivial POVMs.
	
	\textit{Preliminaries}---.
	Following Refs.~\cite{nielsen2010quantum,wilde2017quantum,watrous2018theory}, we only consider finite-dimensional Hilbert spaces $\cH_\rA, \cH_\rB, \ldots$.
	The set of all linear operators from $\cH_\rA$ to $\cH_\rB$ is written as $\cL(\cH_\rA,\cH_\rB)$, endowed with the trace norm $\|\cdot\|_1$, the operator norm $\|\cdot\|$, and the Hilbert-Schmidt inner product $\langle Y,X\rangle:=\Tr[Y^\dagger X]$.
	We write $\cL(\cH_\rA)\equiv\cL(\cH_\rA,\cH_\rA)$ and denote the subset of all Hermitian operators by $\cL_{\rH}(\cH_\rA)$.
	The zero and identity operators are denoted by
	$\zero_\rA$ and $\one_\rA$, respectively.
	For a linear map $\Psi_{\rA\to\rB}\colon\cL(\cH_\rA)\to\cL(\cH_\rB)$, its adjoint $\Psi^\dagger_{\rA\to\rB}\colon\cL(\cH_\rB)\to\cL(\cH_\rA)$ is defined via $\langle \Psi^\dagger_{\rA\to\rB}(Y_\rB),X_\rA\rangle=\langle Y_\rB,\Psi_{\rA\to\rB}(X_\rA)\rangle$.
	
	Each quantum system $\rA$ is associated with a Hilbert space $\cH_\rA$, and a composite system $\rA\rB$ with the tensor product $\cH_\rA\otimes\cH_\rB$, which is denoted as $\cH_{\rA\rB}$ in the following.
	A general measurement on the system is given by a positive operator-valued measure, in short, a \emph{POVM} $\E_\rA^\cX:=\{E_\rA^{x}\}_{x\in\cX}$ satisfying $E_\rA^{x}\succeq\zero_\rA$ and $\sum_{x\in\cX}E_\rA^{x}=\one_\rA$.
	We denote the set of all POVMs by $\sM(\rA)$.
	We say a POVM is \emph{uninformative} (trivial) if its elements are proportional to the identity, i.e., of the form $\{p_x\one_\rA\}_{x\in\cX}$.
	A completely positive and trace-preserving (CPTP) map $\Phi_{\rA\to\rB}:\cL(\cH_\rA)\to\cL(\cH_\rB)$ is called a \emph{quantum channel} describing an evolution from a system $\rA$ to $\rB$.
	We denote the set of all quantum channels by $\sC(\rA\to\rB)$ and write $\Phi_\rA\equiv \Phi_{\rA\to\rA}$ when the output system $\rB$ is the same as $\rA$.
	The \emph{identity channel} $\rmid_\rA$ is a quantum channel satisfying $\rmid_{\rA}(\rho_\rA)=\rho_\rA~(\forall\rho_\rA\in\cL(\cH_\rA))$.
	A \emph{measurement channel} \cite{Holevo2020} for a POVM $\E^\cX_\rA$ is defined as a quantum-to-classical channel via $\Gamma^{\E^\cX_\rA}(\rho_\rA):=\sum_{x\in\cX}\Tr[E^x_\rA\rho_\rA]\ketbra{x}{x}_\rX$, where $\{\ket{x}\}_{x\in\cX}$ is an orthonormal basis of a reference system $\rX$ (hereafter often omitted).
	A \emph{CP-instrument} is a family of CP and trace non-increasing linear maps $\{\cI_{\rA\to\rB}^{x}\}_{x\in\cX}$ such that $\sum_{x\in\cX}\cI_{\rA\to\rB}^{x}$ is TP.
	It models a quantum measurement process \cite{ozawa1984quantum};
	for the measurement of $\{\cI_{\rA\to\rB}^{x}\}_{x\in\cX}$ on a quantum state $\rho_\rA$, the probability of observing the outcome $x\in\cX$ and the post-measurement state are respectively $\Tr[\cI_{\rA\to\rB}^{x}(\rho_\rA)]$ and $\cI_{\rA\to\rB}^{x}(\rho_\rA)/\Tr[\cI_{\rA\to\rB}^{x}(\rho_\rA)]$.
	
	We introduce the \emph{diamond norm distance} (see, e.g., Ref.~\cite{wilde2017quantum}) between quantum channels $\Lambda_{\rA\to\rB},\Xi_{\rA\to\rB}\in\sC(\rA\to\rB)$ as
	\begin{equation}
		\begin{split}
			&\|\Lambda_{\rA\to\rB}-\Xi_{\rA\to\rB}\|_\diamond:=\sup_{n\in\mathbb N}\sup_{\rho_{\rA n}\in\cS(\cH_\rA\otimes\mathbb C^n)}\\
			&\quad\|(\Lambda_{\rA\to\rB}\otimes\rmid_n)(\rho_{\rA n})-(\Xi_{\rA\to\rB}\otimes\rmid_n)(\rho_{\rA n})\|_1.
		\end{split}    
	\end{equation}
	It determines the optimal success probability in discriminating $(\Lambda_{\rA\to\rB},\Xi_{\rA\to\rB})$ \cite{wilde2017quantum,watrous2018theory}.
	Refs.~\cite{watrous2009semidefinite,ben-aroya2010on,watrous2013simpler} showed that computing the diamond norm reduces to a semidefinite program (SDP)~\cite{boyd2004convex,watrous2018theory,skrzypczyk2019robustness}.

	\textit{Incompatibility of quantum channels---.}
	Let us start with the definition of channel incompatibility. 
	Extensions to three or more channels are straightforward.
	\begin{definition}
		\label{def:incomp_channel}
		A pair $(\Lambda_{\rA\to\rB_1}, \Xi_{\rA\to\rB_2})$ of channels whose input spaces are the same is called \emph{compatible} if there exists a channel (a \textit{joint channel}) $\Theta_{\rA\to\rB_1\rB_2}\in \sC(\rA\to\rB_1\rB_2)$ satisfying
		\begin{align*}
			\Tr_{\rB_2}\circ~\Theta^{}_{\rA\to\rB_1\rB_2}=\Lambda_{\rA\to\rB_1},\;\Tr_{\rB_1}\circ~\Theta^{}_{\rA\to\rB_1\rB_2}=\Xi_{\rA\to\rB_2},
		\end{align*} 
		where $\Tr_{\rB_i}~(i=1,2)$ is the partial trace over the system $\rB_i$.
		Otherwise, $(\Lambda_{\rA\to\rB_1}, \Xi_{\rA\to\rB_2})$ is called \emph{incompatible}.
	\end{definition}
	Incompatibility of quantum channels physically describes that the operations cannot be realized simultaneously.  
	A typical example is the \textit{no-broadcasting theorem} (or \textit{no-cloning theorem} for pure states) \cite{barnum1996noncommuting,barnum2007generalized,Wiesner1983,Dieks1982,Wootters1982}, indicating that there is no joint channel for the pair of identity channels $(\rmid_{\rA}, \rmid_{\rA})$.
	One can also consider incompatibility of a pair of a channel and a POVM, $(\Lambda_{\rA\to\rB},\E_\rA^\cX)$.
	It is compatible if there is a CP-instrument $\{\cI_{\rA\to\rB}^{x}\}_{x\in\cX}$ satisfying $\Lambda_{\rA\to\rB}=\sum_{x\in\cX}\cI_{\rA\to\rB}^{x}$ and $E^{x}_\rA=\cI_{\rA\to\rB}^{x\dagger}(\one_\rB)~(\forall x\in\cX)$.
	A quantum channel $\Lambda_{\rA\to\rB}$ compatible with a POVM $\E^\cX_\rA$ is called an \emph{$\E^\cX_\rA$-channel} \cite{heinosaari2013qualitative}.
	As an example, when $\Lambda_{\rA\to\rB}$ is the identity channel, the only class of POVMs compatible with it is uninformative POVMs, known as \textit{no-information without disturbance principle}~\cite{busch2009no,dariano2017quantum}.
	We can also consider the measurement incompatibility of a pair $(\E_\rA^{\cX},\F_\rA^{\cY})$ of POVMs as the non-existence of a joint POVM $\G_\rA^{\cX\cY}$ that reproduces the initial POVMs via its marginals: $\sum_{y\in \cY}G_\rA^{xy}=E_\rA^{x}~(\forall x\in\cX),\;\sum_{x\in \cX}G_\rA^{xy}=F_\rA^{y}~(\forall y\in\cY).$
	These two types of incompatibility can be seen as specific examples of channel incompatibility \cite{heinosaari2017incompatibility}.
	\begin{proposition}\label{proposition:channel-measurement}
		With a measurement channel $\Gamma^{\E^\cX_\rA}$ for a POVM $\E^\cX_\rA$, a channel-POVM pair $(\Lambda_{\rA\to\rB},\E_\rA^\cX)$ is compatible if and only if the channel pair $(\Lambda_{\rA\to\rB},\Gamma^{\E^\cX_\rA}\!)$ is compatible. 
		Similarly, a POVM pair $(\E_\rA^\cX, \F_\rA^\cY)$ is compatible if and only if the channel pair $(\Gamma^{\E^\cX_\rA}, \Gamma^{\F_\rA^\cY})$ is compatible.
	\end{proposition}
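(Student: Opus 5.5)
Both equivalences will be proved by building the required joint object explicitly in each direction. The key tool is the correspondence between CP-instruments with outcome set $\cX$ and channels whose output contains a classical register $\rX$ that is diagonal in the basis $\{\ket{x}\}$.

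\emph{Channel--POVM pair, ``only if''.} Suppose $\{\cI^x_{\rA\to\rB}\}_{x\in\cX}$ is an instrument with $\sum_x\cI^x=\Lambda_{\rA\to\rB}$ and $\cI^{x\dagger}(\one_\rB)=E^x_\rA$. Define
\begin{equation}
\Theta_{\rA\to\rB\rX}(\rho_\rA):=\sum_{x\in\cX}\cI^x_{\rA\to\rB}(\rho_\rA)\otimes\ketbra{x}{x}_\rX .
\end{equation}
This map is CP, since it is a sum of CP maps tensored with fixed positive operators, and it is TP because $\sum_x\cI^x$ is TP. The marginals are immediate:
\begin{equation}
\Tr_\rX\circ\Theta=\sum_x\cI^x=\Lambda, \qquad \Tr_\rB\circ\Theta(\rho)=\sum_x\Tr[\cI^x(\rho)]\ketbra{x}{x}=\sum_x\Tr[E^x\rho]\ketbra{x}{x}=\Gamma^{\E}(\rho).
\end{equation}

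\emph{Channel--POVM pair, ``if''.} Given a joint channel $\Theta_{\rA\to\rB\rX}$, set
\begin{equation}
\cI^x(\rho):=\Tr_\rX\bigl[(\one_\rB\otimes\ketbra{x}{x}_\rX)\,\Theta(\rho)\bigr].
\end{equation}
Each $\cI^x$ is CP, as a composition of $\Theta$, conjugation by a projector, and a partial trace. Summing over $x$ gives $\sum_x\cI^x=\Tr_\rX\circ\Theta=\Lambda$, which is TP. Moreover,
\begin{equation}
\Tr[\cI^x(\rho)]=\bra{x}\Tr_\rB[\Theta(\rho)]\ket{x}=\bra{x}\Gamma^{\E}(\rho)\ket{x}=\Tr[E^x\rho],
\end{equation}
so $\cI^{x\dagger}(\one_\rB)=E^x$. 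The only subtlety here is that $\Theta$ may output off-diagonal coherences in $\rX$. The construction never uses them, because projecting onto $\ketbra{x}{x}$ discards them. This is the step I would check most carefully, but it poses no real obstacle.

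\emph{POVM pair.} This follows by applying the first part with $\Lambda=\Gamma^{\F}$, which reduces the question to whether $(\Gamma^{\F},\E)$ is compatible as a channel--POVM pair, i.e., whether an instrument $\{\cI^x_{\rA\to\rY}\}$ exists with $\sum_x\cI^x=\Gamma^{\F}$ and $\cI^{x\dagger}(\one)=E^x$. It remains to show this is equivalent to the existence of a joint POVM.
\begin{itemize}
\item Given a joint POVM $\G^{\cX\cY}$, take $\cI^x(\rho)=\sum_y\Tr[G^{xy}\rho]\ketbra{y}{y}$. This is CP, its sum over $x$ is $\Gamma^{\F}$, and $\cI^{x\dagger}(\one)=\sum_y G^{xy}=E^x$.
\item Conversely, given such an instrument, set $G^{xy}:=\cI^{x\dagger}(\ketbra{y}{y})\succeq\zero$. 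Then $\sum_y G^{xy}=E^x$, and
\begin{equation}
\sum_x G^{xy}=\Gamma^{\F\dagger}(\ketbra{y}{y})=F^y .
\end{equation}
\end{itemize}
Alternatively, one can directly construct $\Theta(\rho)=\sum_{x,y}\Tr[G^{xy}\rho]\ketbra{x}{x}\otimes\ketbra{y}{y}$ in one direction and, in the other, read off $G^{xy}=\Theta^\dagger(\ketbra{x}{x}\otimes\ketbra{y}{y})$. All these checks are routine positivity and marginal computations.
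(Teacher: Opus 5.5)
Your proof is correct, and it is more direct than anything in the paper. The paper does not prove this proposition itself; it quotes it from Heinosaari and Miyadera. The closest argument it contains is in the Supplemental Material:
\begin{itemize}
\item a Choi-operator characterization of channel--POVM compatibility, namely positive $C^x_{\rB\rR}$ with $\sum_x C^x_{\rB\rR}=C_{\rB\rR}(\Lambda)$ and $\Tr_\rB C^x_{\rB\rR}=E^{x\top}_\rA$;
\item the observation that inserting $C_{\rX\rR}(\Gamma^{\E})=\sum_x\ketbra{x}{x}\otimes E^{x\top}_\rA$ into the channel--channel dual SDP reproduces the channel--POVM dual SDP.
\end{itemize}
By strong duality this gives $R(\Lambda,\E)=R(\Lambda,\Gamma^{\E})$. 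The first equivalence then follows because RoI vanishes exactly on compatible pairs. For POVM pairs the paper only cites $R(\E,\F)=R(\Gamma^{\E},\Gamma^{\F})$.

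Your argument is the duality-free, Schr\"odinger-picture form of the same fact: a classical register lets every joint object be cut into diagonal blocks. Your pinching step $\cI^x(\rho)=\Tr_\rX[(\one_\rB\otimes\ketbra{x}{x}_\rX)\Theta(\rho)]$, with the off-diagonal coherences discarded, is exactly what the SDP identification also needs. The paper leaves this implicit: the dual variable $Y_{\rX\rR}$ must be replaced by its blocks $\bra{x}Y_{\rX\rR}\ket{x}$.

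The two routes buy different things. The paper's route delivers the quantitative equality of robustnesses, which the proofs of Theorems~\ref{theorem:mur} and~\ref{theorem:i-d-e} use alongside the proposition. Yours is elementary and needs no Slater condition.

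Your constructions also upgrade to the robustness equality with one more remark. First, $\Gamma$ is affine: $\Gamma^{(\E+r\N)/(1+r)}=(\Gamma^{\E}+r\Gamma^{\N})/(1+r)$, which gives $R(\Lambda,\Gamma^{\E})\le R(\Lambda,\E)$. Conversely, replace a joint channel $\Theta$ by $(\rmid_\rB\otimes\cD_\rX)\circ\Theta$, with $\cD_\rX$ complete dephasing. This keeps the pair compatible, leaves $\Gamma^{\E}$ unchanged, and turns an arbitrary noise channel $\cN_{\rA\to\rX}$ into the measurement channel of the POVM $\{\cN^{\dagger}_{\rA\to\rX}(\ketbra{x}{x})\}_x$. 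That gives the reverse inequality; for POVM pairs, dephase both registers.
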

	
	Next, we consider quantifying incompatibility.
	The \textit{generalized robustness of incompatibility} (\textit{RoI}) is one of the quantifiers particularly analyzed for measurement incompatibility \cite{Heinosaari2015,Uola2015,Haapasalo2015,Designolle2019} and used in convex resource theories \cite{Skrzypczyk2019,uola2019quantifying,Ducuara2025}.
	It can be extended to channels \cite{Uola2020,mori2020operational} and plays a crucial role in this paper.
	\begin{definition}
		\label{def_gen-rob}
		For a given pair $(\Lambda_{\rA\to\rB_1},\Xi_{\rA\to\rB_2})$ of channels, we define the RoI of the channel pair as 
		\begin{equation}
			\begin{split}
				&R(\Lambda_{\rA\to\rB_1},\Xi_{\rA\to\rB_2})\\
				&:=\min\Big\{
				r\ge 0~\big|~\cN_{\rA\to\rB_i}^{(i)}\in\sC(\rA\to\rB_i)~(i=1,2)\;\mathrm{s.t.}\\
				&
				\left(\!
				\frac{\Lambda_{\rA\to\rB_1}\!+\!r\cN_{\rA\to\rB_1}^{(1)}}{1+r},\frac{\Xi_{\rA\to\rB_2}\!+\!r\cN_{\rA\to\rB_2}^{(2)}}{1+r}
				\!\right)\mbox{:~compatible}
				\Big\}.
			\end{split}
		\end{equation}
		Similarly, the RoI of a channel and POVM pair $(\Lambda_{\rA\to\rB},\E^\cX_\rA)$ is as follows:
		\begin{equation}
			\begin{split}
				&R(\Lambda_{\rA\to\rB},\E^\cX_\rA)\\
				&:=\min\Big\{
				r\ge 0~\big|~\cN_{\rA\to\rB}\in\sC(\rA\to\rB),\;\N^\cX_\rA\in\sM(\rA)\\
				&\mathrm{s.t.}\left(\frac{\Lambda_{\rA\to\rB}+r\cN_{\rA\to\rB}}{1+r},\frac{\E_{\rA}^\cX+r\N_{\rA}^{\cX}}{1+r}\right)\mbox{: compatible}\Big\}.
			\end{split}
		\end{equation} 
		The RoI for a POVM pair $(\E^\cX_\rA,\F^\cY_\rA)$ is as follows:
		\begin{equation}
			\begin{split}
				&R(\E^\cX_\rA,\F^\cY_\rA):=\min\Big\{
				r\ge 0~\big|~\M_{\rA}^{\cX},\N^\cY_\rA\in\sM(\rA)\\
				&\mathrm{s.t.}\left(\frac{\E_{\rA}^\cX+r\M_{\rA}^{\cX}}{1+r},\frac{\F_{\rA}^\cY+r\N_{\rA}^{\cY}}{1+r}\right)\mbox{: compatible}\Big\}.
			\end{split}
		\end{equation}
	\end{definition}
	
	RoI captures the resilience of incompatibility against noise and thus can be interpreted as quantifying the strength of incompatibility.
	In addition, it is non-increasing under postprocessing and is faithful in the sense that $R(\Lambda_{\rA\to\rB_1},\Xi_{\rA\to\rB_2})=0$ if and only if $(\Lambda_{\rA\to\rB_1},\Xi_{\rA\to\rB_2})$ is compatible.
	In other words, it is a resource monotone in the resource theory of incompatibility, where the free set is the set of compatible pairs and free operations are postprocessing of channels \cite{heinosaari2017incompatibility,Guerini2017}.
	As a quantitative extension of Proposition \ref{proposition:channel-measurement}, the RoI involving POVMs reduces to that of a channel-channel pair.
	With the measurement channel $\Gamma^{\E^\cX_\rA}$ of a POVM $\E^\cX_\rA$, it holds that $R(\Lambda_{\rA\to\rB},\E^\cX_\rA)=R(\Lambda_{\rA\to\rB},\Gamma^{\E^\cX_\rA})$ and similarly $R(\E^\cX_\rA,\F^\cY_\rA)=R(\Gamma^{\E^\cX_\rA},\Gamma^{\F^\cY_\rA})$~\cite{mori2020operational}.
	All these forms of RoI can be computed via SDP (see~\cite{supplement}).
	
	\textit{Joint-realization tradeoffs from channel incompatibility}---.
	We first establish a universal tradeoff for accuracy in approximate joint realizations of incompatible channels (see Fig.~\ref{fig:joint_realization} and End Matter for the proof).
    \begin{figure}
        \centering
        \includegraphics[width=0.8\linewidth]{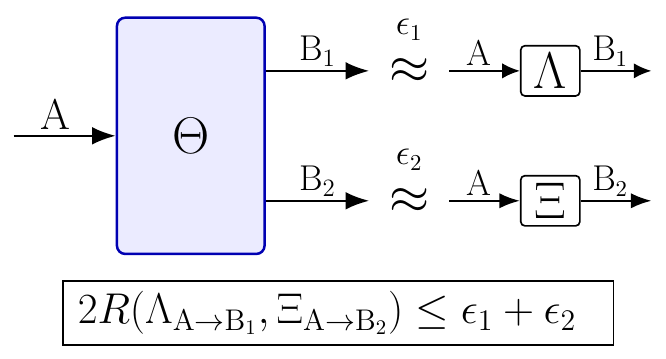}
        \caption{Approximate joint realization tradeoff of two quantum channels. A joint realization channel $\Theta_{\rA\to\rB_1\rB_2}$ induces two marginal channels $\Theta^{(1)}_{\mathrm A\to\rB_1}$ and $\Theta^{(2)}_{\mathrm A\to\rB_2}$, which approximate the target channels $\Lambda_{\mathrm A\to\mathrm B_1}$ and $\Xi_{\mathrm A\to\mathrm B_2}$ with errors $\epsilon_1:=\|\Lambda_{\rA\to\rB_1}-\Theta_{\rA\to\rB_1}^{(1)}\|_\diamond$ and $\epsilon_2:=\|\Xi_{\rA\to\rB_2}-\Theta_{\rA\to\rB_2}^{(2)}\|_\diamond$ respectively. Theorem~\ref{thm:UR_GR} shows $2R(\Lambda_{\mathrm A\to\mathrm B_1},\Xi_{\mathrm A\to\mathrm B_2})\le \epsilon_1+\epsilon_2$.}
        \label{fig:joint_realization}
    \end{figure}
	\begin{theorem}
		\label{thm:UR_GR}
		Let $(\Lambda_{\rA\to\rB_1}, \Xi_{\rA\to\rB_2})$ be a pair of quantum channels. 
		For any channel $\Theta_{\rA\to\rB_1\B_2}\in\sC(\rA\to\rB_1\rB_2)$ with marginals $\Theta^{(1)}_{\rA\to\rB_1}=\Tr_{\rB_2}\circ~\Theta_{\rA\to\rB_1\B_2}$,  $\Theta^{(2)}_{\rA\to\rB_2}=\Tr_{\rB_1}\circ~\Theta_{\rA\to\rB_1\B_2}$ to the subsystems, it holds that
		\begin{align}
			&2R(\Lambda_{\rA\to\rB_1},\Xi_{\rA\to\rB_2})\notag\\
			&\quad\le\|\Lambda_{\rA\to\rB_1}-\Theta_{\rA\to\rB_1}^{(1)}\|_\diamond+\|\Xi_{\rA\to\rB_2}-\Theta_{\rA\to\rB_2}^{(2)}\|_\diamond.\label{eq:RoI-diamond}
		\end{align}
	\end{theorem}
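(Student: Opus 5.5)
The plan is to turn the joint channel $\Theta_{\rA\to\rB_1\rB_2}$ directly into a feasible point of the minimization in Definition~\ref{def_gen-rob}, with noise parameter $r=(\epsilon_1+\epsilon_2)/2$, where $\epsilon_1:=\|\Lambda_{\rA\to\rB_1}-\Theta^{(1)}_{\rA\to\rB_1}\|_\diamond$ and $\epsilon_2:=\|\Xi_{\rA\to\rB_2}-\Theta^{(2)}_{\rA\to\rB_2}\|_\diamond$. Since $R$ is a minimum, this gives $R\le(\epsilon_1+\epsilon_2)/2$, which is \eqref{eq:RoI-diamond}.

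\emph{Step 1: CP decomposition of a diamond-norm difference.} I will use Watrous' SDP for the diamond norm in the following form. For channels $\Lambda,\Theta^{(1)}$ with Choi operators $J(\cdot)$,
\[
\tfrac12\|\Lambda-\Theta^{(1)}\|_\diamond=\min\bigl\{\|\Tr_{\rB_1}Z\|:\ Z\succeq 0,\ Z\succeq J(\Lambda)-J(\Theta^{(1)})\bigr\}.
\]
Take an optimal $Z$, so that $\Tr_{\rB_1}Z\preceq(\epsilon_1/2)\one_\rA$. Then pad it to
\[
Z' := Z + \bigl((\epsilon_1/2)\one_\rA-\Tr_{\rB_1}Z\bigr)\otimes\sigma_{\rB_1}
\]
for some state $\sigma_{\rB_1}$. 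This $Z'$ is still positive and still dominates $J(\Lambda)-J(\Theta^{(1)})$, and now $\Tr_{\rB_1}Z'=(\epsilon_1/2)\one_\rA$.

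Define $P_1$ as the map whose Choi operator is $Z'-\bigl(J(\Lambda)-J(\Theta^{(1)})\bigr)\succeq0$. Then $P_1$ is CP, and because $\Lambda$ and $\Theta^{(1)}$ are both TP, its Choi operator has partial trace $(\epsilon_1/2)\one_\rA$. Hence $P_1=(\epsilon_1/2)\cM_1$ for a channel $\cM_1$, and likewise $Z'=(\epsilon_1/2)J(\cN_1)$ for a channel $\cN_1$. Reading off the definitions gives
\[
\Lambda+(\epsilon_1/2)\cN_1=\Theta^{(1)}+(\epsilon_1/2)\cM_1.
\]
The same construction for the second pair gives channels $\cN_2,\cM_2$ with
\[
\Xi+(\epsilon_2/2)\cN_2=\Theta^{(2)}+(\epsilon_2/2)\cM_2.
\]
This step is the main obstacle. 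It needs the correct form of the SDP, namely the primal with the constraint on $\|\Tr Z\|$ and the padding trick to make $P_1$ exactly trace-preserving up to scale. If $\epsilon_1=0$, the step is trivial.

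\emph{Step 2: balancing the noise.} Fix states $\omega_1\in\cS(\cH_{\rB_1})$ and $\omega_2\in\cS(\cH_{\rB_2})$, and let $\cR_{\omega_i}$ denote the replacement channel that outputs $\omega_i$. With $r=(\epsilon_1+\epsilon_2)/2$, set
\[
\cN^{(1)}:=\frac{\epsilon_1\cN_1+\epsilon_2\cR_{\omega_1}}{\epsilon_1+\epsilon_2},\qquad \cN^{(2)}:=\frac{\epsilon_2\cN_2+\epsilon_1\cR_{\omega_2}}{\epsilon_1+\epsilon_2}.
\]
Both are convex combinations of channels, hence channels. Then
\[
\Lambda+r\cN^{(1)}=\Theta^{(1)}+\tfrac{\epsilon_1}{2}\cM_1+\tfrac{\epsilon_2}{2}\cR_{\omega_1},\qquad \Xi+r\cN^{(2)}=\Theta^{(2)}+\tfrac{\epsilon_2}{2}\cM_2+\tfrac{\epsilon_1}{2}\cR_{\omega_2}.
\]

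\emph{Step 3: explicit joint channel.} Define
\[
\widetilde\Theta(\rho):=\frac{1}{1+r}\Bigl[\Theta(\rho)+\tfrac{\epsilon_1}{2}\,\cM_1(\rho)\otimes\omega_2+\tfrac{\epsilon_2}{2}\,\omega_1\otimes\cM_2(\rho)\Bigr].
\]
It is CP, and it is TP because the weights sum to $1+r$. Taking partial traces, its marginals are exactly $(\Lambda+r\cN^{(1)})/(1+r)$ and $(\Xi+r\cN^{(2)})/(1+r)$. So this noisy pair is compatible, $r$ is feasible in Definition~\ref{def_gen-rob}, and $2R\le\epsilon_1+\epsilon_2$. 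If $\epsilon_1+\epsilon_2=0$, then $\Theta$ itself is a joint channel and $R=0$.
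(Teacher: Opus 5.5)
Your proof is correct, and it takes a genuinely different route from the paper. The paper works entirely on the dual side. It uses the dual SDP for the RoI, whose strong duality it establishes via Slater's condition in the Supplemental Material, together with the maximization form of the diamond-norm SDP. It compares the two RoI objectives over their common dual feasible set, decouples the $X$ and $Y$ parts, and uses trace preservation to shift $X\mapsto 2X-\one\otimes V$. This gives the Lipschitz estimate of Proposition~\ref{prop:Lip}, and the theorem follows by taking the second pair to be the compatible marginals of $\Theta$.

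You work on the primal side instead. The minimization form of Watrous' SDP gives a decomposition of $\Lambda-\Theta^{(1)}$ into $\tfrac{\epsilon_1}{2}$ times a difference of two channels. You then write down explicit noise channels and an explicit joint channel for the noisy pair, and the replacement-channel padding handles the fact that Definition~\ref{def_gen-rob} uses a single $r$ for both channels. Your argument needs only the standard duality for the diamond norm, never the RoI dual or its strong duality, and it produces a concrete feasibility certificate. The paper's argument delivers the more general continuity statement directly, but your construction recovers it as well. Start from an optimal noisy decomposition of $(\Phi,\Psi)$ with its joint channel $\Theta'$ in place of $\Theta$, and run the same balancing with $r=R(\Phi,\Psi)+(\epsilon_1+\epsilon_2)/2$; this gives $R(\Lambda,\Xi)\le R(\Phi,\Psi)+(\epsilon_1+\epsilon_2)/2$.

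Two cosmetic slips. First, with $Z'=\tfrac{\epsilon_1}{2}J(\cN_1)$ and $Z'-\bigl(J(\Lambda)-J(\Theta^{(1)})\bigr)=\tfrac{\epsilon_1}{2}J(\mathcal{M}_1)$, reading off actually gives $\Lambda+\tfrac{\epsilon_1}{2}\mathcal{M}_1=\Theta^{(1)}+\tfrac{\epsilon_1}{2}\cN_1$. The two channel names are interchanged in your display, which is harmless since both are channels. Second, $\Tr_{\rB_1}Z$ lives on the reference system $\rR$, so the padding term should read $\sigma_{\rB_1}\otimes\bigl(\tfrac{\epsilon_1}{2}\one_\rR-\Tr_{\rB_1}Z\bigr)$.
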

	Theorem \ref{thm:UR_GR} demonstrates that RoI imposes a lower bound on approximation error in any joint realization. 
	It exhibits a quantitatively consistent picture; stronger incompatibility implies more difficult joint realization.
	
	Applying this tradeoff to measurement channels, we can systematically derive a \emph{measurement uncertainty relation (MUR)}.
	When an incompatible POVM pair $(\E_\rA^\cX, \F_\rA^\cY)$ is approximated by a POVM $\G^{\cX\cY}_\rA$ with marginals $\G_\rA^{(1)\cX}$ and $\G_\rA^{(2)\cY}$, typical MURs are of the form $C\le d(\E_\rA^\cX, \G_\rA^{(1)\cX})\cdot d(\F_\rA^\cY, \G_\rA^{(2)\cY})$ \cite{Arthurs1965,Arthurs1988,Busch1984,ozawa2003universally,Werner2004,Busch2007,Busch2007a,watanabe2011uncertainty} or $C\le d(\E_\rA^\cX, \G_\rA^{(1)\cX})+d(\F_\rA^\cY, \G_\rA^{(2)\cY})$ \cite{Busch1986,Busch2008,Miyadera2008,Buscemi2014,Takakura2020}, with a certain ``distance'' $d(\cdot, \cdot)$ (called an error) between POVMs.
	The positive constant $C$ depends solely on the pair $(\E_\rA^\cX, \F_\rA^\cY)$, physically quantifying their joint unrealizability.
	Theorem \ref{thm:UR_GR} naturally yields an MUR where this constant is given by the RoI, recasting joint unmeasurability as a direct consequence of channel incompatibility.
	\begin{theorem}\label{theorem:mur}
		Let $\E^\cX_\rA,\F^\cY_\rA\in\sM(\rA)$ be POVMs.
		For any approximate joint POVM $\G^{\cX\cY}_\rA\in\sM(\rA)$ and its marginals $\G_\rA^{(1)\cX}$ and $\G_\rA^{(2)\cY}$, the following inequality holds:
		\begin{equation}
			\begin{split}
				2R\left(\E^\cX_\rA,\F^\cY_\rA\right)\le \epsilon\left(\E^\cX_\rA,\G^{(1)\cX}_\rA\right)+\epsilon\left(\F^\cY_\rA,\G^{(2)\cY}_\rA\right).
			\end{split}
		\end{equation}
		Here, we define an error between POVMs $\E^\cX_\rA$ and $\G^\cX_\rA$ by the $\ell_1$-type distance:
        \begin{equation}
            \epsilon(\E^\cX_\rA,\G^\cX_\rA):=\sum_{x\in\cX}\|E^x_\rA-G^x_\rA\|.
        \end{equation}
	\end{theorem}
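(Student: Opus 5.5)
Theorem~\ref{theorem:mur} follows by applying Theorem~\ref{thm:UR_GR} to the measurement channels and then bounding the diamond norm between measurement channels by the $\ell_1$-type error. First, take the approximate joint POVM $\G^{\cX\cY}_\rA$ and its measurement channel $\Theta_{\rA\to\rX\rY}:=\Gamma^{\G^{\cX\cY}_\rA}$, which outputs $\sum_{x,y}\Tr[G^{xy}_\rA\rho_\rA]\ketbra{x}{x}_\rX\otimes\ketbra{y}{y}_\rY$. Its partial traces are $\Tr_\rY\circ\Theta=\Gamma^{\G^{(1)\cX}_\rA}$ and $\Tr_\rX\circ\Theta=\Gamma^{\G^{(2)\cY}_\rA}$, because the marginal POVMs are $\sum_y G^{xy}_\rA$ and $\sum_x G^{xy}_\rA$. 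Theorem~\ref{thm:UR_GR}, applied to the pair $(\Gamma^{\E^\cX_\rA},\Gamma^{\F^\cY_\rA})$, then gives
\begin{equation}
2R(\Gamma^{\E^\cX_\rA},\Gamma^{\F^\cY_\rA})\le\|\Gamma^{\E^\cX_\rA}-\Gamma^{\G^{(1)\cX}_\rA}\|_\diamond+\|\Gamma^{\F^\cY_\rA}-\Gamma^{\G^{(2)\cY}_\rA}\|_\diamond .
\end{equation}
The left-hand side equals $2R(\E^\cX_\rA,\F^\cY_\rA)$ by the identity $R(\E^\cX_\rA,\F^\cY_\rA)=R(\Gamma^{\E^\cX_\rA},\Gamma^{\F^\cY_\rA})$ quoted after Definition~\ref{def_gen-rob}.

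It remains to show $\|\Gamma^{\E^\cX_\rA}-\Gamma^{\G^\cX_\rA}\|_\diamond\le\sum_x\|E^x_\rA-G^x_\rA\|=\epsilon(\E^\cX_\rA,\G^\cX_\rA)$ for any two POVMs on the same outcome set. Fix $n$ and a state $\rho_{\rA n}$. The output difference is block diagonal in the classical register:
\begin{equation}
\big((\Gamma^{\E}-\Gamma^{\G})\otimes\rmid_n\big)(\rho_{\rA n})=\sum_{x}\ketbra{x}{x}_\rX\otimes\Tr_\rA\!\big[((E^x_\rA-G^x_\rA)\otimes\one_n)\rho_{\rA n}\big].
\end{equation}
Its trace norm is therefore $\sum_x\|\Tr_\rA[(D^x_\rA\otimes\one_n)\rho_{\rA n}]\|_1$, where $D^x_\rA:=E^x_\rA-G^x_\rA$ is Hermitian.

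The main step is bounding each term by $\|D^x_\rA\|$. By trace-norm duality, $\|\Tr_\rA[(D^x\otimes\one)\rho]\|_1=\sup_{\|U\|\le1}|\Tr[(D^x\otimes U)\rho]|$. Since $\|D^x\otimes U\|\le\|D^x\|$ and $\rho$ is a state, this supremum is at most $\|D^x\|$. Summing over $x$ and taking the supremum over $n$ and $\rho_{\rA n}$ gives the claimed bound.

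Combining the two steps yields the theorem. I expect no real obstacle; the only care needed is checking that the marginals of the joint measurement channel are exactly the measurement channels of the marginal POVMs, so that Theorem~\ref{thm:UR_GR} applies verbatim.
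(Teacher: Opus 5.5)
Your proposal is correct and follows essentially the same route as the paper. You apply Theorem~\ref{thm:UR_GR} to the measurement channels, using $R(\E^\cX_\rA,\F^\cY_\rA)=R(\Gamma^{\E^\cX_\rA},\Gamma^{\F^\cY_\rA})$, and then bound each diamond norm by $\sum_x\|E^x_\rA-G^x_\rA\|$ via the block-diagonal output, trace-norm duality and H\"older's inequality; the only cosmetic difference is that you construct the joint channel $\Gamma^{\G^{\cX\cY}_\rA}$ explicitly and check its marginals, whereas the paper simply invokes Proposition~\ref{proposition:channel-measurement}.
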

	The proof is given in \cite{supplement}, where the error is derived from the diamond norm of the measurement channels.
	The error satisfies Ozawa’s soundness and completeness requirements \cite{ozawa2019soundness}: $\epsilon(\E^\cX_\rA,\G^\cX_\rA)=0$ if and only if $\E^\cX_\rA=\G^\cX_\rA$.	
	
	While the operational meaning of RoI has been given by advantages in certain discrimination tasks \cite{Skrzypczyk2019,uola2019quantifying,mori2020operational,Uola2020,Ducuara2025}, its connection to ``joint unrealizability" has remained obscure. 
	Moreover, RoI coincides with the advantage only when discriminating a specifically tailored ensemble; for general ensembles, it merely serves as an upper bound.
	In contrast, our inequality endows RoI with an operational meaning directly rooted in joint unrealizability. 
	Theorem~\ref{thm:UR_GR} and Theorem~\ref{theorem:mur} establish RoI not just as an abstract resource measure but as a fundamental bound governing uncertainty in joint realizations of incompatible devices, distinguishing our framework from distance-based resource theories of incompatibility \cite{Mitra2025,Ghai2025}.
	
	\textit{Information-error-disturbance tradeoff}---. 
	Applying Theorem \ref{thm:UR_GR}, we can derive an \emph{information-error-disturbance tradeoff} for implementing a target POVM $\E^\cX_\rA$ via a measurement process.
    To see this, we first introduce a quantifier of measurement informativeness \cite{skrzypczyk2019robustness}.
    \begin{definition}
        For a POVM $\E^\cX_\rA\in\sM(\rA)$, the robustness of measurement of $\E^\cX_\rA$  is defined as follows:
		\begin{align*}
			R(\E^\cX_\rA)&:=\mathrm{min}\left\{r\ge0\mid\N^\cX_\rA\in\sM(\rA),q_x\ge 0,\sum_{x\in\cX}q_x=1\right.\notag\\
			&\left.\mathrm{s.t.}\;\frac{E^x_\rA+r N^x_\rA}{1+r}=q_x\one_\rA\;(\forall x\in\cX)\right\}.
		\end{align*}
    \end{definition}
    
    An uninformative measurement outputs the same distribution for any state, extracting no information from the system.
    Thus the quantity $R(\mathsf{E}_A^\mathcal{X})$ captures the informativeness of $\mathsf{E}_A^\mathcal{X}$, the ability of the POVM to extract information.
    This measure of informativeness directly dictates the degree of incompatibility between a measurement and the identity channel (see \cite{supplement} for the proof).
    \begin{proposition}\label{proposition:RoM_vs_RoI}
    	Let $\E^\cX_\rA\in\sM(\rA)$ be a POVM and $d:=\dim\cH_\rA$.
    	It holds that
    	\begin{equation}\label{eq:RvsRoI}
    		\frac{\left(\sqrt{R(\E^\cX_\rA)+1}-1\right)^2}{d-1}\le R(\rmid_\rA,\E^\cX_\rA).
    	\end{equation}
    \end{proposition}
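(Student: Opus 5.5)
The plan is to reduce the statement to an eigenvalue bound on the POVM elements and then use Kraus operators of the instrument that witnesses $R(\rmid_\rA,\E^\cX_\rA)$. First, I would use the known closed form of the robustness of measurement (Ref.~\cite{skrzypczyk2019robustness}): $R(\E^\cX_\rA)+1=\sum_{x\in\cX}\|E^x_\rA\|$. This follows from the definition by choosing $q_x\propto\|E^x_\rA\|$ and $N^x_\rA\propto q_x(1+r)\one_\rA-E^x_\rA$. Setting $r:=R(\rmid_\rA,\E^\cX_\rA)$, the claimed inequality \eqref{eq:RvsRoI} is then equivalent to
\begin{equation}
\sum_{x\in\cX}\|E^x_\rA\|\le\Bigl(1+\sqrt{(d-1)r}\Bigr)^2 .
\end{equation}

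Second, I would unpack the optimal witness for $r$. There are a channel $\cN_\rA$, a POVM $\N^\cX_\rA$, and a CP-instrument $\{\cI^x_\rA\}_{x\in\cX}$ such that $\sum_x\cI^x_\rA=(\rmid_\rA+r\cN_\rA)/(1+r)=:\Lambda_\rA$ and $\cI^{x\dagger}_\rA(\one_\rA)=(E^x_\rA+rN^x_\rA)/(1+r)=:E'^x_\rA$. Then $E^x_\rA\preceq(1+r)E'^x_\rA$, so $\sum_x\|E^x_\rA\|\le(1+r)\sum_x\|E'^x_\rA\|$. I would pick Kraus operators $\{K_{x,k}\}_k$ for each $\cI^x_\rA$; together they form a Kraus family of $\Lambda_\rA$. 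Because $\Lambda_\rA-\rmid_\rA/(1+r)$ is CP, the entanglement fidelity satisfies
\begin{equation}
F:=\sum_{x,k}|c_{x,k}|^2\ge\frac{1}{1+r},\qquad c_{x,k}:=\frac{\Tr K_{x,k}}{d}.
\end{equation}
Write $K_{x,k}=c_{x,k}\one_\rA+\tilde K_{x,k}$ with $\tilde K_{x,k}$ traceless. Trace preservation gives $\sum_{x,k}\|K_{x,k}\|_2^2=d$, and hence $\sum_{x,k}\|\tilde K_{x,k}\|_2^2=d(1-F)$.

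Third, for each $x$ I would take a unit vector $\psi_x$ attaining $\|E'^x_\rA\|$. This gives $\|E'^x_\rA\|=\sum_k\|K_{x,k}\psi_x\|^2\le\sum_k\bigl(|c_{x,k}|+\|\tilde K_{x,k}\psi_x\|\bigr)^2$. Minkowski's inequality over all pairs $(x,k)$ then yields
\begin{equation}
\sum_x\|E'^x_\rA\|\le\Bigl(\sqrt{F}+\sqrt{\textstyle\sum_{x,k}\|\tilde K_{x,k}\psi_x\|^2}\Bigr)^2 .
\end{equation}
The crude bound $\|\tilde K\psi\|^2\le\|\tilde K\|_2^2$ turns the right-hand side into $(\sqrt F+\sqrt{d(1-F)})^2$. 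At $F=1/(1+r)$ (the function decreases in $F$ for $F\ge1/(d+1)$), multiplying by $1+r$ gives $(1+\sqrt{dr})^2$. This is the right shape but has $d$ in place of $d-1$.

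The main obstacle is sharpening $d$ to $d-1$. I would do this by exploiting tracelessness. Complete $\psi_x$ to an orthonormal basis $\{\psi_x=e_0,e_1,\dots,e_{d-1}\}$. Tracelessness gives $\langle e_0|\tilde K|e_0\rangle=-\sum_{j\ge1}\langle e_j|\tilde K|e_j\rangle$, so by Cauchy--Schwarz $|\langle e_0|\tilde K|e_0\rangle|^2\le(d-1)\sum_{j\ge1}|\langle e_j|\tilde K|e_j\rangle|^2$. Combined with $\|\tilde K\|_2^2\ge\|\tilde K e_0\|^2+\sum_{j\ge1}|\langle e_j|\tilde K|e_j\rangle|^2$, this bounds the part of $\tilde K e_0$ along $e_0$. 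Rather than using only $\|\tilde K e_0\|$, I would keep the cross term $2\,\mathrm{Re}\bigl(\bar c\,\langle e_0|\tilde K|e_0\rangle\bigr)$ in $\|K e_0\|^2=|c+\langle e_0|\tilde K|e_0\rangle|^2+\|(\one-|e_0\rangle\langle e_0|)\tilde K e_0\|^2$. I would then optimize this expression subject to the Hilbert--Schmidt budget $d(1-F)$. The expected outcome is a refined estimate of the form $\bigl(\sqrt F+\sqrt{(d-1)(1-F)/d}\cdot\sqrt{d/(d-1)}\cdots\bigr)^2$. After substituting $F\ge1/(1+r)$ and multiplying by $1+r$, this should give exactly $(1+\sqrt{(d-1)r})^2$. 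The bookkeeping of this optimization, including checking that equality-type configurations are consistent with the constraint $\sum_x E'^x_\rA=\one_\rA$, is where I expect the real work to lie. If the single-vector argument falls short, my fallback is to exploit that constraint globally, for instance via $\sum_x\Tr E'^x_\rA=d$.
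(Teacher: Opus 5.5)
Your route differs from the paper's, and it is sound. As written, though, the decisive $d\to d-1$ sharpening is only announced (``should give exactly''), so it is not yet a proof. It does close with exactly the two observations you list.

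For each $(x,k)$ set $t_{x,k}:=\langle\psi_x|\tilde K_{x,k}\psi_x\rangle$ and $u_{x,k}:=\|(\one_\rA-\ketbra{\psi_x}{\psi_x})\tilde K_{x,k}\psi_x\|$. Tracelessness plus Cauchy--Schwarz, together with your lower bound on $\|\tilde K\|_2^2$, give the per-operator estimate
\[
\|\tilde K_{x,k}\|_2^2\ge u_{x,k}^2+\tfrac{d}{d-1}|t_{x,k}|^2 .
\]
Meanwhile $\|K_{x,k}\psi_x\|^2=|c_{x,k}+t_{x,k}|^2+u_{x,k}^2$. Write $T^2:=\sum_{x,k}|t_{x,k}|^2$. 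Minkowski on the first term and the budget $\sum_{x,k}\|\tilde K_{x,k}\|_2^2=d(1-F)$ on the second give
\begin{equation}
\sum_{x\in\cX}\|E'^x_\rA\|\le\bigl(\sqrt F+T\bigr)^2+d(1-F)-\frac{d}{d-1}T^2=:g(T),\qquad 0\le T\le\sqrt{(d-1)(1-F)}.
\end{equation}
Since $g'(T)=2\sqrt F-2T/(d-1)$, $g$ is nondecreasing on the whole interval as soon as $F\ge 1/d$. Then $g(T)\le\bigl(\sqrt F+\sqrt{(d-1)(1-F)}\bigr)^2$, which is itself decreasing on $[1/d,1]$.

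This forces a case split on $r$:
\begin{itemize}
\item If $r\le d-1$, then $F\ge 1/(1+r)\ge 1/d$. Multiplying by $1+r$ gives exactly $\sum_x\|E^x_\rA\|\le(1+\sqrt{(d-1)r})^2$.
\item If $r>d-1$, the claim is trivial. Indeed $R(\E^\cX_\rA)+1\le\sum_x\|E^x_\rA\|\le d$, so $(\sqrt{R(\E^\cX_\rA)+1}-1)^2\le(\sqrt d-1)^2<(d-1)r$.
\end{itemize}
The same split is silently needed in your crude version, whose monotonicity threshold is $1/(d+1)$.

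Two of your worries can be dropped. You only need an upper bound, so the consistency check against $\sum_x E'^x_\rA=\one_\rA$ is unnecessary. Likewise, your explicit $N^x_\rA$ only proves $R(\E^\cX_\rA)+1\le\sum_x\|E^x_\rA\|$, but that is exactly the direction required.

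The paper works on the dual side instead. It never builds an optimal witness or Kraus operators. It plugs the explicit point $V_\rR=\one_\rR/d$, $X_{\rA\rR}=\alpha\ketbra{\phi}{\phi}_{\rA\rR}$, $Y^x_\rR=\beta\ketbra{\psi^*_x}{\psi^*_x}_\rR$ into the dual SDP of $R(\rmid_\rA,\E^\cX_\rA)$. It checks positivity by a block decomposition reducing to a $2\times2$ matrix, optimizes over $\alpha$, and concludes by weak duality. This keeps the argument inside the SDP framework used throughout the paper.

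Your primal argument buys different things:
\begin{itemize}
\item It passes through a fidelity-based information--disturbance inequality of independent interest: $\sum_x\|\cI^{x\dagger}(\one_\rA)\|\le(\sqrt F+\sqrt{(d-1)(1-F)})^2$ for any instrument whose total channel has entanglement fidelity $F\ge 1/d$.
\item It shows that the $d-1$ comes from tracelessness of the non-identity part of the Kraus operators.
\item It exposes the equality case. For the unbiased qubit POVM $\mathsf{Z}(\eta)_\rA$, the optimal instrument has Kraus operators $\mathrm{diag}(a,b)$ and $\mathrm{diag}(b,a)$, and it saturates every step.
\end{itemize}
The price is reliance on the Kraus structure of instruments, which is the kind of algebraic input the paper's resource-theoretic framing deliberately avoids, plus the extra case split on $r$.
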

    Since the identity channel corresponds to preserving the system, its incompatibility with a POVM expresses the disturbance caused by the measurement.
    This proposition thus serves as a resource-theoretic representation of the no-information without disturbance principle; it quantitatively shows that extracting more information through a measurement forces stronger disturbance.

    Combining this observation to Theorem~\ref{thm:UR_GR} leads to an information-error-disturbance tradeoff.
    \begin{theorem}\label{theorem:i-d-e}
		Let $\E^\cX_\rA, \K^\cX_\rA\in\sM(\rA)$ be POVMs and $d:=\dim\cH_\rA$.
		For any $\K^\cX_\rA$-channel $\Lambda^{\K}_{\rA\to\rB}\in\sC(\rA\to\rB)$, the following information-error-disturbance tradeoff holds:
			\begin{equation}\label{eq:error-disturbance}
				\frac{2\left(\sqrt{R(\E^\cX_\rA)+1}-1\right)^2}{d-1}\le \epsilon(\E^\cX_\rA,\K^\cX_\rA)+\delta(\Lambda^{\K}_{\rA\to\rB}).
			\end{equation}
            Here, a disturbance of a channel $\Lambda^{\K}_{\rA\to\rB}$ is defined as follows~\cite{kretschmann2008information, heinosaari2013qualitative}:
		\begin{equation}\label{eq:disturbance}
			\delta(\Lambda^{\K}_{\rA\to\rB}):=\inf_{\cR_{\rB\to\rA}\in\sC(\rB\to\rA)}\|\cR_{\rB\to\rA}\circ\Lambda^{\K}_{\rA\to\rB}-\rmid_\rA\|_\diamond.
		\end{equation}
	\end{theorem}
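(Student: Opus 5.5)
The plan is to chain Proposition~\ref{proposition:RoM_vs_RoI} with Theorem~\ref{thm:UR_GR}, applied to the channel pair $(\rmid_\rA,\Gamma^{\E^\cX_\rA})$ and a suitably constructed approximate joint channel. By Proposition~\ref{proposition:RoM_vs_RoI} and the identity $R(\rmid_\rA,\E^\cX_\rA)=R(\rmid_\rA,\Gamma^{\E^\cX_\rA})$, the left-hand side of \eqref{eq:error-disturbance} is at most $2R(\rmid_\rA,\Gamma^{\E^\cX_\rA})$. It therefore suffices to build, for every recovery channel $\cR_{\rB\to\rA}$, a joint channel $\Theta_{\rA\to\rA\rX}$ whose marginals satisfy two estimates: the first marginal is at diamond distance $\|\cR_{\rB\to\rA}\circ\Lambda^{\K}_{\rA\to\rB}-\rmid_\rA\|_\diamond$ from $\rmid_\rA$, and the second is at diamond distance $\epsilon(\E^\cX_\rA,\K^\cX_\rA)$ from $\Gamma^{\E^\cX_\rA}$. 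Taking the infimum over $\cR_{\rB\to\rA}$ then yields $\delta(\Lambda^{\K}_{\rA\to\rB})$.

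To construct the joint channel, use the fact that $\Lambda^{\K}_{\rA\to\rB}$ is a $\K^\cX_\rA$-channel. This gives a CP-instrument $\{\cI^x_{\rA\to\rB}\}_{x\in\cX}$ with $\sum_x\cI^x_{\rA\to\rB}=\Lambda^{\K}_{\rA\to\rB}$ and $\cI^{x\dagger}_{\rA\to\rB}(\one_\rB)=K^x_\rA$. Define
\begin{equation}
\Theta_{\rA\to\rA\rX}(\rho_\rA):=\sum_{x\in\cX}\cR_{\rB\to\rA}\bigl(\cI^x_{\rA\to\rB}(\rho_\rA)\bigr)\otimes\ketbra{x}{x}_\rX .
\end{equation}
This map is CPTP. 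Its $\rA$-marginal is $\cR_{\rB\to\rA}\circ\Lambda^{\K}_{\rA\to\rB}$. Since $\cR_{\rB\to\rA}$ is trace preserving, its $\rX$-marginal is $\Gamma^{\K^\cX_\rA}$. Theorem~\ref{thm:UR_GR} then gives
\begin{equation}
2R(\rmid_\rA,\Gamma^{\E^\cX_\rA})\le\|\rmid_\rA-\cR_{\rB\to\rA}\circ\Lambda^{\K}_{\rA\to\rB}\|_\diamond+\|\Gamma^{\E^\cX_\rA}-\Gamma^{\K^\cX_\rA}\|_\diamond .
\end{equation}

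The remaining step is the bound $\|\Gamma^{\E^\cX_\rA}-\Gamma^{\K^\cX_\rA}\|_\diamond\le\epsilon(\E^\cX_\rA,\K^\cX_\rA)$, which is the same identification used in Theorem~\ref{theorem:mur}. For any state $\rho_{\rA n}$, the output difference is block diagonal in $x$. Its trace norm is therefore $\sum_x\|\Tr_\rA[((E^x_\rA-K^x_\rA)\otimes\one_n)\rho_{\rA n}]\|_1$. Each term is bounded by $\|E^x_\rA-K^x_\rA\|$ by duality between the trace norm and the operator norm, because $\rho_{\rA n}$ has unit trace norm. This step is routine and can be quoted from the supplement.

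The main point of care is bookkeeping rather than a deep obstacle. One must check that the infimum over $\cR_{\rB\to\rA}$ passes correctly, which holds because the left-hand side is independent of $\cR_{\rB\to\rA}$. One must also check that the output system $\rA$ of the recovered channel matches the target $\rmid_\rA$, so that Theorem~\ref{thm:UR_GR} applies with $\rB_1=\rA$ and $\rB_2=\rX$.
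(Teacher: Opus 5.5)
Your proposal is correct and follows essentially the same route as the paper: apply Theorem~\ref{thm:UR_GR} to $(\rmid_\rA,\Gamma^{\E^\cX_\rA})$ with a joint channel whose marginals are $\cR_{\rB\to\rA}\circ\Lambda^{\K}_{\rA\to\rB}$ and $\Gamma^{\K^\cX_\rA}$, bound the measurement-channel term by $\epsilon(\E^\cX_\rA,\K^\cX_\rA)$ as in Theorem~\ref{theorem:mur}, take the infimum over $\cR_{\rB\to\rA}$, and finish with Proposition~\ref{proposition:RoM_vs_RoI}. Your explicit joint channel $\sum_{x}\cR_{\rB\to\rA}\circ\cI^x_{\rA\to\rB}(\cdot)\otimes\ketbra{x}{x}_\rX$ just makes concrete the paper's remark that $\cR_{\rB\to\rA}\circ\Lambda^{\K}_{\rA\to\rB}$ is again a $\K^\cX_\rA$-channel.
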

    \begin{proof}
        We consider measuring a target POVM $\E^\cX_\rA\in\sM(\rA)$ through a CP-instrument corresponding to a compatible channel-POVM pair $(\tilde{\Lambda}^\K_{\rA},\K^\cX_{\rA})$.
        By applying Theorem~\ref{thm:UR_GR} to the channel pair $(\rmid_\rA,\Gamma^{\E^\cX_\rA})$ and by analyzing the error in the same way as in Theorem~\ref{theorem:mur}, we have
        \begin{equation}\label{eq:RoI-d-e}
			2R(\rmid_\rA,\E^\cX_\rA)\le\epsilon(\E^\cX_\rA,\K^\cX_\rA)+\|\tilde{\Lambda}^\K_{\rA}-\rmid_\rA\|_\diamond.
		\end{equation}
        Notice that if $\Lambda^\K_{\rA\to\rB}\in\sC(\rA\to\rB)$ is a $\K^\cX_\rA$-channel, for any CPTP map $\cR_{\rB\to\rA}\in\sC(\rB\to\rA)$, the composition $\cR_{\rB\to\rA}\circ\Lambda^\K_{\rA\to\rB}$ is a $\K^\cX_\rA$-channel from $\rA$ to $\rA$~\cite{heinosaari2013qualitative}.
		Since Eq.~\eqref{eq:RoI-d-e} holds for any $\K^\cX_\rA$-channel $\tilde{\Lambda}^\K_{\rA}\in\sC(\rA\to\rA)$, we have $2R(\rmid_\rA,\E^\cX_\rA)\le\epsilon(\E^\cX_\rA,\K^\cX_\rA)+\|\cR_{\rB\to\rA}\circ\Lambda^\K_{\rA\to\rB}-\rmid_\rA\|_\diamond$ for all $\cR_{\rB\to\rA}\in\sC(\rB\to\rA)$.
        Combining this inequality with Proposition~\ref{proposition:RoM_vs_RoI}, we obtain the desired inequality.
    \end{proof}
	
	While RoI quantifies disturbance abstractly through resource strength, the quantity $\delta$ captures operational irreversibility of the system under the optimal recovery $\mathcal{R}$.
	Theorem \ref{theorem:i-d-e} demonstrates that to approximate an informative measurement $\E^\cX_\rA$ with a measurement process $(\Lambda^\K_{\rA\to\rB},\K^\cX_\rA)$, one must balance the error $\epsilon(\E^\cX_\rA,\K^\cX_\rA)$ of the approximation and the disturbance $\delta(\Lambda^\K_{\rA\to\rB})$ to the target system.
	Setting $\K^\cX_\rA=\E^\cX_\rA$ in Eq.~\eqref{eq:error-disturbance} yields a pure information-disturbance tradeoff:
	\begin{corollary}   
		Let $\E^\cX_\rA\in\sM(\rA)$ be a POVM, $\Lambda^\E_{\rA\to\rB}\in\sC(\rA\to\rB)$ be an $\E^\cX_\rA$-channel, and $d:=\dim\cH_\rA$.
		The following information-disturbance tradeoff relation holds:
		\begin{equation}\label{eq:inf-dist-robustness}
			\frac{2\left(\sqrt{R(\E^\cX_\rA)+1}-1\right)^2}{d-1}\le\delta(\Lambda^\E_{\rA\to\rB}).
		\end{equation}
	\end{corollary}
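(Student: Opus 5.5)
The corollary follows by specializing Theorem~\ref{theorem:i-d-e} to $\K^\cX_\rA=\E^\cX_\rA$. Concretely, I take the given $\E^\cX_\rA$-channel $\Lambda^\E_{\rA\to\rB}$ and regard it as a $\K^\cX_\rA$-channel with $\K^\cX_\rA:=\E^\cX_\rA$. This is legitimate: Theorem~\ref{theorem:i-d-e} allows $\E^\cX_\rA$ and $\K^\cX_\rA$ to be arbitrary POVMs on the same outcome set $\cX$, and it places no restriction preventing them from coinciding. Plugging into Eq.~\eqref{eq:error-disturbance} then gives
\begin{equation}
\frac{2\left(\sqrt{R(\E^\cX_\rA)+1}-1\right)^2}{d-1}\le \epsilon(\E^\cX_\rA,\E^\cX_\rA)+\delta(\Lambda^{\E}_{\rA\to\rB}).
\end{equation}

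The only remaining point is that the error term vanishes. By definition, $\epsilon(\E^\cX_\rA,\E^\cX_\rA)=\sum_{x\in\cX}\|E^x_\rA-E^x_\rA\|=0$, which is also the soundness property noted after Theorem~\ref{theorem:mur}. Dropping this term yields exactly Eq.~\eqref{eq:inf-dist-robustness}.

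For robustness of the argument, I would also check the chain directly rather than rely only on the black-box theorem. For any recovery $\cR_{\rB\to\rA}$, the composition $\cR_{\rB\to\rA}\circ\Lambda^\E_{\rA\to\rB}$ is an $\E^\cX_\rA$-channel on $\rA$. Its CP-instrument $\{\cR\circ\cI^x\}$ yields the joint channel $\rho\mapsto\sum_x \cR\circ\cI^x(\rho)\otimes\ketbra{x}{x}$, whose first marginal is $\cR\circ\Lambda^\E$ and whose second marginal is exactly $\Gamma^{\E^\cX_\rA}$. Theorem~\ref{thm:UR_GR} applied to the pair $(\rmid_\rA,\Gamma^{\E^\cX_\rA})$ then gives $2R(\rmid_\rA,\E^\cX_\rA)\le\|\cR\circ\Lambda^\E-\rmid_\rA\|_\diamond+0$. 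Taking the infimum over $\cR$ and applying Proposition~\ref{proposition:RoM_vs_RoI} finishes the proof.

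I see no genuine obstacle. The only subtlety is confirming that the second-marginal error is exactly zero, and that holds because the instrument reproduces $\E^\cX_\rA$ exactly.
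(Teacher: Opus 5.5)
Your proposal is correct and follows the paper's own approach: the corollary is obtained by setting $\K^\cX_\rA=\E^\cX_\rA$ in Theorem~\ref{theorem:i-d-e}, so the error term $\epsilon(\E^\cX_\rA,\E^\cX_\rA)$ vanishes. Your direct check, which applies Theorem~\ref{thm:UR_GR} to $(\rmid_\rA,\Gamma^{\E^\cX_\rA})$ with the joint channel built from $\{\cR\circ\cI^x\}$, uses $R(\rmid_\rA,\E^\cX_\rA)=R(\rmid_\rA,\Gamma^{\E^\cX_\rA})$, takes the infimum over $\cR$, and invokes Proposition~\ref{proposition:RoM_vs_RoI}; this is just the proof of Theorem~\ref{theorem:i-d-e} specialized to $\K^\cX_\rA=\E^\cX_\rA$.
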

	This information-disturbance tradeoff fundamentally contrasts with Ref.~\cite{kretschmann2008information}.
	While Ref.~\cite{kretschmann2008information} analyzes information extraction via Stinespring's theorem, Eq.~\eqref{eq:inf-dist-robustness} evaluates it purely by the resource strength of informativeness, independent of algebraic properties of channels.

	Furthermore, our relation provides a tighter lower bound on disturbance than known algebraic limits, such as the Heinosaari–Miyadera (HM) bound \cite{heinosaari2013qualitative}:
	\begin{equation}\label{eq:HM-bound}
		\begin{split}
			\frac{1}{16}\sup_{x\in\cX}(\|E_\rA^x\|+\|\one_\rA-E_\rA^x\|-1)^{2}
			\le\delta(\Lambda^\E_{\rA\to\rB}).
		\end{split}
	\end{equation}

	When the dimension $d$ of the system satisfies $d\le6$, 
		our bound dominates the HM bound (see~\cite{supplement}):
	\begin{equation}
		\frac{2\Big(\sqrt{R(\E^\cX_\rA)+1}-1\Big)^{2}}{d-1}\ge\frac{1}{16}\sup_{x\in\cX}(\|E_\rA^x\|+\|\one_\rA-E_\rA^x\|-1)^{2}.
	\end{equation}

    \begin{figure}[tbp]
	 	\centering
	 	\includegraphics[width=0.8\linewidth]{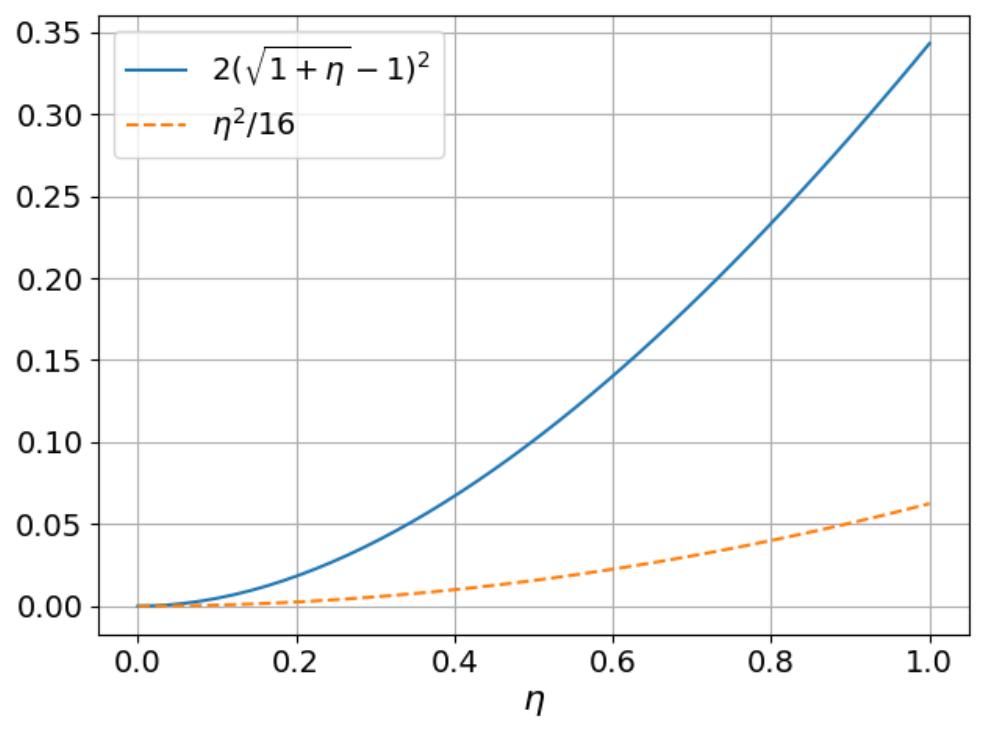}
	 	\caption{Comparison with the robustness-based bound $2(\sqrt{1+
	 			\eta}-1)^2$ and Heinosaari--Miyadera's bound $\eta^2/16$ for the unbiased qubit POVM  Eq.~\eqref{eq:def_unbiased_Pauli-Z}.}
	 	\label{fig:unbiased}
	\end{figure}
    \begin{example}\label{example:unbiased}
		We consider an unbiased binary qubit POVM \cite{Busch2009}  $\mathsf{Z}(\eta)_\rA:=\{Z(\eta)^j_\rA\}_{j=\pm1}$ parameterized by $\eta\in[0,1]$, where
		\begin{equation}\label{eq:def_unbiased_Pauli-Z}
			Z(\eta)^{\pm1}_\rA:=\frac{1}{2}(\one_\rA\pm\eta\sigma_\rA^z)
		\end{equation}
		and $\sigma^z_\rA$ is the Pauli-Z operator.
        Since $\|Z(\eta)^{\pm1}_\rA\|=\frac{1+\eta}{2}$, we have $R(\mathsf{Z}(\eta)_\rA)=\eta$.
		From Eq.~\eqref{eq:inf-dist-robustness}, it holds that $2(\sqrt{1+\eta}-1)^2\le \delta(\Lambda^{\mathsf{Z}(\eta)}_\rA)$, while the HM bound in \eq{eq:HM-bound} is $\eta^{2}/16$.
		Our bound strictly outperforms the HM bound for all $\eta \in (0, 1]$ (see Fig.~\ref{fig:unbiased}).  
        Moreover, this example achieves the equality of Eq.~\eqref{eq:RvsRoI} (see \cite{supplement}).
	\end{example}
	
	\textit{Conclusion---.}
	We connected the traditional concept of joint unrealizability (uncertainty) with the modern resource theory of incompatibility. 
	We proved that the implementation error tradeoff for an approximate joint realization of incompatible quantum channels is quantitatively bounded by their RoI. 
	Evaluating these channels as measurement or identity channels, we established a unified framework for error-error and information-disturbance tradeoffs. 
	Crucially, our results recast the resource-theoretic strength of incompatibility directly as uncertainty bounds, without relying on specific measurement models \cite{von1955mathematical,ozawa1984quantum} or algebraic frameworks such as the Kennard-Robertson inequality \cite{kennard1927zur,robertson1929,ozawa2003universally,ozawa2004uncertainty} or Stinespring's theorem.
	Our results admit natural extensions; 
	generalizing Theorem \ref{thm:UR_GR} to three or more channels is straightforward, and extending these tradeoffs to incompatibility of quantum instruments \cite{Mitra2022,Mitra2023,Leppaejaervi2024} complements instrument-based resource theory \cite{buscemi2023unifying,Ji2024}.
	Several directions remain open. 
	While the current method fails for weight of incompatibility \cite{Uola2020a,Ducuara2025,SudarsananRagini2024} (see~\cite{supplement}), deriving tradeoffs from other quantifiers of incompatibility remains to be explored. 
	Furthermore, replacing our SDP-based optimization with analytically computable bounds is a future challenge.

    \textit{Data Availability}---.
    No datasets were generated or analyzed in this work. 
    The graphs in Figs.~\ref{fig:unbiased} and \ref{fig:povm_example} are obtained by directly evaluating the analytical expressions given in the main text and Supplemental Material~\cite{supplement}.

	\textit{Acknowledgments}---.
	The authors thank the Yukawa Institute for Theoretical Physics at Kyoto University, where this work was initiated during the YITP-W-25-11.
	S.M. acknowledges support from the French government under the France 2030 investment plan, as part of the Initiative d'Excellence d'Aix-Marseille Université-A*MIDEX, AMX-22-CEI-01.
	R.T. thanks Takayuki Miyadera for insightful discussions.
	R.T. acknowledges support from JST COI-NEXT program Grant No. JPMJPF2014, and JSPS KAKENHI Grant No. JP25K17314.
	K.T. acknowledges support from JST SPRING Grant No. JPMJSP2125.
    
    \textit{Author Contributions}---.
    S.M.: Conceptualization, Formal analysis, Writing – original draft, Writing – review \& editing. 
    R.T.: Conceptualization, Formal analysis, Writing – review \& editing.
    K.T.: Formal analysis, Writing – review \& editing.

	\bibliography{myref_0512}

\begin{thebibliography}{84}%
\makeatletter
\providecommand \@ifxundefined [1]{%
 \@ifx{#1\undefined}
}%
\providecommand \@ifnum [1]{%
 \ifnum #1\expandafter \@firstoftwo
 \else \expandafter \@secondoftwo
 \fi
}%
\providecommand \@ifx [1]{%
 \ifx #1\expandafter \@firstoftwo
 \else \expandafter \@secondoftwo
 \fi
}%
\providecommand \natexlab [1]{#1}%
\providecommand \enquote  [1]{``#1''}%
\providecommand \bibnamefont  [1]{#1}%
\providecommand \bibfnamefont [1]{#1}%
\providecommand \citenamefont [1]{#1}%
\providecommand \href@noop [0]{\@secondoftwo}%
\providecommand \href [0]{\begingroup \@sanitize@url \@href}%
\providecommand \@href[1]{\@@startlink{#1}\@@href}%
\providecommand \@@href[1]{\endgroup#1\@@endlink}%
\providecommand \@sanitize@url [0]{\catcode `\\12\catcode `\$12\catcode `\&12\catcode `\#12\catcode `\^12\catcode `\_12\catcode `\%12\relax}%
\providecommand \@@startlink[1]{}%
\providecommand \@@endlink[0]{}%
\providecommand \url  [0]{\begingroup\@sanitize@url \@url }%
\providecommand \@url [1]{\endgroup\@href {#1}{\urlprefix }}%
\providecommand \urlprefix  [0]{URL }%
\providecommand \Eprint [0]{\href }%
\providecommand \doibase [0]{https://doi.org/}%
\providecommand \selectlanguage [0]{\@gobble}%
\providecommand \bibinfo  [0]{\@secondoftwo}%
\providecommand \bibfield  [0]{\@secondoftwo}%
\providecommand \translation [1]{[#1]}%
\providecommand \BibitemOpen [0]{}%
\providecommand \bibitemStop [0]{}%
\providecommand \bibitemNoStop [0]{.\EOS\space}%
\providecommand \EOS [0]{\spacefactor3000\relax}%
\providecommand \BibitemShut  [1]{\csname bibitem#1\endcsname}%
\let\auto@bib@innerbib\@empty
\bibitem [{\citenamefont {Busch}\ \emph {et~al.}(2016)\citenamefont {Busch}, \citenamefont {Lahti}, \citenamefont {Pellonp{\"a}{\"a}},\ and\ \citenamefont {Ylinen}}]{Busch2016}%
  \BibitemOpen
  \bibfield  {author} {\bibinfo {author} {\bibfnamefont {P.}~\bibnamefont {Busch}}, \bibinfo {author} {\bibfnamefont {P.~J.}\ \bibnamefont {Lahti}}, \bibinfo {author} {\bibfnamefont {J.-P.}\ \bibnamefont {Pellonp{\"a}{\"a}}},\ and\ \bibinfo {author} {\bibfnamefont {K.}~\bibnamefont {Ylinen}},\ }\href@noop {} {\emph {\bibinfo {title} {Quantum Measurement}}}\ (\bibinfo  {publisher} {Springer International Publishing},\ \bibinfo {year} {2016})\BibitemShut {NoStop}%
\bibitem [{\citenamefont {G\"uhne}\ \emph {et~al.}(2023)\citenamefont {G\"uhne}, \citenamefont {Haapasalo}, \citenamefont {Kraft}, \citenamefont {Pellonp\"a\"a},\ and\ \citenamefont {Uola}}]{guhne2023colloquium}%
  \BibitemOpen
  \bibfield  {author} {\bibinfo {author} {\bibfnamefont {O.}~\bibnamefont {G\"uhne}}, \bibinfo {author} {\bibfnamefont {E.}~\bibnamefont {Haapasalo}}, \bibinfo {author} {\bibfnamefont {T.}~\bibnamefont {Kraft}}, \bibinfo {author} {\bibfnamefont {J.-P.}\ \bibnamefont {Pellonp\"a\"a}},\ and\ \bibinfo {author} {\bibfnamefont {R.}~\bibnamefont {Uola}},\ }\bibfield  {title} {\bibinfo {title} {Colloquium: Incompatible measurements in quantum information science},\ }\href {https://doi.org/10.1103/RevModPhys.95.011003} {\bibfield  {journal} {\bibinfo  {journal} {Rev. Mod. Phys.}\ }\textbf {\bibinfo {volume} {95}},\ \bibinfo {pages} {011003} (\bibinfo {year} {2023})}\BibitemShut {NoStop}%
\bibitem [{\citenamefont {Busch}(2009{\natexlab{a}})}]{busch2009no}%
  \BibitemOpen
  \bibfield  {author} {\bibinfo {author} {\bibfnamefont {P.}~\bibnamefont {Busch}},\ }\bibinfo {title} {``no information without disturbance'': Quantum limitations of measurement in quantum reality. in wayne c. myrvold and joy christian (eds.) relativistic causality, and closing the epistemic circle: Essays in honour of abner shimony},\ in\ \href {https://doi.org/10.1007/978-1-4020-9107-0_13} {\emph {\bibinfo {booktitle} {Quantum Reality, Relativistic Causality, and Closing the Epistemic Circle: Essays in Honour of Abner Shimony}}}\ (\bibinfo  {publisher} {Springer Netherlands},\ \bibinfo {address} {Dordrecht},\ \bibinfo {year} {2009})\ pp.\ \bibinfo {pages} {229--256}\BibitemShut {NoStop}%
\bibitem [{\citenamefont {D'Ariano}\ \emph {et~al.}(2017)\citenamefont {D'Ariano}, \citenamefont {Chiribella},\ and\ \citenamefont {Perinotti}}]{dariano2017quantum}%
  \BibitemOpen
  \bibfield  {author} {\bibinfo {author} {\bibfnamefont {G.~M.}\ \bibnamefont {D'Ariano}}, \bibinfo {author} {\bibfnamefont {G.}~\bibnamefont {Chiribella}},\ and\ \bibinfo {author} {\bibfnamefont {P.}~\bibnamefont {Perinotti}},\ }\href {https://doi.org/10.1017/9781107338340} {\emph {\bibinfo {title} {Quantum Theory from First Principles: An Informational Approach}}}\ (\bibinfo  {publisher} {Cambridge University Press, Cambridge, UK},\ \bibinfo {year} {2017})\BibitemShut {NoStop}%
\bibitem [{\citenamefont {Wootters}\ and\ \citenamefont {Zurek}(1982)}]{Wootters1982}%
  \BibitemOpen
  \bibfield  {author} {\bibinfo {author} {\bibfnamefont {W.~K.}\ \bibnamefont {Wootters}}\ and\ \bibinfo {author} {\bibfnamefont {W.~H.}\ \bibnamefont {Zurek}},\ }\bibfield  {title} {\bibinfo {title} {A single quantum cannot be cloned},\ }\href {https://doi.org/https://doi.org/10.1038/299802a0} {\bibfield  {journal} {\bibinfo  {journal} {Nature}\ }\textbf {\bibinfo {volume} {299}},\ \bibinfo {pages} {802} (\bibinfo {year} {1982})}\BibitemShut {NoStop}%
\bibitem [{\citenamefont {Dieks}(1982)}]{Dieks1982}%
  \BibitemOpen
  \bibfield  {author} {\bibinfo {author} {\bibfnamefont {D.}~\bibnamefont {Dieks}},\ }\bibfield  {title} {\bibinfo {title} {Communication by {EPR} devices},\ }\href {https://doi.org/https://doi.org/10.1016/0375-9601(82)90084-6} {\bibfield  {journal} {\bibinfo  {journal} {Phys. Lett. A}\ }\textbf {\bibinfo {volume} {92}},\ \bibinfo {pages} {271} (\bibinfo {year} {1982})}\BibitemShut {NoStop}%
\bibitem [{\citenamefont {Wiesner}(1983)}]{Wiesner1983}%
  \BibitemOpen
  \bibfield  {author} {\bibinfo {author} {\bibfnamefont {S.}~\bibnamefont {Wiesner}},\ }\bibfield  {title} {\bibinfo {title} {Conjugate coding},\ }\href {https://doi.org/10.1145/1008908.1008920} {\bibfield  {journal} {\bibinfo  {journal} {SIGACT News}\ }\textbf {\bibinfo {volume} {15}},\ \bibinfo {pages} {78–88} (\bibinfo {year} {1983})}\BibitemShut {NoStop}%
\bibitem [{\citenamefont {Barnum}\ \emph {et~al.}(1996)\citenamefont {Barnum}, \citenamefont {Caves}, \citenamefont {Fuchs}, \citenamefont {Jozsa},\ and\ \citenamefont {Schumacher}}]{barnum1996noncommuting}%
  \BibitemOpen
  \bibfield  {author} {\bibinfo {author} {\bibfnamefont {H.}~\bibnamefont {Barnum}}, \bibinfo {author} {\bibfnamefont {C.~M.}\ \bibnamefont {Caves}}, \bibinfo {author} {\bibfnamefont {C.~A.}\ \bibnamefont {Fuchs}}, \bibinfo {author} {\bibfnamefont {R.}~\bibnamefont {Jozsa}},\ and\ \bibinfo {author} {\bibfnamefont {B.}~\bibnamefont {Schumacher}},\ }\bibfield  {title} {\bibinfo {title} {Noncommuting mixed states cannot be broadcast},\ }\href {https://doi.org/10.1103/PhysRevLett.76.2818} {\bibfield  {journal} {\bibinfo  {journal} {Phys. Rev. Lett.}\ }\textbf {\bibinfo {volume} {76}},\ \bibinfo {pages} {2818} (\bibinfo {year} {1996})}\BibitemShut {NoStop}%
\bibitem [{\citenamefont {Barnum}\ \emph {et~al.}(2007)\citenamefont {Barnum}, \citenamefont {Barrett}, \citenamefont {Leifer},\ and\ \citenamefont {Wilce}}]{barnum2007generalized}%
  \BibitemOpen
  \bibfield  {author} {\bibinfo {author} {\bibfnamefont {H.}~\bibnamefont {Barnum}}, \bibinfo {author} {\bibfnamefont {J.}~\bibnamefont {Barrett}}, \bibinfo {author} {\bibfnamefont {M.}~\bibnamefont {Leifer}},\ and\ \bibinfo {author} {\bibfnamefont {A.}~\bibnamefont {Wilce}},\ }\bibfield  {title} {\bibinfo {title} {Generalized no-broadcasting theorem},\ }\href {https://doi.org/10.1103/PhysRevLett.99.240501} {\bibfield  {journal} {\bibinfo  {journal} {Phys. Rev. Lett.}\ }\textbf {\bibinfo {volume} {99}},\ \bibinfo {pages} {240501} (\bibinfo {year} {2007})}\BibitemShut {NoStop}%
\bibitem [{\citenamefont {Heinosaari}\ and\ \citenamefont {Miyadera}(2017)}]{heinosaari2017incompatibility}%
  \BibitemOpen
  \bibfield  {author} {\bibinfo {author} {\bibfnamefont {T.}~\bibnamefont {Heinosaari}}\ and\ \bibinfo {author} {\bibfnamefont {T.}~\bibnamefont {Miyadera}},\ }\bibfield  {title} {\bibinfo {title} {Incompatibility of quantum channels},\ }\href {https://doi.org/10.1088/1751-8121/aa5f6b} {\bibfield  {journal} {\bibinfo  {journal} {J. Phys. A: Math. Theor.}\ }\textbf {\bibinfo {volume} {50}},\ \bibinfo {pages} {135302} (\bibinfo {year} {2017})}\BibitemShut {NoStop}%
\bibitem [{\citenamefont {Heinosaari}\ \emph {et~al.}(2016)\citenamefont {Heinosaari}, \citenamefont {Miyadera},\ and\ \citenamefont {Ziman}}]{heinosaari2016invitation}%
  \BibitemOpen
  \bibfield  {author} {\bibinfo {author} {\bibfnamefont {T.}~\bibnamefont {Heinosaari}}, \bibinfo {author} {\bibfnamefont {T.}~\bibnamefont {Miyadera}},\ and\ \bibinfo {author} {\bibfnamefont {M.}~\bibnamefont {Ziman}},\ }\bibfield  {title} {\bibinfo {title} {An invitation to quantum incompatibility},\ }\href {https://doi.org/10.1088/1751-8113/49/12/123001} {\bibfield  {journal} {\bibinfo  {journal} {J. Phys. A: Math. Theor.}\ }\textbf {\bibinfo {volume} {49}},\ \bibinfo {pages} {123001} (\bibinfo {year} {2016})}\BibitemShut {NoStop}%
\bibitem [{\citenamefont {Heisenberg}(1927)}]{heisenberg1927uber}%
  \BibitemOpen
  \bibfield  {author} {\bibinfo {author} {\bibfnamefont {W.~K.}\ \bibnamefont {Heisenberg}},\ }\bibfield  {title} {\bibinfo {title} {{\"U}ber den anschaulichen inhalt der quantentheoretischen kinematik und mechanik},\ }\href {https://doi.org/10.1007/BF01397280} {\bibfield  {journal} {\bibinfo  {journal} {Zeitschrift f{\"u}r Physik}\ }\textbf {\bibinfo {volume} {43}},\ \bibinfo {pages} {172} (\bibinfo {year} {1927})}\BibitemShut {NoStop}%
\bibitem [{\citenamefont {Ozawa}(2003)}]{ozawa2003universally}%
  \BibitemOpen
  \bibfield  {author} {\bibinfo {author} {\bibfnamefont {M.}~\bibnamefont {Ozawa}},\ }\bibfield  {title} {\bibinfo {title} {Universally valid reformulation of the {H}eisenberg uncertainty principle on noise and disturbance in measurement},\ }\href {https://doi.org/10.1103/PhysRevA.67.042105} {\bibfield  {journal} {\bibinfo  {journal} {Phys. Rev. A}\ }\textbf {\bibinfo {volume} {67}},\ \bibinfo {pages} {042105} (\bibinfo {year} {2003})}\BibitemShut {NoStop}%
\bibitem [{\citenamefont {Ozawa}(2004)}]{ozawa2004uncertainty}%
  \BibitemOpen
  \bibfield  {author} {\bibinfo {author} {\bibfnamefont {M.}~\bibnamefont {Ozawa}},\ }\bibfield  {title} {\bibinfo {title} {Uncertainty relations for joint measurements of noncommuting observables},\ }\href {https://doi.org/https://doi.org/10.1016/j.physleta.2003.12.001} {\bibfield  {journal} {\bibinfo  {journal} {Phys. Lett. A}\ }\textbf {\bibinfo {volume} {320}},\ \bibinfo {pages} {367} (\bibinfo {year} {2004})}\BibitemShut {NoStop}%
\bibitem [{\citenamefont {Watanabe}\ \emph {et~al.}(2011)\citenamefont {Watanabe}, \citenamefont {Sagawa},\ and\ \citenamefont {Ueda}}]{watanabe2011uncertainty}%
  \BibitemOpen
  \bibfield  {author} {\bibinfo {author} {\bibfnamefont {Y.}~\bibnamefont {Watanabe}}, \bibinfo {author} {\bibfnamefont {T.}~\bibnamefont {Sagawa}},\ and\ \bibinfo {author} {\bibfnamefont {M.}~\bibnamefont {Ueda}},\ }\bibfield  {title} {\bibinfo {title} {Uncertainty relation revisited from quantum estimation theory},\ }\href {https://doi.org/10.1103/PhysRevA.84.042121} {\bibfield  {journal} {\bibinfo  {journal} {Phys. Rev. A}\ }\textbf {\bibinfo {volume} {84}},\ \bibinfo {pages} {042121} (\bibinfo {year} {2011})}\BibitemShut {NoStop}%
\bibitem [{\citenamefont {Busch}\ \emph {et~al.}(2013)\citenamefont {Busch}, \citenamefont {Lahti},\ and\ \citenamefont {Werner}}]{busch2013proof}%
  \BibitemOpen
  \bibfield  {author} {\bibinfo {author} {\bibfnamefont {P.}~\bibnamefont {Busch}}, \bibinfo {author} {\bibfnamefont {P.}~\bibnamefont {Lahti}},\ and\ \bibinfo {author} {\bibfnamefont {R.~F.}\ \bibnamefont {Werner}},\ }\bibfield  {title} {\bibinfo {title} {Proof of {H}eisenberg's error-disturbance relation},\ }\href {https://doi.org/10.1103/PhysRevLett.111.160405} {\bibfield  {journal} {\bibinfo  {journal} {Phys. Rev. Lett.}\ }\textbf {\bibinfo {volume} {111}},\ \bibinfo {pages} {160405} (\bibinfo {year} {2013})}\BibitemShut {NoStop}%
\bibitem [{\citenamefont {Branciard}(2013)}]{branciard2013error}%
  \BibitemOpen
  \bibfield  {author} {\bibinfo {author} {\bibfnamefont {C.}~\bibnamefont {Branciard}},\ }\bibfield  {title} {\bibinfo {title} {Error-tradeoff and error-disturbance relations for incompatible quantum measurements},\ }\href {https://doi.org/10.1073/pnas.1219331110} {\bibfield  {journal} {\bibinfo  {journal} {Proc. Natl. Acad. Sci.}\ }\textbf {\bibinfo {volume} {110}},\ \bibinfo {pages} {6742} (\bibinfo {year} {2013})}\BibitemShut {NoStop}%
\bibitem [{\citenamefont {Buscemi}\ \emph {et~al.}(2014)\citenamefont {Buscemi}, \citenamefont {Hall}, \citenamefont {Ozawa},\ and\ \citenamefont {Wilde}}]{Buscemi2014}%
  \BibitemOpen
  \bibfield  {author} {\bibinfo {author} {\bibfnamefont {F.}~\bibnamefont {Buscemi}}, \bibinfo {author} {\bibfnamefont {M.~J.~W.}\ \bibnamefont {Hall}}, \bibinfo {author} {\bibfnamefont {M.}~\bibnamefont {Ozawa}},\ and\ \bibinfo {author} {\bibfnamefont {M.~M.}\ \bibnamefont {Wilde}},\ }\bibfield  {title} {\bibinfo {title} {Noise and disturbance in quantum measurements: An information-theoretic approach},\ }\href {https://doi.org/10.1103/PhysRevLett.112.050401} {\bibfield  {journal} {\bibinfo  {journal} {Phys. Rev. Lett.}\ }\textbf {\bibinfo {volume} {112}},\ \bibinfo {pages} {050401} (\bibinfo {year} {2014})}\BibitemShut {NoStop}%
\bibitem [{\citenamefont {Maccone}(2006)}]{maccone2006information}%
  \BibitemOpen
  \bibfield  {author} {\bibinfo {author} {\bibfnamefont {L.}~\bibnamefont {Maccone}},\ }\bibfield  {title} {\bibinfo {title} {Information-disturbance tradeoff in quantum measurements},\ }\href {https://doi.org/10.1103/PhysRevA.73.042307} {\bibfield  {journal} {\bibinfo  {journal} {Phys. Rev. A}\ }\textbf {\bibinfo {volume} {73}},\ \bibinfo {pages} {042307} (\bibinfo {year} {2006})}\BibitemShut {NoStop}%
\bibitem [{\citenamefont {Buscemi}\ and\ \citenamefont {Sacchi}(2006)}]{buscemi2006information}%
  \BibitemOpen
  \bibfield  {author} {\bibinfo {author} {\bibfnamefont {F.}~\bibnamefont {Buscemi}}\ and\ \bibinfo {author} {\bibfnamefont {M.~F.}\ \bibnamefont {Sacchi}},\ }\bibfield  {title} {\bibinfo {title} {Information-disturbance trade-off in quantum-state discrimination},\ }\href {https://doi.org/10.1103/PhysRevA.74.052320} {\bibfield  {journal} {\bibinfo  {journal} {Phys. Rev. A}\ }\textbf {\bibinfo {volume} {74}},\ \bibinfo {pages} {052320} (\bibinfo {year} {2006})}\BibitemShut {NoStop}%
\bibitem [{\citenamefont {Maccone}(2007)}]{maccone2007entropic}%
  \BibitemOpen
  \bibfield  {author} {\bibinfo {author} {\bibfnamefont {L.}~\bibnamefont {Maccone}},\ }\bibfield  {title} {\bibinfo {title} {Entropic information-disturbance tradeoff},\ }\href {https://doi.org/10.1209/0295-5075/77/40002} {\bibfield  {journal} {\bibinfo  {journal} {Europhys. Lett.}\ }\textbf {\bibinfo {volume} {77}},\ \bibinfo {pages} {40002} (\bibinfo {year} {2007})}\BibitemShut {NoStop}%
\bibitem [{\citenamefont {Kretschmann}\ \emph {et~al.}(2008)\citenamefont {Kretschmann}, \citenamefont {Schlingemann},\ and\ \citenamefont {Werner}}]{kretschmann2008information}%
  \BibitemOpen
  \bibfield  {author} {\bibinfo {author} {\bibfnamefont {D.}~\bibnamefont {Kretschmann}}, \bibinfo {author} {\bibfnamefont {D.}~\bibnamefont {Schlingemann}},\ and\ \bibinfo {author} {\bibfnamefont {R.~F.}\ \bibnamefont {Werner}},\ }\bibfield  {title} {\bibinfo {title} {The information-disturbance tradeoff and the continuity of {S}tinespring's representation},\ }\href {https://doi.org/10.1109/TIT.2008.917696} {\bibfield  {journal} {\bibinfo  {journal} {IEEE Trans. Inf. Theory}\ }\textbf {\bibinfo {volume} {54}},\ \bibinfo {pages} {1708} (\bibinfo {year} {2008})}\BibitemShut {NoStop}%
\bibitem [{\citenamefont {Heinosaari}\ and\ \citenamefont {Miyadera}(2013)}]{heinosaari2013qualitative}%
  \BibitemOpen
  \bibfield  {author} {\bibinfo {author} {\bibfnamefont {T.}~\bibnamefont {Heinosaari}}\ and\ \bibinfo {author} {\bibfnamefont {T.}~\bibnamefont {Miyadera}},\ }\bibfield  {title} {\bibinfo {title} {Qualitative noise-disturbance relation for quantum measurements},\ }\href {https://doi.org/10.1103/PhysRevA.88.042117} {\bibfield  {journal} {\bibinfo  {journal} {Phys. Rev. A}\ }\textbf {\bibinfo {volume} {88}},\ \bibinfo {pages} {042117} (\bibinfo {year} {2013})}\BibitemShut {NoStop}%
\bibitem [{\citenamefont {D'Ariano}\ \emph {et~al.}(2020)\citenamefont {D'Ariano}, \citenamefont {Perinotti},\ and\ \citenamefont {Tosini}}]{dariano2020information}%
  \BibitemOpen
  \bibfield  {author} {\bibinfo {author} {\bibfnamefont {G.~M.}\ \bibnamefont {D'Ariano}}, \bibinfo {author} {\bibfnamefont {P.}~\bibnamefont {Perinotti}},\ and\ \bibinfo {author} {\bibfnamefont {A.}~\bibnamefont {Tosini}},\ }\bibfield  {title} {\bibinfo {title} {Information and disturbance in operational probabilistic theories},\ }\href {https://doi.org/10.22331/q-2020-11-16-363} {\bibfield  {journal} {\bibinfo  {journal} {{Quantum}}\ }\textbf {\bibinfo {volume} {4}},\ \bibinfo {pages} {363} (\bibinfo {year} {2020})}\BibitemShut {NoStop}%
\bibitem [{\citenamefont {Skrzypczyk}\ \emph {et~al.}(2019)\citenamefont {Skrzypczyk}, \citenamefont {\ifmmode \check{S}\else \v{S}\fi{}upi\ifmmode~\acute{c}\else \'{c}\fi{}},\ and\ \citenamefont {Cavalcanti}}]{Skrzypczyk2019}%
  \BibitemOpen
  \bibfield  {author} {\bibinfo {author} {\bibfnamefont {P.}~\bibnamefont {Skrzypczyk}}, \bibinfo {author} {\bibfnamefont {I.}~\bibnamefont {\ifmmode \check{S}\else \v{S}\fi{}upi\ifmmode~\acute{c}\else \'{c}\fi{}}},\ and\ \bibinfo {author} {\bibfnamefont {D.}~\bibnamefont {Cavalcanti}},\ }\bibfield  {title} {\bibinfo {title} {All sets of incompatible measurements give an advantage in quantum state discrimination},\ }\href {https://doi.org/10.1103/PhysRevLett.122.130403} {\bibfield  {journal} {\bibinfo  {journal} {Phys. Rev. Lett.}\ }\textbf {\bibinfo {volume} {122}},\ \bibinfo {pages} {130403} (\bibinfo {year} {2019})}\BibitemShut {NoStop}%
\bibitem [{\citenamefont {Carmeli}\ \emph {et~al.}(2019)\citenamefont {Carmeli}, \citenamefont {Heinosaari},\ and\ \citenamefont {Toigo}}]{Carmeli2019}%
  \BibitemOpen
  \bibfield  {author} {\bibinfo {author} {\bibfnamefont {C.}~\bibnamefont {Carmeli}}, \bibinfo {author} {\bibfnamefont {T.}~\bibnamefont {Heinosaari}},\ and\ \bibinfo {author} {\bibfnamefont {A.}~\bibnamefont {Toigo}},\ }\bibfield  {title} {\bibinfo {title} {Quantum incompatibility witnesses},\ }\href {https://doi.org/10.1103/PhysRevLett.122.130402} {\bibfield  {journal} {\bibinfo  {journal} {Phys. Rev. Lett.}\ }\textbf {\bibinfo {volume} {122}},\ \bibinfo {pages} {130402} (\bibinfo {year} {2019})}\BibitemShut {NoStop}%
\bibitem [{\citenamefont {Uola}\ \emph {et~al.}(2019)\citenamefont {Uola}, \citenamefont {Kraft}, \citenamefont {Shang}, \citenamefont {Yu},\ and\ \citenamefont {G\"uhne}}]{uola2019quantifying}%
  \BibitemOpen
  \bibfield  {author} {\bibinfo {author} {\bibfnamefont {R.}~\bibnamefont {Uola}}, \bibinfo {author} {\bibfnamefont {T.}~\bibnamefont {Kraft}}, \bibinfo {author} {\bibfnamefont {J.}~\bibnamefont {Shang}}, \bibinfo {author} {\bibfnamefont {X.-D.}\ \bibnamefont {Yu}},\ and\ \bibinfo {author} {\bibfnamefont {O.}~\bibnamefont {G\"uhne}},\ }\bibfield  {title} {\bibinfo {title} {Quantifying quantum resources with conic programming},\ }\href {https://doi.org/10.1103/PhysRevLett.122.130404} {\bibfield  {journal} {\bibinfo  {journal} {Phys. Rev. Lett.}\ }\textbf {\bibinfo {volume} {122}},\ \bibinfo {pages} {130404} (\bibinfo {year} {2019})}\BibitemShut {NoStop}%
\bibitem [{\citenamefont {Mori}(2020)}]{mori2020operational}%
  \BibitemOpen
  \bibfield  {author} {\bibinfo {author} {\bibfnamefont {J.}~\bibnamefont {Mori}},\ }\bibfield  {title} {\bibinfo {title} {Operational characterization of incompatibility of quantum channels with quantum state discrimination},\ }\href {https://doi.org/10.1103/PhysRevA.101.032331} {\bibfield  {journal} {\bibinfo  {journal} {Phys. Rev. A}\ }\textbf {\bibinfo {volume} {101}},\ \bibinfo {pages} {032331} (\bibinfo {year} {2020})}\BibitemShut {NoStop}%
\bibitem [{\citenamefont {Uola}\ \emph {et~al.}(2020{\natexlab{a}})\citenamefont {Uola}, \citenamefont {Kraft},\ and\ \citenamefont {Abbott}}]{Uola2020}%
  \BibitemOpen
  \bibfield  {author} {\bibinfo {author} {\bibfnamefont {R.}~\bibnamefont {Uola}}, \bibinfo {author} {\bibfnamefont {T.}~\bibnamefont {Kraft}},\ and\ \bibinfo {author} {\bibfnamefont {A.~A.}\ \bibnamefont {Abbott}},\ }\bibfield  {title} {\bibinfo {title} {Quantification of quantum dynamics with input-output games},\ }\href {https://doi.org/10.1103/PhysRevA.101.052306} {\bibfield  {journal} {\bibinfo  {journal} {Phys. Rev. A}\ }\textbf {\bibinfo {volume} {101}},\ \bibinfo {pages} {052306} (\bibinfo {year} {2020}{\natexlab{a}})}\BibitemShut {NoStop}%
\bibitem [{\citenamefont {Ducuara}\ \emph {et~al.}(2025)\citenamefont {Ducuara}, \citenamefont {Takakura}, \citenamefont {Hernandez},\ and\ \citenamefont {Susa}}]{Ducuara2025}%
  \BibitemOpen
  \bibfield  {author} {\bibinfo {author} {\bibfnamefont {A.~F.}\ \bibnamefont {Ducuara}}, \bibinfo {author} {\bibfnamefont {R.}~\bibnamefont {Takakura}}, \bibinfo {author} {\bibfnamefont {F.~J.}\ \bibnamefont {Hernandez}},\ and\ \bibinfo {author} {\bibfnamefont {C.~E.}\ \bibnamefont {Susa}},\ }\bibfield  {title} {\bibinfo {title} {Multiobject operational tasks for measurement incompatibility},\ }\href {https://doi.org/10.1103/m7ln-tb1s} {\bibfield  {journal} {\bibinfo  {journal} {Phys. Rev. Res.}\ }\textbf {\bibinfo {volume} {7}},\ \bibinfo {pages} {033050} (\bibinfo {year} {2025})}\BibitemShut {NoStop}%
\bibitem [{\citenamefont {Vidal}\ and\ \citenamefont {Tarrach}(1999)}]{vidal1999robustness}%
  \BibitemOpen
  \bibfield  {author} {\bibinfo {author} {\bibfnamefont {G.}~\bibnamefont {Vidal}}\ and\ \bibinfo {author} {\bibfnamefont {R.}~\bibnamefont {Tarrach}},\ }\bibfield  {title} {\bibinfo {title} {Robustness of entanglement},\ }\href {https://doi.org/10.1103/PhysRevA.59.141} {\bibfield  {journal} {\bibinfo  {journal} {Phys. Rev. A}\ }\textbf {\bibinfo {volume} {59}},\ \bibinfo {pages} {141} (\bibinfo {year} {1999})}\BibitemShut {NoStop}%
\bibitem [{\citenamefont {Steiner}(2003)}]{Steiner2003}%
  \BibitemOpen
  \bibfield  {author} {\bibinfo {author} {\bibfnamefont {M.}~\bibnamefont {Steiner}},\ }\bibfield  {title} {\bibinfo {title} {Generalized robustness of entanglement},\ }\href {https://doi.org/10.1103/PhysRevA.67.054305} {\bibfield  {journal} {\bibinfo  {journal} {Phys. Rev. A}\ }\textbf {\bibinfo {volume} {67}},\ \bibinfo {pages} {054305} (\bibinfo {year} {2003})}\BibitemShut {NoStop}%
\bibitem [{\citenamefont {Datta}(2009)}]{datta2009max}%
  \BibitemOpen
  \bibfield  {author} {\bibinfo {author} {\bibfnamefont {N.}~\bibnamefont {Datta}},\ }\bibfield  {title} {\bibinfo {title} {Max-{R}elative {E}ntropy of {E}ntanglement, {A}lias {L}og {R}obustness},\ }\href {https://doi.org/10.1142/S0219749909005298} {\bibfield  {journal} {\bibinfo  {journal} {Int. J. Quantum Inf.}\ }\textbf {\bibinfo {volume} {07}},\ \bibinfo {pages} {475} (\bibinfo {year} {2009})}\BibitemShut {NoStop}%
\bibitem [{\citenamefont {Skrzypczyk}\ and\ \citenamefont {Linden}(2019)}]{skrzypczyk2019robustness}%
  \BibitemOpen
  \bibfield  {author} {\bibinfo {author} {\bibfnamefont {P.}~\bibnamefont {Skrzypczyk}}\ and\ \bibinfo {author} {\bibfnamefont {N.}~\bibnamefont {Linden}},\ }\bibfield  {title} {\bibinfo {title} {Robustness of measurement, discrimination games, and accessible information},\ }\href {https://doi.org/10.1103/PhysRevLett.122.140403} {\bibfield  {journal} {\bibinfo  {journal} {Phys. Rev. Lett.}\ }\textbf {\bibinfo {volume} {122}},\ \bibinfo {pages} {140403} (\bibinfo {year} {2019})}\BibitemShut {NoStop}%
\bibitem [{\citenamefont {Oszmaniec}\ and\ \citenamefont {Biswas}(2019)}]{oszmaniec2019operational}%
  \BibitemOpen
  \bibfield  {author} {\bibinfo {author} {\bibfnamefont {M.}~\bibnamefont {Oszmaniec}}\ and\ \bibinfo {author} {\bibfnamefont {T.}~\bibnamefont {Biswas}},\ }\bibfield  {title} {\bibinfo {title} {Operational relevance of resource theories of quantum measurements},\ }\href {https://doi.org/10.22331/q-2019-04-26-133} {\bibfield  {journal} {\bibinfo  {journal} {{Quantum}}\ }\textbf {\bibinfo {volume} {3}},\ \bibinfo {pages} {133} (\bibinfo {year} {2019})}\BibitemShut {NoStop}%
\bibitem [{\citenamefont {Takagi}\ and\ \citenamefont {Regula}(2019)}]{takagi2019general}%
  \BibitemOpen
  \bibfield  {author} {\bibinfo {author} {\bibfnamefont {R.}~\bibnamefont {Takagi}}\ and\ \bibinfo {author} {\bibfnamefont {B.}~\bibnamefont {Regula}},\ }\bibfield  {title} {\bibinfo {title} {General resource theories in quantum mechanics and beyond: Operational characterization via discrimination tasks},\ }\href {https://doi.org/10.1103/PhysRevX.9.031053} {\bibfield  {journal} {\bibinfo  {journal} {Phys. Rev. X}\ }\textbf {\bibinfo {volume} {9}},\ \bibinfo {pages} {031053} (\bibinfo {year} {2019})}\BibitemShut {NoStop}%
\bibitem [{\citenamefont {Chitambar}\ and\ \citenamefont {Gour}(2019)}]{chitambar2019quantum}%
  \BibitemOpen
  \bibfield  {author} {\bibinfo {author} {\bibfnamefont {E.}~\bibnamefont {Chitambar}}\ and\ \bibinfo {author} {\bibfnamefont {G.}~\bibnamefont {Gour}},\ }\bibfield  {title} {\bibinfo {title} {Quantum resource theories},\ }\href {https://doi.org/10.1103/RevModPhys.91.025001} {\bibfield  {journal} {\bibinfo  {journal} {Rev. Mod. Phys.}\ }\textbf {\bibinfo {volume} {91}},\ \bibinfo {pages} {025001} (\bibinfo {year} {2019})}\BibitemShut {NoStop}%
\bibitem [{\citenamefont {Gour}(2025)}]{gour2025quantum}%
  \BibitemOpen
  \bibfield  {author} {\bibinfo {author} {\bibfnamefont {G.}~\bibnamefont {Gour}},\ }\href {https://doi.org/10.1017/9781009560870} {\emph {\bibinfo {title} {Quantum Resource Theories}}}\ (\bibinfo  {publisher} {Cambridge University Press, Cambridge, UK},\ \bibinfo {year} {2025})\BibitemShut {NoStop}%
\bibitem [{\citenamefont {Nielsen}\ and\ \citenamefont {Chuang}(2010)}]{nielsen2010quantum}%
  \BibitemOpen
  \bibfield  {author} {\bibinfo {author} {\bibfnamefont {M.~A.}\ \bibnamefont {Nielsen}}\ and\ \bibinfo {author} {\bibfnamefont {I.~L.}\ \bibnamefont {Chuang}},\ }\href {https://doi.org/10.1017/CBO9780511976667} {\emph {\bibinfo {title} {Quantum Computation and Quantum Information: 10th Anniversary Edition}}}\ (\bibinfo  {publisher} {Cambridge University Press, Cambridge, UK},\ \bibinfo {year} {2010})\BibitemShut {NoStop}%
\bibitem [{\citenamefont {Wilde}(2017)}]{wilde2017quantum}%
  \BibitemOpen
  \bibfield  {author} {\bibinfo {author} {\bibfnamefont {M.~M.}\ \bibnamefont {Wilde}},\ }\href {https://doi.org/https://doi.org/10.1017/9781316809976} {\emph {\bibinfo {title} {Quantum Information Theory}}},\ \bibinfo {edition} {2nd}\ ed.\ (\bibinfo  {publisher} {Cambridge University Press, Cambridge, UK},\ \bibinfo {year} {2017})\BibitemShut {NoStop}%
\bibitem [{\citenamefont {Watrous}(2018)}]{watrous2018theory}%
  \BibitemOpen
  \bibfield  {author} {\bibinfo {author} {\bibfnamefont {J.}~\bibnamefont {Watrous}},\ }\href {https://doi.org/10.1017/9781316848142} {\emph {\bibinfo {title} {The Theory of Quantum Information}}}\ (\bibinfo  {publisher} {Cambridge University Press, Cambridge, UK},\ \bibinfo {year} {2018})\BibitemShut {NoStop}%
\bibitem [{\citenamefont {Holevo}\ and\ \citenamefont {Kuznetsova}(2020)}]{Holevo2020}%
  \BibitemOpen
  \bibfield  {author} {\bibinfo {author} {\bibfnamefont {A.~S.}\ \bibnamefont {Holevo}}\ and\ \bibinfo {author} {\bibfnamefont {A.~A.}\ \bibnamefont {Kuznetsova}},\ }\bibfield  {title} {\bibinfo {title} {Information capacity of continuous variable measurement channel},\ }\href {https://doi.org/10.1088/1751-8121/ab7df8} {\bibfield  {journal} {\bibinfo  {journal} {J. Phys. A: Math. Theor.}\ }\textbf {\bibinfo {volume} {53}},\ \bibinfo {pages} {175304} (\bibinfo {year} {2020})}\BibitemShut {NoStop}%
\bibitem [{\citenamefont {Ozawa}(1984)}]{ozawa1984quantum}%
  \BibitemOpen
  \bibfield  {author} {\bibinfo {author} {\bibfnamefont {M.}~\bibnamefont {Ozawa}},\ }\bibfield  {title} {\bibinfo {title} {Quantum measuring processes of continuous observables},\ }\href {https://doi.org/10.1063/1.526000} {\bibfield  {journal} {\bibinfo  {journal} {J. Math. Phys.}\ }\textbf {\bibinfo {volume} {25}},\ \bibinfo {pages} {79} (\bibinfo {year} {1984})}\BibitemShut {NoStop}%
\bibitem [{\citenamefont {Watrous}(2009)}]{watrous2009semidefinite}%
  \BibitemOpen
  \bibfield  {author} {\bibinfo {author} {\bibfnamefont {J.}~\bibnamefont {Watrous}},\ }\bibfield  {title} {\bibinfo {title} {Semidefinite programs for completely bounded norms},\ }\href {https://doi.org/10.4086/toc.2009.v005a011} {\bibfield  {journal} {\bibinfo  {journal} {Theory of Computing}\ }\textbf {\bibinfo {volume} {5}},\ \bibinfo {pages} {217} (\bibinfo {year} {2009})}\BibitemShut {NoStop}%
\bibitem [{\citenamefont {Ben-Aroya}\ and\ \citenamefont {Ta-Shma}(2010)}]{ben-aroya2010on}%
  \BibitemOpen
  \bibfield  {author} {\bibinfo {author} {\bibfnamefont {A.}~\bibnamefont {Ben-Aroya}}\ and\ \bibinfo {author} {\bibfnamefont {A.}~\bibnamefont {Ta-Shma}},\ }\bibfield  {title} {\bibinfo {title} {On the complexity of approximating the diamond norm},\ }\href {https://doi.org/10.26421/QIC10.1-2-6} {\bibfield  {journal} {\bibinfo  {journal} {Quantum Information and Computation}\ }\textbf {\bibinfo {volume} {10}},\ \bibinfo {pages} {77} (\bibinfo {year} {2010})}\BibitemShut {NoStop}%
\bibitem [{\citenamefont {Watrous}(2013)}]{watrous2013simpler}%
  \BibitemOpen
  \bibfield  {author} {\bibinfo {author} {\bibfnamefont {J.}~\bibnamefont {Watrous}},\ }\bibfield  {title} {\bibinfo {title} {Simpler semidefinite programs for completely bounded norms},\ }\bibfield  {journal} {\bibinfo  {journal} {Chicago J. Theor. Comput. Sci.}\ }\textbf {\bibinfo {volume} {2013}},\ \href {https://doi.org/10.4086/cjtcs.2013.008} {10.4086/cjtcs.2013.008} (\bibinfo {year} {2013})\BibitemShut {NoStop}%
\bibitem [{\citenamefont {Boyd}\ and\ \citenamefont {Vandenberghe}(2004)}]{boyd2004convex}%
  \BibitemOpen
  \bibfield  {author} {\bibinfo {author} {\bibfnamefont {S.}~\bibnamefont {Boyd}}\ and\ \bibinfo {author} {\bibfnamefont {L.}~\bibnamefont {Vandenberghe}},\ }\href {https://doi.org/10.1017/CBO9780511804441} {\emph {\bibinfo {title} {Convex Optimization}}}\ (\bibinfo  {publisher} {Cambridge University Press, Cambridge, UK},\ \bibinfo {year} {2004})\BibitemShut {NoStop}%
\bibitem [{\citenamefont {Heinosaari}\ \emph {et~al.}(2015)\citenamefont {Heinosaari}, \citenamefont {Kiukas},\ and\ \citenamefont {Reitzner}}]{Heinosaari2015}%
  \BibitemOpen
  \bibfield  {author} {\bibinfo {author} {\bibfnamefont {T.}~\bibnamefont {Heinosaari}}, \bibinfo {author} {\bibfnamefont {J.}~\bibnamefont {Kiukas}},\ and\ \bibinfo {author} {\bibfnamefont {D.}~\bibnamefont {Reitzner}},\ }\bibfield  {title} {\bibinfo {title} {Noise robustness of the incompatibility of quantum measurements},\ }\href {https://doi.org/10.1103/PhysRevA.92.022115} {\bibfield  {journal} {\bibinfo  {journal} {Phys. Rev. A}\ }\textbf {\bibinfo {volume} {92}},\ \bibinfo {pages} {022115} (\bibinfo {year} {2015})}\BibitemShut {NoStop}%
\bibitem [{\citenamefont {Uola}\ \emph {et~al.}(2015)\citenamefont {Uola}, \citenamefont {Budroni}, \citenamefont {G\"uhne},\ and\ \citenamefont {Pellonp\"a\"a}}]{Uola2015}%
  \BibitemOpen
  \bibfield  {author} {\bibinfo {author} {\bibfnamefont {R.}~\bibnamefont {Uola}}, \bibinfo {author} {\bibfnamefont {C.}~\bibnamefont {Budroni}}, \bibinfo {author} {\bibfnamefont {O.}~\bibnamefont {G\"uhne}},\ and\ \bibinfo {author} {\bibfnamefont {J.-P.}\ \bibnamefont {Pellonp\"a\"a}},\ }\bibfield  {title} {\bibinfo {title} {One-to-one mapping between steering and joint measurability problems},\ }\href {https://doi.org/10.1103/PhysRevLett.115.230402} {\bibfield  {journal} {\bibinfo  {journal} {Phys. Rev. Lett.}\ }\textbf {\bibinfo {volume} {115}},\ \bibinfo {pages} {230402} (\bibinfo {year} {2015})}\BibitemShut {NoStop}%
\bibitem [{\citenamefont {Haapasalo}(2015)}]{Haapasalo2015}%
  \BibitemOpen
  \bibfield  {author} {\bibinfo {author} {\bibfnamefont {E.}~\bibnamefont {Haapasalo}},\ }\bibfield  {title} {\bibinfo {title} {Robustness of incompatibility for quantum devices},\ }\href {https://doi.org/10.1088/1751-8113/48/25/255303} {\bibfield  {journal} {\bibinfo  {journal} {J. Phys. A: Math. Theor.}\ }\textbf {\bibinfo {volume} {48}},\ \bibinfo {pages} {255303} (\bibinfo {year} {2015})}\BibitemShut {NoStop}%
\bibitem [{\citenamefont {Designolle}\ \emph {et~al.}(2019)\citenamefont {Designolle}, \citenamefont {Farkas},\ and\ \citenamefont {Kaniewski}}]{Designolle2019}%
  \BibitemOpen
  \bibfield  {author} {\bibinfo {author} {\bibfnamefont {S.}~\bibnamefont {Designolle}}, \bibinfo {author} {\bibfnamefont {M.}~\bibnamefont {Farkas}},\ and\ \bibinfo {author} {\bibfnamefont {J.}~\bibnamefont {Kaniewski}},\ }\bibfield  {title} {\bibinfo {title} {Incompatibility robustness of quantum measurements: a unified framework},\ }\href {https://doi.org/10.1088/1367-2630/ab5020} {\bibfield  {journal} {\bibinfo  {journal} {New J. Phys.}\ }\textbf {\bibinfo {volume} {21}},\ \bibinfo {pages} {113053} (\bibinfo {year} {2019})}\BibitemShut {NoStop}%
\bibitem [{\citenamefont {Guerini}\ \emph {et~al.}(2017)\citenamefont {Guerini}, \citenamefont {Bavaresco}, \citenamefont {Terra~Cunha},\ and\ \citenamefont {Ac{í}n}}]{Guerini2017}%
  \BibitemOpen
  \bibfield  {author} {\bibinfo {author} {\bibfnamefont {L.}~\bibnamefont {Guerini}}, \bibinfo {author} {\bibfnamefont {J.}~\bibnamefont {Bavaresco}}, \bibinfo {author} {\bibfnamefont {M.}~\bibnamefont {Terra~Cunha}},\ and\ \bibinfo {author} {\bibfnamefont {A.}~\bibnamefont {Ac{í}n}},\ }\bibfield  {title} {\bibinfo {title} {Operational framework for quantum measurement simulability},\ }\href {https://doi.org/10.1063/1.4994303} {\bibfield  {journal} {\bibinfo  {journal} {J. Math. Phys.}\ }\textbf {\bibinfo {volume} {58}},\ \bibinfo {pages} {092102} (\bibinfo {year} {2017})}\BibitemShut {NoStop}%
\bibitem [{sup()}]{supplement}%
  \BibitemOpen
  \href@noop {} {\ }\bibinfo {note} {\!\!See Supplemental Material.}\BibitemShut {Stop}%
\bibitem [{\citenamefont {Arthurs}\ and\ \citenamefont {Kelly~Jr.}(1965)}]{Arthurs1965}%
  \BibitemOpen
  \bibfield  {author} {\bibinfo {author} {\bibfnamefont {E.}~\bibnamefont {Arthurs}}\ and\ \bibinfo {author} {\bibfnamefont {J.~L.}\ \bibnamefont {Kelly~Jr.}},\ }\bibfield  {title} {\bibinfo {title} {On the simultaneous measurement of a pair of conjugate observables},\ }\href {https://doi.org/10.1002/j.1538-7305.1965.tb01684.x} {\bibfield  {journal} {\bibinfo  {journal} {Bell Syst. Tech. J.}\ }\textbf {\bibinfo {volume} {44}},\ \bibinfo {pages} {725} (\bibinfo {year} {1965})}\BibitemShut {NoStop}%
\bibitem [{\citenamefont {Arthurs}\ and\ \citenamefont {Goodman}(1988)}]{Arthurs1988}%
  \BibitemOpen
  \bibfield  {author} {\bibinfo {author} {\bibfnamefont {E.}~\bibnamefont {Arthurs}}\ and\ \bibinfo {author} {\bibfnamefont {M.~S.}\ \bibnamefont {Goodman}},\ }\bibfield  {title} {\bibinfo {title} {Quantum correlations: A generalized {H}eisenberg uncertainty relation},\ }\href {https://doi.org/10.1103/PhysRevLett.60.2447} {\bibfield  {journal} {\bibinfo  {journal} {Phys. Rev. Lett.}\ }\textbf {\bibinfo {volume} {60}},\ \bibinfo {pages} {2447} (\bibinfo {year} {1988})}\BibitemShut {NoStop}%
\bibitem [{\citenamefont {Busch}\ and\ \citenamefont {Lahti}(1984)}]{Busch1984}%
  \BibitemOpen
  \bibfield  {author} {\bibinfo {author} {\bibfnamefont {P.}~\bibnamefont {Busch}}\ and\ \bibinfo {author} {\bibfnamefont {P.~J.}\ \bibnamefont {Lahti}},\ }\bibfield  {title} {\bibinfo {title} {On various joint measurements of position and momentum observables in quantum theory},\ }\href {https://doi.org/10.1103/PhysRevD.29.1634} {\bibfield  {journal} {\bibinfo  {journal} {Phys. Rev. D}\ }\textbf {\bibinfo {volume} {29}},\ \bibinfo {pages} {1634} (\bibinfo {year} {1984})}\BibitemShut {NoStop}%
\bibitem [{\citenamefont {Werner}(2004)}]{Werner2004}%
  \BibitemOpen
  \bibfield  {author} {\bibinfo {author} {\bibfnamefont {R.~F.}\ \bibnamefont {Werner}},\ }\bibfield  {title} {\bibinfo {title} {The uncertainty relation for joint measurement of position and momentum},\ }\href {https://doi.org/10.26421/QIC4.6-7-13} {\bibfield  {journal} {\bibinfo  {journal} {Quantum Info. Comput.}\ }\textbf {\bibinfo {volume} {4}},\ \bibinfo {pages} {546–562} (\bibinfo {year} {2004})}\BibitemShut {NoStop}%
\bibitem [{\citenamefont {Busch}\ and\ \citenamefont {Pearson}(2007)}]{Busch2007}%
  \BibitemOpen
  \bibfield  {author} {\bibinfo {author} {\bibfnamefont {P.}~\bibnamefont {Busch}}\ and\ \bibinfo {author} {\bibfnamefont {D.~B.}\ \bibnamefont {Pearson}},\ }\bibfield  {title} {\bibinfo {title} {Universal joint-measurement uncertainty relation for error bars},\ }\href {https://doi.org/10.1063/1.2759831} {\bibfield  {journal} {\bibinfo  {journal} {J. Math. Phys.}\ }\textbf {\bibinfo {volume} {48}},\ \bibinfo {pages} {082103} (\bibinfo {year} {2007})}\BibitemShut {NoStop}%
\bibitem [{\citenamefont {Busch}\ \emph {et~al.}(2007)\citenamefont {Busch}, \citenamefont {Heinonen},\ and\ \citenamefont {Lahti}}]{Busch2007a}%
  \BibitemOpen
  \bibfield  {author} {\bibinfo {author} {\bibfnamefont {P.}~\bibnamefont {Busch}}, \bibinfo {author} {\bibfnamefont {T.}~\bibnamefont {Heinonen}},\ and\ \bibinfo {author} {\bibfnamefont {P.}~\bibnamefont {Lahti}},\ }\bibfield  {title} {\bibinfo {title} {{H}eisenberg's uncertainty principle},\ }\href {https://doi.org/https://doi.org/10.1016/j.physrep.2007.05.006} {\bibfield  {journal} {\bibinfo  {journal} {Phys. Rep.}\ }\textbf {\bibinfo {volume} {452}},\ \bibinfo {pages} {155 } (\bibinfo {year} {2007})}\BibitemShut {NoStop}%
\bibitem [{\citenamefont {Busch}(1986)}]{Busch1986}%
  \BibitemOpen
  \bibfield  {author} {\bibinfo {author} {\bibfnamefont {P.}~\bibnamefont {Busch}},\ }\bibfield  {title} {\bibinfo {title} {Unsharp reality and joint measurements for spin observables},\ }\href {https://doi.org/10.1103/PhysRevD.33.2253} {\bibfield  {journal} {\bibinfo  {journal} {Phys. Rev. D}\ }\textbf {\bibinfo {volume} {33}},\ \bibinfo {pages} {2253} (\bibinfo {year} {1986})}\BibitemShut {NoStop}%
\bibitem [{\citenamefont {Busch}\ and\ \citenamefont {Heinosaari}(2008)}]{Busch2008}%
  \BibitemOpen
  \bibfield  {author} {\bibinfo {author} {\bibfnamefont {P.}~\bibnamefont {Busch}}\ and\ \bibinfo {author} {\bibfnamefont {T.}~\bibnamefont {Heinosaari}},\ }\bibfield  {title} {\bibinfo {title} {Approximate joint measurements of qubit observables},\ }\href {https://doi.org/10.26421/QIC8.8-9-9} {\bibfield  {journal} {\bibinfo  {journal} {Quantum Inform. Compu.}\ }\textbf {\bibinfo {volume} {8}},\ \bibinfo {pages} {797–818} (\bibinfo {year} {2008})}\BibitemShut {NoStop}%
\bibitem [{\citenamefont {Miyadera}\ and\ \citenamefont {Imai}(2008)}]{Miyadera2008}%
  \BibitemOpen
  \bibfield  {author} {\bibinfo {author} {\bibfnamefont {T.}~\bibnamefont {Miyadera}}\ and\ \bibinfo {author} {\bibfnamefont {H.}~\bibnamefont {Imai}},\ }\bibfield  {title} {\bibinfo {title} {{H}eisenberg's uncertainty principle for simultaneous measurement of positive-operator-valued measures},\ }\href {https://doi.org/10.1103/PhysRevA.78.052119} {\bibfield  {journal} {\bibinfo  {journal} {Phys. Rev. A}\ }\textbf {\bibinfo {volume} {78}},\ \bibinfo {pages} {052119} (\bibinfo {year} {2008})}\BibitemShut {NoStop}%
\bibitem [{\citenamefont {Takakura}\ and\ \citenamefont {Miyadera}(2020)}]{Takakura2020}%
  \BibitemOpen
  \bibfield  {author} {\bibinfo {author} {\bibfnamefont {R.}~\bibnamefont {Takakura}}\ and\ \bibinfo {author} {\bibfnamefont {T.}~\bibnamefont {Miyadera}},\ }\bibfield  {title} {\bibinfo {title} {Preparation uncertainty implies measurement uncertainty in a class of generalized probabilistic theories},\ }\href {https://doi.org/10.1063/5.0017854} {\bibfield  {journal} {\bibinfo  {journal} {J. Math. Phys.}\ }\textbf {\bibinfo {volume} {61}},\ \bibinfo {pages} {082203} (\bibinfo {year} {2020})}\BibitemShut {NoStop}%
\bibitem [{\citenamefont {Ozawa}(2019)}]{ozawa2019soundness}%
  \BibitemOpen
  \bibfield  {author} {\bibinfo {author} {\bibfnamefont {M.}~\bibnamefont {Ozawa}},\ }\bibfield  {title} {\bibinfo {title} {Soundness and completeness of quantum root-mean-square errors},\ }\href {https://doi.org/10.1038/s41534-018-0113-z} {\bibfield  {journal} {\bibinfo  {journal} {npj Quantum Inf.}\ }\textbf {\bibinfo {volume} {5}},\ \bibinfo {pages} {1} (\bibinfo {year} {2019})}\BibitemShut {NoStop}%
\bibitem [{\citenamefont {Mitra}\ \emph {et~al.}(2025)\citenamefont {Mitra}, \citenamefont {Mukherjee},\ and\ \citenamefont {Lee}}]{Mitra2025}%
  \BibitemOpen
  \bibfield  {author} {\bibinfo {author} {\bibfnamefont {A.}~\bibnamefont {Mitra}}, \bibinfo {author} {\bibfnamefont {S.}~\bibnamefont {Mukherjee}},\ and\ \bibinfo {author} {\bibfnamefont {C.}~\bibnamefont {Lee}},\ }\bibfield  {title} {\bibinfo {title} {Distance-based measures and epsilon measures for measurement-based quantum resources},\ }\href {https://doi.org/10.1103/wwn7-j8df} {\bibfield  {journal} {\bibinfo  {journal} {Phys. Rev. A}\ }\textbf {\bibinfo {volume} {112}},\ \bibinfo {pages} {062233} (\bibinfo {year} {2025})}\BibitemShut {NoStop}%
\bibitem [{\citenamefont {Ghai}\ and\ \citenamefont {Mitra}(2025)}]{Ghai2025}%
  \BibitemOpen
  \bibfield  {author} {\bibinfo {author} {\bibfnamefont {J.}~\bibnamefont {Ghai}}\ and\ \bibinfo {author} {\bibfnamefont {A.}~\bibnamefont {Mitra}},\ }\href {https://arxiv.org/abs/2508.09134} {\bibinfo {title} {Instrument-based quantum resources: quantification, hierarchies and towards constructing resource theories}} (\bibinfo {year} {2025}),\ \Eprint {https://arxiv.org/abs/2508.09134} {arXiv:2508.09134 [quant-ph]} \BibitemShut {NoStop}%
\bibitem [{\citenamefont {Busch}(2009{\natexlab{b}})}]{Busch2009}%
  \BibitemOpen
  \bibfield  {author} {\bibinfo {author} {\bibfnamefont {P.}~\bibnamefont {Busch}},\ }\bibfield  {title} {\bibinfo {title} {On the sharpness and bias of quantum effects},\ }\href {https://doi.org/https://doi.org/10.1007/s10701-009-9287-8} {\bibfield  {journal} {\bibinfo  {journal} {Found. Phys.}\ }\textbf {\bibinfo {volume} {39}},\ \bibinfo {pages} {712} (\bibinfo {year} {2009}{\natexlab{b}})}\BibitemShut {NoStop}%
\bibitem [{\citenamefont {von Neumann}(1955)}]{von1955mathematical}%
  \BibitemOpen
  \bibfield  {author} {\bibinfo {author} {\bibfnamefont {J.}~\bibnamefont {von Neumann}},\ }\href@noop {} {\emph {\bibinfo {title} {Mathematical foundations of quantum mechanics}}}\ (\bibinfo  {publisher} {Princeton University Press, Princeton, NJ},\ \bibinfo {year} {1955})\BibitemShut {NoStop}%
\bibitem [{\citenamefont {Kennard}(1927)}]{kennard1927zur}%
  \BibitemOpen
  \bibfield  {author} {\bibinfo {author} {\bibfnamefont {E.~H.}\ \bibnamefont {Kennard}},\ }\bibfield  {title} {\bibinfo {title} {Zur quantenmechanik einfacher bewegungstypen},\ }\href {https://doi.org/10.1007/BF01391200} {\bibfield  {journal} {\bibinfo  {journal} {Zeitschrift f{\"u}r Physik}\ }\textbf {\bibinfo {volume} {44}},\ \bibinfo {pages} {326} (\bibinfo {year} {1927})}\BibitemShut {NoStop}%
\bibitem [{\citenamefont {Robertson}(1929)}]{robertson1929}%
  \BibitemOpen
  \bibfield  {author} {\bibinfo {author} {\bibfnamefont {H.~P.}\ \bibnamefont {Robertson}},\ }\bibfield  {title} {\bibinfo {title} {The uncertainty principle},\ }\href {https://doi.org/10.1103/PhysRev.34.163} {\bibfield  {journal} {\bibinfo  {journal} {Phys. Rev.}\ }\textbf {\bibinfo {volume} {34}},\ \bibinfo {pages} {163} (\bibinfo {year} {1929})}\BibitemShut {NoStop}%
\bibitem [{\citenamefont {Mitra}\ and\ \citenamefont {Farkas}(2022)}]{Mitra2022}%
  \BibitemOpen
  \bibfield  {author} {\bibinfo {author} {\bibfnamefont {A.}~\bibnamefont {Mitra}}\ and\ \bibinfo {author} {\bibfnamefont {M.}~\bibnamefont {Farkas}},\ }\bibfield  {title} {\bibinfo {title} {Compatibility of quantum instruments},\ }\href {https://doi.org/10.1103/PhysRevA.105.052202} {\bibfield  {journal} {\bibinfo  {journal} {Phys. Rev. A}\ }\textbf {\bibinfo {volume} {105}},\ \bibinfo {pages} {052202} (\bibinfo {year} {2022})}\BibitemShut {NoStop}%
\bibitem [{\citenamefont {Mitra}\ and\ \citenamefont {Farkas}(2023)}]{Mitra2023}%
  \BibitemOpen
  \bibfield  {author} {\bibinfo {author} {\bibfnamefont {A.}~\bibnamefont {Mitra}}\ and\ \bibinfo {author} {\bibfnamefont {M.}~\bibnamefont {Farkas}},\ }\bibfield  {title} {\bibinfo {title} {Characterizing and quantifying the incompatibility of quantum instruments},\ }\href {https://doi.org/10.1103/PhysRevA.107.032217} {\bibfield  {journal} {\bibinfo  {journal} {Phys. Rev. A}\ }\textbf {\bibinfo {volume} {107}},\ \bibinfo {pages} {032217} (\bibinfo {year} {2023})}\BibitemShut {NoStop}%
\bibitem [{\citenamefont {Lepp{\"{a}}j{\"{a}}rvi}\ and\ \citenamefont {Sedl{\'{a}}k}(2024)}]{Leppaejaervi2024}%
  \BibitemOpen
  \bibfield  {author} {\bibinfo {author} {\bibfnamefont {L.}~\bibnamefont {Lepp{\"{a}}j{\"{a}}rvi}}\ and\ \bibinfo {author} {\bibfnamefont {M.}~\bibnamefont {Sedl{\'{a}}k}},\ }\bibfield  {title} {\bibinfo {title} {Incompatibility of quantum instruments},\ }\href {https://doi.org/10.22331/q-2024-02-12-1246} {\bibfield  {journal} {\bibinfo  {journal} {{Quantum}}\ }\textbf {\bibinfo {volume} {8}},\ \bibinfo {pages} {1246} (\bibinfo {year} {2024})}\BibitemShut {NoStop}%
\bibitem [{\citenamefont {Buscemi}\ \emph {et~al.}(2023)\citenamefont {Buscemi}, \citenamefont {Kobayashi}, \citenamefont {Minagawa}, \citenamefont {Perinotti},\ and\ \citenamefont {Tosini}}]{buscemi2023unifying}%
  \BibitemOpen
  \bibfield  {author} {\bibinfo {author} {\bibfnamefont {F.}~\bibnamefont {Buscemi}}, \bibinfo {author} {\bibfnamefont {K.}~\bibnamefont {Kobayashi}}, \bibinfo {author} {\bibfnamefont {S.}~\bibnamefont {Minagawa}}, \bibinfo {author} {\bibfnamefont {P.}~\bibnamefont {Perinotti}},\ and\ \bibinfo {author} {\bibfnamefont {A.}~\bibnamefont {Tosini}},\ }\bibfield  {title} {\bibinfo {title} {Unifying different notions of quantum incompatibility into a strict hierarchy of resource theories of communication},\ }\href {https://doi.org/10.22331/q-2023-06-07-1035} {\bibfield  {journal} {\bibinfo  {journal} {{Quantum}}\ }\textbf {\bibinfo {volume} {7}},\ \bibinfo {pages} {1035} (\bibinfo {year} {2023})}\BibitemShut {NoStop}%
\bibitem [{\citenamefont {Ji}\ and\ \citenamefont {Chitambar}(2024)}]{Ji2024}%
  \BibitemOpen
  \bibfield  {author} {\bibinfo {author} {\bibfnamefont {K.}~\bibnamefont {Ji}}\ and\ \bibinfo {author} {\bibfnamefont {E.}~\bibnamefont {Chitambar}},\ }\bibfield  {title} {\bibinfo {title} {Incompatibility as a resource for programmable quantum instruments},\ }\href {https://doi.org/10.1103/PRXQuantum.5.010340} {\bibfield  {journal} {\bibinfo  {journal} {PRX Quantum}\ }\textbf {\bibinfo {volume} {5}},\ \bibinfo {pages} {010340} (\bibinfo {year} {2024})}\BibitemShut {NoStop}%
\bibitem [{\citenamefont {Uola}\ \emph {et~al.}(2020{\natexlab{b}})\citenamefont {Uola}, \citenamefont {Bullock}, \citenamefont {Kraft}, \citenamefont {Pellonp\"a\"a},\ and\ \citenamefont {Brunner}}]{Uola2020a}%
  \BibitemOpen
  \bibfield  {author} {\bibinfo {author} {\bibfnamefont {R.}~\bibnamefont {Uola}}, \bibinfo {author} {\bibfnamefont {T.}~\bibnamefont {Bullock}}, \bibinfo {author} {\bibfnamefont {T.}~\bibnamefont {Kraft}}, \bibinfo {author} {\bibfnamefont {J.-P.}\ \bibnamefont {Pellonp\"a\"a}},\ and\ \bibinfo {author} {\bibfnamefont {N.}~\bibnamefont {Brunner}},\ }\bibfield  {title} {\bibinfo {title} {All quantum resources provide an advantage in exclusion tasks},\ }\href {https://doi.org/10.1103/PhysRevLett.125.110402} {\bibfield  {journal} {\bibinfo  {journal} {Phys. Rev. Lett.}\ }\textbf {\bibinfo {volume} {125}},\ \bibinfo {pages} {110402} (\bibinfo {year} {2020}{\natexlab{b}})}\BibitemShut {NoStop}%
\bibitem [{\citenamefont {Sudarsanan~Ragini}\ and\ \citenamefont {Sazim}(2024)}]{SudarsananRagini2024}%
  \BibitemOpen
  \bibfield  {author} {\bibinfo {author} {\bibfnamefont {N.}~\bibnamefont {Sudarsanan~Ragini}}\ and\ \bibinfo {author} {\bibfnamefont {S.}~\bibnamefont {Sazim}},\ }\bibfield  {title} {\bibinfo {title} {Higher-order incompatibility improves distinguishability of causal quantum networks},\ }\href {https://doi.org/10.1088/1367-2630/ad93f6} {\bibfield  {journal} {\bibinfo  {journal} {New Journal of Physics}\ }\textbf {\bibinfo {volume} {26}},\ \bibinfo {pages} {123003} (\bibinfo {year} {2024})}\BibitemShut {NoStop}%
\bibitem [{\citenamefont {Choi}(1975)}]{choi1975completely}%
  \BibitemOpen
  \bibfield  {author} {\bibinfo {author} {\bibfnamefont {M.-D.}\ \bibnamefont {Choi}},\ }\bibfield  {title} {\bibinfo {title} {Completely positive linear maps on complex matrices},\ }\href {https://doi.org/https://doi.org/10.1016/0024-3795(75)90075-0} {\bibfield  {journal} {\bibinfo  {journal} {Linear algebra and its applications}\ }\textbf {\bibinfo {volume} {10}},\ \bibinfo {pages} {285} (\bibinfo {year} {1975})}\BibitemShut {NoStop}%
\bibitem [{\citenamefont {Schluck}\ \emph {et~al.}(2023)\citenamefont {Schluck}, \citenamefont {Murta}, \citenamefont {Kampermann}, \citenamefont {Bruß},\ and\ \citenamefont {Wyderka}}]{schluck2023continuity}%
  \BibitemOpen
  \bibfield  {author} {\bibinfo {author} {\bibfnamefont {J.}~\bibnamefont {Schluck}}, \bibinfo {author} {\bibfnamefont {G.}~\bibnamefont {Murta}}, \bibinfo {author} {\bibfnamefont {H.}~\bibnamefont {Kampermann}}, \bibinfo {author} {\bibfnamefont {D.}~\bibnamefont {Bruß}},\ and\ \bibinfo {author} {\bibfnamefont {N.}~\bibnamefont {Wyderka}},\ }\bibfield  {title} {\bibinfo {title} {Continuity of robustness measures in quantum resource theories},\ }\href {https://doi.org/10.1088/1751-8121/acd500} {\bibfield  {journal} {\bibinfo  {journal} {Journal of Physics A: Mathematical and Theoretical}\ }\textbf {\bibinfo {volume} {56}},\ \bibinfo {pages} {255303} (\bibinfo {year} {2023})}\BibitemShut {NoStop}%
\bibitem [{\citenamefont {Skrzypczyk}\ and\ \citenamefont {Cavalcanti}(2023)}]{skrzypczyk2023semidefinite}%
  \BibitemOpen
  \bibfield  {author} {\bibinfo {author} {\bibfnamefont {P.}~\bibnamefont {Skrzypczyk}}\ and\ \bibinfo {author} {\bibfnamefont {D.}~\bibnamefont {Cavalcanti}},\ }\href {https://doi.org/10.1088/978-0-7503-3343-6} {\emph {\bibinfo {title} {Semidefinite Programming in Quantum Information Science}}},\ 2053-2563\ (\bibinfo  {publisher} {IOP Publishing, Bristol, UK},\ \bibinfo {year} {2023})\BibitemShut {NoStop}%
\bibitem [{\citenamefont {Jamio{\l}kowski}(1972)}]{jamiolkowski1972linear}%
  \BibitemOpen
  \bibfield  {author} {\bibinfo {author} {\bibfnamefont {A.}~\bibnamefont {Jamio{\l}kowski}},\ }\bibfield  {title} {\bibinfo {title} {Linear transformations which preserve trace and positive semidefiniteness of operators},\ }\href {https://doi.org/https://doi.org/10.1016/0034-4877(72)90011-0} {\bibfield  {journal} {\bibinfo  {journal} {Reports on Mathematical Physics}\ }\textbf {\bibinfo {volume} {3}},\ \bibinfo {pages} {275} (\bibinfo {year} {1972})}\BibitemShut {NoStop}%
\bibitem [{\citenamefont {Slater}(2014)}]{slater2014lagrange}%
  \BibitemOpen
  \bibfield  {author} {\bibinfo {author} {\bibfnamefont {M.}~\bibnamefont {Slater}},\ }\bibinfo {title} {Lagrange multipliers revisited},\ in\ \href {https://doi.org/10.1007/978-3-0348-0439-4_14} {\emph {\bibinfo {booktitle} {Traces and Emergence of Nonlinear Programming}}},\ \bibinfo {editor} {edited by\ \bibinfo {editor} {\bibfnamefont {G.}~\bibnamefont {Giorgi}}\ and\ \bibinfo {editor} {\bibfnamefont {T.~H.}\ \bibnamefont {Kjeldsen}}}\ (\bibinfo  {publisher} {Springer Basel},\ \bibinfo {address} {Basel},\ \bibinfo {year} {2014})\ pp.\ \bibinfo {pages} {293--306},\ \bibinfo {note} {reprint of Cowles Commission Discussion Paper No.~403 (1950)}\BibitemShut {NoStop}%
\bibitem [{\citenamefont {Kraus}(1983)}]{kraus1983states}%
  \BibitemOpen
  \bibfield  {author} {\bibinfo {author} {\bibfnamefont {K.}~\bibnamefont {Kraus}},\ }\href {https://doi.org/https://doi.org/10.1007/3-540-12732-1} {\emph {\bibinfo {title} {States, {E}ffects, and {O}perations: {F}undamental {N}otions of {Q}uantum {T}heory. Lectures in {M}athematical {P}hysics at the {U}niversity of {T}exas at {A}ustin. Lecture {N}otes in {P}hysics}}},\ Vol.\ \bibinfo {volume} {190}\ (\bibinfo  {publisher} {Springer, Berlin, Heidelberg},\ \bibinfo {year} {1983})\BibitemShut {NoStop}%
\bibitem [{\citenamefont {Saini}\ \emph {et~al.}(2026)\citenamefont {Saini}, \citenamefont {Kiukas}, \citenamefont {Burgarth},\ and\ \citenamefont {Gilchrist}}]{saini2025completeness}%
  \BibitemOpen
  \bibfield  {author} {\bibinfo {author} {\bibfnamefont {R.}~\bibnamefont {Saini}}, \bibinfo {author} {\bibfnamefont {J.}~\bibnamefont {Kiukas}}, \bibinfo {author} {\bibfnamefont {D.}~\bibnamefont {Burgarth}},\ and\ \bibinfo {author} {\bibfnamefont {A.}~\bibnamefont {Gilchrist}},\ }\bibfield  {title} {\bibinfo {title} {Completeness stability of quantum measurements},\ }\href {https://doi.org/10.1103/wkj6-l7bf} {\bibfield  {journal} {\bibinfo  {journal} {Phys. Rev. A}\ }\textbf {\bibinfo {volume} {113}},\ \bibinfo {pages} {012424} (\bibinfo {year} {2026})}\BibitemShut {NoStop}%
\end{thebibliography}%
	
	\renewcommand{\theequation}{E\arabic{equation}}
	\setcounter{equation}{0}  
	
	\onecolumngrid
	\section*{End Matters}
	\twocolumngrid
	
	\textit{Appendix A: Proof of Theorem~\ref{thm:UR_GR}}---.
	For a linear map $\Psi_{\rA\to\rB}\colon\cL(\cH_\rA)\to\cL(\cH_\rB)$, its \emph{Choi operator} \cite{choi1975completely} is defined as  $C_{\rB\rR}(\Psi_{\rA\to\rB}):=(\Psi_{\rA\to\rB}\otimes\rmid_\rR)(\ketbra{\phi}{\phi}_{\rA\rR})$, where $\rR\simeq\rA$ is a reference system with $\dim\cH_\rR=\dim\cH_\rA=d$ and $\ket{\phi}_{\rA\rR}:=\sum_{i=0}^{d-1}\ket{i}_{\rA}\otimes\ket{i}_{\rR}$ is the unnormalized maximally entangled state.
	For a CPTP map $\Phi_{\rA\to\rB}$, it holds that $C_{\rB\rR}(\Phi_{\rA\to\rB})\succeq\zero_{\rB\rR}$ and $\Tr_\rB[C_{\rB\rR}(\Phi_{\rA\to\rB})]=\one_\rR$.

	For a pair $(\Lambda_{\rA\to\rB_1},\Xi_{\rA\to\rB_2})$ of channels, the RoI $R(\Lambda_{\rA\to\rB_1},\Xi_{\rA\to\rB_2})$ is equal to the solution for the following SDP:
	\begin{align}
		\underset{X_{\rB_1\rR},Y_{\rB_2\rR},V_\rR}{\maximize} \quad &\Tr[X_{\rB_1\rR}C_{\rB_1\rR}(\Lambda_{\rA\to\rB_1})]\notag\\
		&\qquad+\Tr[Y_{\rB_2\rR}C_{\rB_2\rR}(\Xi_{\rA\to\rB_2})]-1\notag\\
		\subto \quad 
		& \Tr [V_\rR]=1,\notag\\
		& \one_{\rB_1\rB_2}\otimes V_\rR \succeq X_{\rB_1\rR}\otimes \one_{\rB_2} +\one_{\rB_1}\otimes Y_{\rB_2\rR}\notag\\
		& X_{\rB_1\rR}\succeq \zero_{\rB_1\rR}, \quad Y_{\rB_2\rR}\succeq \zero_{\rB_2\rR}.\label{sdp1}
	\end{align}
	This SDP is obtained as the dual problem of the primal expressions of the RoI.
	The derivation is presented in Supplemental Material (see~\cite{supplement}), where we also discuss the RoI for other types of devices.
   	We establish a Lipschitz-type evaluation for the RoI of channels in terms of the diamond norm distance. 
   	We note that similar properties for robustness of states were investigated in Ref.~\cite{schluck2023continuity}.
	\begin{proposition}
		\label{prop:Lip}
		Let $(\Lambda_{\rA\to\rB_1}, \Xi_{\rA\to\rB_2}), (\Phi_{\rA\to\rB_1}, \Psi_{\rA\to\rB_2})$ be pairs of quantum channels.
		The RoI of these channels satisfies 
		\begin{equation}\label{eq:Lip}
			\begin{split}
				&2\big|R(\Lambda_{\rA\to\rB_1},\Xi_{\rA\to\rB_2})-R(\Phi_{\rA\to\rB_1},\Psi_{\rA\to\rB_2})\big|\\
				&\le\left\|\Lambda_{\rA\to\rB_1}-\Phi_{\rA\to\rB_1}\right\|_\diamond+\left\|\Xi_{\rA\to\rB_2}-\Psi_{\rA\to\rB_2}\right\|_\diamond.
			\end{split}
		\end{equation}
	\end{proposition}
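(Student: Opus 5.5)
The plan is to prove Proposition~\ref{prop:Lip} from the dual SDP \eqref{sdp1}. A dual-feasible point for one pair is automatically feasible for the other pair, because the constraints do not involve the channels. This turns the difference of robustnesses into a difference of linear functionals of Choi operators, which I would then bound by diamond norms. The argument is symmetric in the two pairs, so it suffices to show
\begin{equation*}
2\big(R(\Lambda_{\rA\to\rB_1},\Xi_{\rA\to\rB_2})-R(\Phi_{\rA\to\rB_1},\Psi_{\rA\to\rB_2})\big)\le\|\Lambda_{\rA\to\rB_1}-\Phi_{\rA\to\rB_1}\|_\diamond+\|\Xi_{\rA\to\rB_2}-\Psi_{\rA\to\rB_2}\|_\diamond .
\end{equation*}
Exchanging the roles of the two pairs then gives the absolute value.

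First, let $(X_{\rB_1\rR},Y_{\rB_2\rR},V_\rR)$ be an optimal solution of \eqref{sdp1} for the pair $(\Lambda_{\rA\to\rB_1},\Xi_{\rA\to\rB_2})$. The same triple is feasible for the SDP of $(\Phi_{\rA\to\rB_1},\Psi_{\rA\to\rB_2})$, so $R(\Phi,\Psi)\ge\Tr[X C(\Phi)]+\Tr[Y C(\Psi)]-1$. Subtracting this from the optimal value $R(\Lambda,\Xi)$ gives
\begin{equation*}
R(\Lambda,\Xi)-R(\Phi,\Psi)\le\Tr[X_{\rB_1\rR}C_{\rB_1\rR}(\Lambda-\Phi)]+\Tr[Y_{\rB_2\rR}C_{\rB_2\rR}(\Xi-\Psi)].
\end{equation*}
It therefore suffices to show $\Tr[X C(\Lambda-\Phi)]\le\frac12\|\Lambda-\Phi\|_\diamond$, and the analogous bound for $Y$.

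Second, I would extract operator bounds on $X$ and $Y$ from the constraints. Since $Y\succeq\zero$, the main constraint gives $X_{\rB_1\rR}\otimes\one_{\rB_2}\preceq\one_{\rB_1\rB_2}\otimes V_\rR$. Compressing with any unit vector on $\rB_2$ yields $\zero\preceq X_{\rB_1\rR}\preceq\one_{\rB_1}\otimes V_\rR$, and the same argument gives $\zero\preceq Y\preceq\one_{\rB_2}\otimes V_\rR$. This implies that the support of $X$ lies in $\cH_{\rB_1}\otimes\mathrm{supp}(V_\rR)$. Hence we can write
\begin{equation*}
X=(\one\otimes V^{1/2})\,M\,(\one\otimes V^{1/2}),\qquad \zero\preceq M\preceq\one,
\end{equation*}
with $M:=(\one\otimes V^{-1/2})X(\one\otimes V^{-1/2})$, where $V^{-1/2}$ is the inverse of $V^{1/2}$ on its support.

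Third, with $\Delta:=\Lambda-\Phi$, I would compute
\begin{equation*}
\Tr[XC(\Delta)]=\Tr\big[M\,(\Delta\otimes\rmid_\rR)(\rho_{\rA\rR})\big],\qquad \rho_{\rA\rR}:=(\one_\rA\otimes V^{1/2}_\rR)\ketbra{\phi}{\phi}(\one_\rA\otimes V^{1/2}_\rR).
\end{equation*}
Here $\rho$ is a normalized state because $\Tr_\rA$ of it equals $V_\rR$ and $\Tr V_\rR=1$. The operator $Z:=(\Delta\otimes\rmid_\rR)(\rho)$ is Hermitian and traceless, since $\Delta$ is a difference of trace-preserving maps. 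For any $\zero\preceq M\preceq\one$, taking the optimal $M$ to be the projector onto the positive part of $Z$ shows $\Tr[MZ]\le\frac12\|Z\|_1\le\frac12\|\Delta\|_\diamond$, using $\rR\simeq\rA$ in the diamond-norm supremum. The same bound holds for $Y$ with $\Xi-\Psi$, which proves the displayed inequality.

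The main obstacle I expect is the second step: justifying $X\preceq\one\otimes V$ and the factorization through $V^{1/2}$ when $V$ is singular. I would handle this by restricting to the support of $V$, which is legitimate by the support inclusion established in that step. Alternatively, one can replace $V$ by $(1-t)V+t\one/d$, which keeps the constraints satisfiable up to a vanishing loss in the objective, and let $t\to0$.
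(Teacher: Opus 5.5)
Your proof is correct and has the same skeleton as the paper's proof. Both use the dual SDP~\eqref{sdp1} and exploit that its feasible set does not depend on the channels, so that $R(\Lambda,\Xi)-R(\Phi,\Psi)$ is bounded by $\Tr[X C(\Lambda-\Phi)]+\Tr[Y C(\Xi-\Psi)]$. Both then decouple the two terms using $\zero\preceq X\preceq\one_{\rB_1}\otimes V_\rR$ and finish by symmetrizing. The paper's relaxation to $\sF_1,\sF_2$ uses this bound without giving the compression argument you supply.

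The difference is in how each term is bounded by half a diamond norm. The paper substitutes $\tilde X=2X-\one\otimes V$ and uses trace preservation to drop the $\one\otimes V$ contribution. It then invokes the known SDP characterization of the diamond norm, with constraint $-\one\otimes V\preceq Z\preceq\one\otimes V$, from the literature.

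You instead prove the needed inequality directly from the definition of the diamond norm. Factoring $X=(\one\otimes V^{1/2})M(\one\otimes V^{1/2})$ turns the dual variable into an explicit input state $\rho_{\rA\rR}$ with marginal $V_\rR$ and a two-outcome test $\zero\preceq M\preceq\one$. Tracelessness of $(\Delta\otimes\rmid_\rR)(\rho)$ then plays the role of trace preservation in the paper's proof and yields the factor $\frac12$ through the Helstrom-type bound $\Tr[MZ]\le\frac12\|Z\|_1$. In effect you reprove only the easy, weak-duality direction of the diamond-norm SDP, which is all either argument needs.

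The paper's version is shorter given the citation. Yours is self-contained, treats singular $V$ explicitly via the support inclusion, and shows the operational content: each dual-feasible point is a discrimination strategy with an optimized input state.
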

	\begin{proof}
		We introduce a set $\sF$ by
		\begin{equation}
			\begin{split}
				\sF&:=\{(X_{\rB_1\rR},Y_{\rB_2\rR},V_\rR)\mid\Tr [V_\rR]=1,
				\\
				&\ 
				\one_{\rB_1\rB2}\otimes V_\rR\succeq X_{\rB_1\rR}\otimes \one_{\rB_2}+\one_{\rB_1}\otimes Y_{\rB_2\rR},\\
				&\ X_{\rB_1\rR}\succeq\zero_{\rB_1\rR},\  Y_{\rB_2\rR}\succeq\zero_{\rB_2\rR}\}.
			\end{split}
		\end{equation}
		By using SDP~\eqref{sdp1}, we obtain
		\begin{widetext}
			\begin{align}
				&R(\Lambda_{\rA\to\rB_1}, \Xi_{\rA\to\rB_2})-R(\Phi_{\rA\to\rB_1},\Psi_{\rA\to\rB_2})\notag\\
				&=\max_{(X_{\rB_1\rR},Y_{\rB_2\rR},V_\rR)\in\sF}\big\{\Tr[X_{\rB_1\rR}C_{\rB_1\rR}(\Lambda_{\rA\to\rB_1})]+\Tr[Y_{\rB_2\rR}C_{\rB_2\rR}(\Xi_{\rA\to\rB_2})]-1\big\}\notag\\
				&\qquad\qquad\qquad-\max_{(X_{\rB_1\rR},Y_{\rB_2\rR},V_\rR)\in\sF}\big\{\Tr[X_{\rB_1\rR}C_{\rB_1\rR}(\Phi_{\rA\to\rB_1})]+\Tr[Y_{\rB_2\rR}C_{\rB_2\rR}(\Psi_{\rA\to\rB_2})]-1\big\}\\
				&\le\max_{(X_{\rB_1\rR},Y_{\rB_2\rR},V_\rR)\in\sF}
				\Big\{
				\Tr[X_{\rB_1\rR}C_{\rB_1\rR}(\Lambda_{\rA\to\rB_1}-\Phi_{\rA\to\rB_1})]+\Tr[Y_{\rB_2\rR}C_{\rB_2\rR}(\Xi_{\rA\to\rB_2}-\Psi_{\rA\to\rB_2})]\Big\}\\
				&\le
				\max_{(X_{\rB_1\rR},V_\rR)\in\sF_1}
				\Big\{
				\Tr[X_{\rB_1\rR}C_{\rB_1\rR}(\Lambda_{\rA\to\rB_1}-\Phi_{\rA\to\rB_1})]\Big\}
				+
				\max_{(Y_{\rB_2\rR},V_\rR)\in\sF_2}
				\Big\{
				\Tr[Y_{\rB_2\rR}C_{\rB_2\rR}(\Xi_{\rA\to\rB_2}-\Psi_{\rA\to\rB_2})]\Big\},\label{eq:prop_inequality2}
			\end{align}
            \end{widetext}
			where we introduced sets
			\begin{align}
				&\sF_1:=\{(X_{\rB_1\rR},V_\rR)|\Tr [V_\rR]=1,\\
                &\qquad\qquad\qquad
                \zero_{\rB_1\rR}\preceq X_{\rB_1\rR}\preceq\one_{\rB_1}\otimes V_\rR\},\\
				&\sF_2:=\{(Y_{\rB_2\rR},V_\rR)|\Tr [V_\rR]=1,\\
                &\qquad\qquad\qquad
                \zero_{\rB_2\rR}\preceq Y_{\rB_2\rR}\preceq\one_{\rB_2}\otimes V_\rR\}.
			\end{align}
			
            Let us evaluate the first term of Eq.~\eqref{eq:prop_inequality2}.
			We introduce another variable $\tilde{X}_{\rB_1\rR}:=2X_{\rB_1\rR}-\one_{\rB_1}\otimes V_\rR$ and rewrite the first term as
			\begin{align}
				&\max_{(X_{\rB_1\rR},V_\rR)\in\sF_1}
				\Big\{
				\Tr[X_{\rB_1\rR}C_{\rB_1\rR}(\Lambda_{\rA\to\rB_1}-\Phi_{\rA\to\rB_1})]\Big\}\\
				&=\frac{1}{2}\max_{(\tilde{X}_{\rB_1\rR},V_\rR)\in\tilde{\sF}_1}
				\Big\{
				\Tr[\tilde{X}_{\rB_1\rR}C_{\rB_1\rR}(\Lambda_{\rA\to\rB_1}-\Phi_{\rA\to\rB_1})]
				\\
                &+
				\Tr[(\one_{\rB_1}\otimes V_\rR)C_{\rB_1\rR}(\Lambda_{\rA\to\rB_1}-\Phi_{\rA\to\rB_1})]
				\Big\},
			\end{align}
			where
			\begin{align}
				&\tilde{\sF}_1:=\{(\tilde{X}_{\rB_1\rR},V_\rR)|
				\Tr [V_\rR]=1,\\
                &\qquad\qquad\qquad
                -\one_{\rB_1}\otimes V_\rR\preceq \tilde{X}_{\rB_1\rR}\preceq\one_{\rB_1}\otimes V_\rR\}.
			\end{align}
			Because $\Lambda_{\rA\to\rB_1}$ and $\Phi_{\rA\to\rB_1}$ are TP maps, we have $\Tr[(\one_{\rB_1}\otimes V_\rR)C_{\rB_1\rR}(\Lambda_{\rA\to\rB_1}-\Phi_{\rA\to\rB_1})=\Tr_\rR\big[V_\rR\Tr_{\rB_1}[C_{\rB_1\rR}(\Lambda_{\rA\to\rB_1}-\Phi_{\rA\to\rB_1})]\big]=\Tr[V_\rR\one_\rR]-\Tr[V_\rR\one_\rR]=0$.
			It implies that
            \begin{align}
				&\max_{(X_{\rB_1\rR},V_\rR)\in\sF_1}
				\Big\{
				\Tr[X_{\rB_1\rR}C_{\rB_1\rR}(\Lambda_{\rA\to\rB_1}-\Phi_{\rA\to\rB_1})]\Big\}\\
                &=\frac{1}{2}\max_{(\tilde{X}_{\rB_1\rR},V_\rR)\in\tilde{\sF}_1}
				\Big\{
				\Tr[\tilde{X}_{\rB_1\rR}C_{\rB_1\rR}(\Lambda_{\rA\to\rB_1}-\Phi_{\rA\to\rB_1})].
            \end{align}
            Therefore, if we also introduce a set
            \begin{align}
				&\tilde{\sF}_2:=\{(\tilde{Y}_{\rB_2\rR},V_\rR)|\Tr [V_\rR]=1,\\
                &\qquad\qquad\qquad
                -\one_{\rB_2}\otimes V_\rR\preceq \tilde{Y}_{\rB_2\rR}\preceq\one_{\rB_2}\otimes V_\rR\},
			\end{align}
            we can rewrite Eq.~\eqref{eq:prop_inequality2} as
			\begin{align}
				&R(\Lambda_{\rA\to\rB_1}, \Xi_{\rA\to\rB_2})-R(\Phi_{\rA\to\rB_1},\Psi_{\rA\to\rB_2})\\
				&\le
				\frac{1}{2}\max_{(\tilde{X}_{\rB_1\rR},V_\rR)\in\tilde{\sF}_1}
				\Big\{
				\Tr[\tilde{X}_{\rB_1\rR}C_{\rB_1\rR}(\Lambda_{\rA\to\rB_1}-\Phi_{\rA\to\rB_1})]\Big\}
				\\
                &+
				\frac{1}{2}\max_{(\tilde{Y}_{\rB_2\rR},V_\rR)\in\tilde{\sF}_2}
				\Big\{
				\Tr[\tilde{Y}_{\rB_2\rR}C_{\rB_2\rR}(\Xi_{\rA\to\rB_2}-\Psi_{\rA\to\rB_2})]\Big\}.
			\end{align}	
		We use the following SDP expression of the diamond norm distance for given two channels $\Pi_{\rA\to\rB},\Upsilon_{\rA\to\rB}\in\sC(\rA\to\rB)$ based on Ref.~\cite{skrzypczyk2023semidefinite}:
		\begin{equation}\label{eq:diamond_norm}
			\begin{split}
				\|\Pi_{\rA\to\rB}-\Upsilon_{\rA\to\rB}\|_\diamond\\
				=\underset{Z_{\rB\rR},V_\rR}\maximize \quad &\Tr[Z_{\rB\rR}C_{\rB\rR}(\Pi_{\rA\to\rB}-\Upsilon_{\rA\to\rB})]\\
				\subto\quad 
                &\Tr [V_\rR]=1,\\
                &-\one_\rB\otimes V_\rR\preceq Z_{\rB\rR}\preceq \one_\rB\otimes V_\rR.
			\end{split}
		\end{equation}
        In the end, we obtain
        \begin{equation}
            \begin{split}
                &R(\Lambda_{\rA\to\rB_1}, \Xi_{\rA\to\rB_2})-R(\Phi_{\rA\to\rB_1},\Psi_{\rA\to\rB_2})\\
                &\le
				\frac{1}{2}\Big(\left\|\Lambda_{\rA\to\rB_1}-\Phi_{\rA\to\rB_1}\right\|_\diamond+\left\|\Xi_{\rA\to\rB_2}-\Psi_{\rA\to\rB_2}\right\|_\diamond\Big).
            \end{split}
        \end{equation}
        Applying the same argument with the roles of $(\Lambda_{\rA\to\rB_1}, \Xi_{\rA\to\rB_2})$ and $(\Phi_{\rA\to\rB_1},\Psi_{\rA\to\rB_2})$ interchanged gives the same upper bound on $R(\Phi_{\rA\to\rB_1},\Psi_{\rA\to\rB_2})-R(\Lambda_{\rA\to\rB_1}, \Xi_{\rA\to\rB_2})$.
	\end{proof}
	
	\begin{proof}[Proof of Theorem~\ref{thm:UR_GR}]
        Replace $(\Phi_{\rA\to\rB_1},\!\Psi_{\rA\to\rB_2})$ in Eq.~\eqref{eq:Lip} with a compatible pair $(\Theta^{(1)}_{\rA\to\rB_1},\Theta^{(2)}_{\rA\to\rB_2})$.
		Since $R(\Theta^{(1)}_{\rA\to\rB_1},\Theta^{(2)}_{\rA\to\rB_2})=0$, we obtain the desired inequality.
	\end{proof}
	
	\clearpage
	\newpage
	
	\setcounter{page}{1}
	\setcounter{section}{0}
	\renewcommand{\thesection}{\arabic{section}}
	\setcounter{equation}{0}
	\renewcommand{\theequation}{S\arabic{equation}}

	\onecolumngrid
    
    \begin{center}
        {\large \textbf{Supplemental Material for\\ ``Joint Realizability Tradeoffs Bounded by Quantum Channel Incompatibility''}}\\
        \vspace{0.3cm}
        Shintaro Minagawa,$^1$ Ryo Takakura,$^{2, 3}$ and Kensei Torii$^4$\\
        \vspace{0.1cm}
        $^1${\small\textit{Aix-Marseille University, CNRS, LIS, Marseille, France}}\\
        $^2${\small\textit{Center for Quantum Information and Quantum Biology, \\ The University of Osaka, 1-2 Machikaneyama, Toyonaka, Osaka 560-0043, Japan}}\\
        $^3${\small\textit{Graduate School of Science, The University of Osaka, 1-1 Machikaneyama, Toyonaka, Osaka 560-0043, Japan}}\\
        $^4${\small\textit{Department of Mathematical Informatics, Nagoya University, Furo-cho, Chikusa-ku, Nagoya 464-8601, Japan}}
	\email{takakura.ryo.qiqb@osaka-u.ac.jp}
    \end{center}
	
	\section{Semidefinite programming (SDP) for incompatibility}
	\subsection{Generalized robustness of incompatibility (RoI) for channel-channel pairs}
	While the dual formulation of RoI was introduced in Ref.~\cite{mori2020operational}, we detail its derivation for completeness, starting with the primal definition.
	The RoI of a channel pair is defined as follows:
	\begin{equation}
		\begin{split}
			R(\Lambda_{\rA\to\rB_1},\Xi_{\rA\to\rB_2}):=\min\Big\{
			r\ge 0~\big|&~\cN_{\rA\to\rB_i}^{(i)}\in\sC(\rA\to\rB_i)~(i=1,2)\;\mathrm{s.t.}\\
			&
			\left(\!
			\frac{\Lambda_{\rA\to\rB_1}\!+\!r\cN_{\rA\to\rB_1}^{(1)}}{1+r},\frac{\Xi_{\rA\to\rB_2}\!+\!r\cN_{\rA\to\rB_2}^{(2)}}{1+r}
			\!\right)\mbox{:~compatible}
			\Big\}.
		\end{split}
	\end{equation}
	By means of the Choi--Jamio\l kowski isomorphism~\cite{jamiolkowski1972linear,choi1975completely}, for a pair $(\Lambda_{\rA\to\rB_1},\Xi_{\rA\to\rB_2})$ of channels, the RoI $R(\Lambda_{\rA\to\rB_1},\Xi_{\rA\to\rB_2})$ is given by the solution of the following optimization problem with Choi operators.
	\begin{equation}
		\begin{split}
			R(\Lambda_{\rA\to\rB_1},\Xi_{\rA\to\rB_2})=\underset{C_{\rB_1\rR}(\cN_{\rA\to\rB_1}^{(1)}),C_{\rB_2\rR}(\cN_{\rA\to\rB_2}^{(2)}), C_{\rB_1\rB_2\rR}}{\minimize} 
			\quad &r\ge 0\\
			\subto \quad &\frac{C_{\rB_1\rR}(\Lambda_{\rA\to\rB_1})+rC_{\rB_1\rR}(\cN_{\rA\to\rB_1}^{(1)})}{1+r}=\Tr_{\rB_2} [C_{\rB_1\rB_2\rR}]\\
			&\frac{C_{\rB_2\rR}(\Xi_{\rA\to\rB_2})+rC_{\rB_2\rR}(\cN_{\rA\to\rB_2}^{(2)})}{1+r}=\Tr_{\rB_1} [C_{\rB_1\rB_2\rR}]\\
			&\Tr_{\rB_1}[C_{\rB_1\rR}(\cN_{\rA\to\rB_1}^{(1)})]=\one_{\rR},\quad C_{\rB_1\rR}(\cN_{\rA\to\rB_1}^{(1)})\succeq\zero_{\rB_1\rR}\\
			&\Tr_{\rB_2}[C_{\rB_2\rR}(\cN_{\rA\to\rB_2}^{(2)})]=\one_{\rR},\quad C_{\rB_2\rR}(\cN_{\rA\to\rB_2}^{(2)})\succeq\zero_{\rB_2\rR}\\
			&C_{\rB_1\rB_2\rR}\succeq\zero_{\rB_1\rB_2\rR}.
		\end{split}
	\end{equation}
	Here $C_{\rB_1\rR}(\Lambda_{\rA\to\rB_1})$ denotes the Choi operator of the channel $\Lambda_{\rA\to\rB_1}$ and $\cH_\rR$ represents a reference Hilbert space isomorphic to $\cH_\rA$ (similarly for the other channels).
	Let us introduce
	\begin{equation}
		\tilde{C}_{\rB_1\rR}(\cN_{\rA\to\rB_1}^{(1)}) := rC_{\rB_1\rR}(\cN_{\rA\to\rB_1}^{(1)}) ,\quad \tilde{C}_{\rB_2\rR}(\cN_{\rA\to\rB_2}^{(2)}) := rC_{\rB_2\rR}(\cN_{\rA\to\rB_2}^{(2)}),\quad \tilde{C}_{\rB_1\rB_2\rR}=(1+r)C_{\rB_1\rB_2\rR}.
	\end{equation}
	Then the optimization becomes
	\begin{equation}
		\begin{split}
			R(\Lambda_{\rA\to\rB_1},\Xi_{\rA\to\rB_2})=\underset{\tilde{C}_{\rB_1\rR}(\cN_{\rA\to\rB_1}^{(1)}),\tilde{C}_{\rB_2\rR}(\cN_{\rA\to\rB_2}^{(2)}), \tilde{C}_{\rB_1\rB_2\rR}}{\minimize}\quad &r\ge 0\\
			\subto \quad &C_{\rB_1\rR}(\Lambda_{\rA\to\rB_1})+\tilde{C}_{\rB_1\rR}(\cN_{\rA\to\rB_1}^{(1)})=\Tr_{\rB_2} [\tilde{C}_{\rB_1\rB_2\rR}]\\
			&C_{\rB_2\rR}(\Xi_{\rA\to\rB_2})+\tilde{C}_{\rB_2\rR}(\cN_{\rA\to\rB_2}^{(2)})=\Tr_{\rB_1} [\tilde{C}_{\rB_1\rB_2\rR}]\\
			&\Tr_{\rB_1} [\tilde{C}_{\rB_1\rR}(\cN_{\rA\to\rB_1}^{(1)})]=r\one_{\rR},\quad \tilde{C}_{\rB_1\rR}(\cN_{\rA\to\rB_1}^{(1)})\succeq\zero_{\rB_1\rR}\\
			&\Tr_{\rB_2} [\tilde{C}_{\rB_2\rR}(\cN_{\rA\to\rB_2}^{(2)})]=r\one_{\rR},\quad \tilde{C}_{\rB_2\rR}(\cN_{\rA\to\rB_2}^{(2)})\succeq\zero_{\rB_2\rR}\\
			&\tilde{C}_{\rB_1\rB_2\rR}\succeq\zero_{\rB_1\rB_2\rR}.
		\end{split}
	\end{equation}
	Note that the first, third, and fourth constraints imply
	\begin{equation} \label{eq:simplify_SDP_conditions}
		\begin{split}
			\tilde{C}_{\rB_1\rR}(\cN_{\rA\to\rB_1}^{(1)})&=\Tr_{\rB_2} [\tilde{C}_{\rB_1\rB_2\rR}]-C_{\rB_1\rR}(\Lambda_{\rA\to\rB_1})\succeq\zero_{\rB_1\rR},\\
			\Tr_{\rB_1\rB_2}[\tilde{C}_{\rB_1\rB_2\rR}]&=\Tr_{\rB_1}[C_{\rB_1\rR}(\Lambda_{\rA\to\rB_1})]+\Tr_{\rB_1}[\tilde{C}_{\rB_1\rR}(\cN_{\rA\to\rB_1}^{(1)})]\\
			&=(1+r)\one_{\rR}.
		\end{split}    
	\end{equation}
	We can apply the same argument to the second, fifth, and sixth conditions, and rewrite the RoI of channels as follows:
	\begin{equation}\label{sdp:robustness_channel-channel}
		\begin{split}
			R(\Lambda_{\rA\to\rB_1},\Xi_{\rA\to\rB_2})=\underset{\tilde{C}_{\rB_1\rB_2\rR}}{\minimize}\quad &r\ge 0\\
			\subto \quad &\Tr_{\rB_1\rB_2}[\tilde{C}_{\rB_1\rB_2\rR}]=(1+r)\one_{\rR}\\
			&\Tr_{\rB_2} [\tilde{C}_{\rB_1\rB_2\rR}]-C_{\rB_1\rR}(\Lambda_{\rA\to\rB_1})\succeq\zero_{\rB_1\rR}\\
			&\Tr_{\rB_1} [\tilde{C}_{\rB_1\rB_2\rR}]-C_{\rB_2\rR}(\Xi_{\rA\to\rB_2})\succeq\zero_{\rB_2\rR}\\
			&\tilde{C}_{\rB_1\rB_2\rR}\succeq\zero_{\rB_1\rB_2\rR}.
		\end{split}
	\end{equation}
	Now, let us derive the dual problem.
	We introduce the Lagrangian by
	\begin{equation}
		\begin{split}
			L:=&~r-\Tr[V_\rR( (1+r)\one_\rR -\Tr_{\rB_1\rB_2}[\tilde{C}_{\rB_1\rB_2\rR}] )] -\Tr[X_{\rB_1\rR}(\Tr_{\rB_2}[\tilde{C}_{\rB_1\rB_2\rR}]-C_{\rB_1\rR}(\Lambda_{\rA\to\rB_1}))] \\
			&-\Tr[Y_{\rB_2\rR}(\Tr_{\rB_1}[\tilde{C}_{\rB_1\rB_2\rR}]-C_{\rB_2\rR}(\Xi_{\rA\to\rB_2}))] -\Tr[Z_{\rB_1\rB_2\rR}\tilde{C}_{\rB_1\rB_2\rR}],
		\end{split}
	\end{equation}
	where $V_\rR\in\cL_{\rH}(\cH_\rR)$, $X_{\rB_1\rR}\succeq\zero_{\rB_1\rR}$, $Y_{\rB_2\rR}\succeq\zero_{\rB_2\rR},\;Z_{\rB_1\rB_2\rR}\succeq\zero_{\rB_1\rB_2\rR}$, and $r\ge 0$.
	Rewriting the Lagrangian leads to
	\begin{equation}
		\begin{split}
			L=&\Tr[X_{\rB_1\rR}C_{\rB_1\rR}(\Lambda_{\rA\to\rB_1})]+\Tr[Y_{\rB_2\rR}C_{\rB_2\rR}(\Xi_{\rA\to\rB_2})]-\Tr[V_\rR]\\
			&-r(\Tr[V_\rR]-1) -\Tr[(X_{\rB_1\rR}\otimes\one_{\rB_2} +\one_{\rB_1}\otimes Y_{\rB_2\rR} +Z_{\rB_1\rB_2\rR} - \one_{\rB_1\rB_2}\otimes V_\rR)\tilde{C}_{\rB_1\rB_2\rR}].
		\end{split}
	\end{equation}
	\color{black}
	Hence, the dual problem is 
	\begin{equation}\label{sdp:original_dual}
		\begin{split}
			\underset{X_{\rB_1\rR},Y_{\rB_2\rR},V_\rR}{\maximize} \quad &\Tr[X_{\rB_1\rR}C_{\rB_1\rR}(\Lambda_{\rA\to\rB_1})]+\Tr[Y_{\rB_2\rR}C_{\rB_2\rR}(\Xi_{\rA\to\rB_2})]-\Tr [V_\rR]\\
			\subto \quad & \Tr [V_\rR]\le 1,\\
			& \one_{\rB_1\rB_2}\otimes V_\rR \succeq X_{\rB_1\rR}\otimes\one_{\rB_2} +\one_{\rB_1}\otimes Y_{\rB_2\rR}\\
			& X_{\rB_1\rR}\succeq \zero_{\rB_1\rR}, \quad Y_{\rB_2\rR}\succeq \zero_{\rB_2\rR}.
		\end{split}
	\end{equation}
	This SDP is strictly feasible.
	In fact, one can choose $V_\rR=\one_\rR/(d+1)$, $X_{\rB_1\rR}=\one_{\rB_1\rR}/4d$, and $Y_{\rB_2\rR}=\one_{\rB_2\rR}/4d$.
	These guarantee the strong duality; the optimal value is achievable and both the primal and dual problems give the same result from Slater's condition~\cite{slater2014lagrange}.
	When $0<\Tr[V_\rR]<1$, with the renormalization $(X_{\rB_1\rR},Y_{\rB_2\rR},V_\rR)\to\frac{1}{\Tr[V_\rR]}(X_{\rB_1\rR},Y_{\rB_2\rR},V_\rR)$, the maximization is identified with the following SDP:
		\begin{equation}
			\label{eq:sdp1'}
			\begin{split}
				\underset{X_{\rB_1\rR},Y_{\rB_2\rR},V_\rR}{\maximize} \quad &\Tr[X_{\rB_1\rR}C_{\rB_1\rR}(\Lambda_{\rA\to\rB_1})]+\Tr[Y_{\rB_2\rR}C_{\rB_2\rR}(\Xi_{\rA\to\rB_2})]-1\\
				\subto \quad 
				& \Tr [V_\rR]=1,\\
				& \one_{\rB_1\rB_2}\otimes V_\rR \succeq X_{\rB_1\rR}\otimes\one_{\rB_2} +\one_{\rB_1}\otimes Y_{\rB_2\rR}\\
				& X_{\rB_1\rR}\succeq \zero_{\rB_1\rR}, \quad Y_{\rB_2\rR}\succeq \zero_{\rB_2\rR},
			\end{split}
		\end{equation}
		which is Eq.~\eqref{sdp1}.

	When the SDP \eqref{sdp:original_dual} has an optimal solution with $\Tr[V_\rR]=0$ (hence $V_\rR=\zero_{\rR}$, $X_{\rB_1\rR}=\zero_{\rB_1\rR}$ and $Y_{\rB_2\rR}=\zero_{\rB_2\rR}$), the maximum is zero.	
	In this case, the maximum of the SDP~\eqref{eq:sdp1'} is also zero.
	To see this, we choose an arbitrary $V_\rR$ satisfying $\Tr[V_\rR]=1$ and set $X_{\rB_1\rR}=\one_{\rB_1}\otimes V_\rR$ and $Y_{\rB_2\rR}=\zero_{\rB_2\rR}$.
	This assignment lets the objective value be zero, and since the maximum of the SDP~\eqref{eq:sdp1'} cannot exceed that of \eqref{sdp:original_dual}, it must be rigorously zero. 
    Therefore, hereinafter we use the expression \eqref{eq:sdp1'} as a dual problem of RoI.

	\subsection{Channel-POVM pair case}
	For a channel-POVM pair $(\Lambda_{\rA\to\rB},\E^\cX_\rA)$, we can derive an SDP to calculate its RoI simply by considering the measurement channel $\Gamma^{\E^\cX_\rA}$ of the POVM $\E^\cX_\rA$ and applying the SDP \eqref{eq:sdp1'} to the channel pair $(\Lambda_{\rA\to\rB},\Gamma^{\E^\cX_\rA})$. 
	The formula $R(\Lambda_{\rA\to\rB},\E^\cX_\rA)=R(\Lambda_{\rA\to\rB},\Gamma^{\E^\cX_\rA})$~\cite{mori2020operational} ensures that the derived SDP is a valid protocol to calculate $R(\Lambda_{\rA\to\rB},\E^\cX_\rA)$.
	Here we present a self-contained derivation without using the result in Ref.~\cite{mori2020operational}.
	
	We start with a fundamental proposition.    
	\begin{proposition}\label{proposition:compatible}
		A channel-POVM pair $(\Lambda_{\rA\to\rB},\E^\cX_\rA)$ is compatible if and only if there is a family of linear operators $\{C^x_{\rB\rR}\}_{x\in\cX}$ such that
		\begin{align}
			\sum_{x\in\cX}C^x_{\rB\rR}=C_{\rB\rR}(\Lambda),\quad \Tr_\rB C^x_{\rB\rR}=E^{x\top}_\rA\quad(\forall x\in\cX),\quad C^x_{\rB\rR}&\succeq\zero_{\rB\rR}\quad(\forall x\in\cX).
		\end{align}
	\end{proposition}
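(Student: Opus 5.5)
The plan is to translate the CP-instrument definition of compatibility into Choi operators. We take the Choi operator of each instrument element, $C^x_{\rB\rR}:=C_{\rB\rR}(\cI^x_{\rA\to\rB})$, and read off the three conditions from the instrument axioms. The converse direction uses the inverse Choi--Jamio\l kowski map.

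\emph{($\Rightarrow$).} Suppose $(\Lambda_{\rA\to\rB},\E^\cX_\rA)$ is compatible, with a CP-instrument $\{\cI^x_{\rA\to\rB}\}_{x\in\cX}$ such that $\sum_x\cI^x_{\rA\to\rB}=\Lambda_{\rA\to\rB}$ and $\cI^{x\dagger}_{\rA\to\rB}(\one_\rB)=E^x_\rA$. Set $C^x_{\rB\rR}:=C_{\rB\rR}(\cI^x_{\rA\to\rB})$.
\begin{itemize}
\item Complete positivity of $\cI^x$ gives $C^x_{\rB\rR}\succeq\zero_{\rB\rR}$ by Choi's theorem.
\item The Choi map is linear, so $\sum_x C^x_{\rB\rR}=C_{\rB\rR}(\Lambda)$.
\item For the partial trace, compute matrix elements. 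We have $\Tr_\rB C^x_{\rB\rR}=\sum_{i,j}\Tr[\cI^x(\ketbra{i}{j})]\,\ketbra{i}{j}_\rR$. Moreover $\Tr[\cI^x(\ketbra{i}{j})]=\langle \cI^{x\dagger}(\one_\rB),\ketbra{i}{j}\rangle=\bra{j}E^x\ket{i}$. Hence $\Tr_\rB C^x_{\rB\rR}=\sum_{i,j}\bra{j}E^x\ket{i}\ketbra{i}{j}=E^{x\top}$, where the transpose is taken in the computational basis defining $\ket{\phi}$ and $\rR$ is identified with $\rA$.
\end{itemize}

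\emph{($\Leftarrow$).} Given operators $C^x_{\rB\rR}$ satisfying the three conditions, let $\cI^x_{\rA\to\rB}$ be the unique linear map with Choi operator $C^x_{\rB\rR}$. Concretely, $\cI^x(X)=\Tr_\rR[C^x_{\rB\rR}(\one_\rB\otimes X^\top)]$.
\begin{itemize}
\item Positivity of $C^x_{\rB\rR}$ makes each $\cI^x$ completely positive.
\item Injectivity and linearity of the Choi correspondence give $\sum_x\cI^x=\Lambda$. In particular the sum is TP, so $\{\cI^x\}$ is a CP-instrument; each element is then automatically trace non-increasing.
\item The same matrix-element computation run backwards shows that $\Tr_\rB C^x_{\rB\rR}=E^{x\top}$ forces $\bra{j}\cI^{x\dagger}(\one_\rB)\ket{i}=\bra{j}E^x\ket{i}$ for all $i,j$. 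Hence $\cI^{x\dagger}(\one_\rB)=E^x_\rA$, and the pair is compatible.
\end{itemize}

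The only step requiring care is the identity $\Tr_\rB C_{\rB\rR}(\Psi)=\Psi^\dagger(\one_\rB)^\top$ for an arbitrary linear map $\Psi$. In particular, one must track the transpose convention tied to the unnormalized maximally entangled vector $\ket{\phi}_{\rA\rR}$. This is also the source of the $\top$ in the statement. Everything else is the standard bijection between CP maps and positive Choi operators.
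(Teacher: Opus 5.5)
Your proposal is correct and follows essentially the same route as the paper: Choi operators of the instrument elements for ($\Rightarrow$), and the inverse Choi--Jamio\l kowski map $X\mapsto\Tr_\rR[C^x_{\rB\rR}(\one_\rB\otimes X^\top)]$ for ($\Leftarrow$). The only cosmetic difference is that the paper obtains $\Tr_\rB C_{\rB\rR}(\cI^x)=E^{x\top}_\rA$ from a Kraus decomposition, whereas you use the adjoint and matrix elements. One small caveat: $\Tr_\rB C_{\rB\rR}(\Psi)=\Psi^\dagger(\one_\rB)^\top$ holds for Hermiticity-preserving $\Psi$, not for arbitrary linear maps (in general you get the entrywise complex conjugate of $\Psi^\dagger(\one_\rB)$), but that suffices here because you only apply it to CP maps.
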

	\begin{proof}
		\textit{Only if part}.
		Take an $\E^\cX_\rA$-instrument $\{\cI^x_{\rA\to\rB}\}_{x\in\cX}$ such that $\Lambda_{\rA\to\rB}=\sum_{x\in\cX}\cI^x_{\rA\to\rB}$.
		Let $C_{\rB\rR}(\cI^x_{\rA\to\rB})$
		be the Choi operator of $\cI^x_{\rA\to\rB}$.
		The condition $\Lambda_{\rA\to\rB}=\sum_{x\in\cX}\cI^x_{\rA\to\rB}$ leads to $C_{\rB\rR}(\Lambda_{\rA\to\rB})=\sum_{x\in\cX}C_{\rB\rR}(\cI^x_{\rA\to\rB})$.
		To derive $\Tr_\rB C_{\rB\rR}(\cI^x_{\rA\to\rB})=E^{x\top}_\rA$, we assume that $\cI^x_{\rA\to\rB}$ has a Kraus representation~\cite{kraus1983states} $\cI^x_{\rA\to\rB}(\cdot)=\sum_i \kappa^{i|x}_{\rA\to\rB}(\cdot)\kappa^{i|x\dagger}_{\rA\to\rB}$.
		It must hold that
		\begin{equation}\label{eq:E-instrument_Kraus}
			\cI_{\rA\to\rB}^{x\dagger}(\one_\rB)=\sum_i \kappa^{i|x\dagger}_{\rA\to\rB}\kappa^{i|x}_{\rA\to\rB}=E^x_\rA.
		\end{equation}
		$\Tr_\rB C_{\rB\rR}(\cI^x_{\rA\to\rB})$ has the following form:
		\begin{equation}
			\begin{split}
				\Tr_\rB C_{\rB\rR}(\cI^x_{\rA\to\rB})&=\Tr_\rB[(\cI^x_{\rA\to\rB}\otimes\rmid_\rR)(\ketbra{\phi}{\phi}_{\rA\rR})]\\
				&=\sum_i\Tr_\rB[(\kappa^{i|x}_{\rA\to\rB}\otimes\one_\rR)(\ketbra{\phi}{\phi}_{\rA\rR})(\kappa^{i|x\dagger}_{\rA\to\rB}\otimes\one_\rR)],
			\end{split}
		\end{equation}
		where $\ket{\phi}_{\rA\rR}:=\sum_{i=1}^{d}\ket{i}_{\rA}\otimes\ket{i}_{\rR}$.
		Since we have
		\begin{equation}
			\begin{split}
				\Tr_\rB[(\kappa^{i|x}_{\rA\to\rB}\otimes\one_\rR)(\ketbra{\phi}{\phi}_{\rA\rR})(\kappa^{i|x\dagger}_{\rA\to\rB}\otimes\one_\rR)]&=\Tr_\rB\left[\sum_{kl}\kappa^{i|x}_{\rA\to\rB}\ketbra{k}{l}_\rA\kappa^{i|x\dagger}_{\rA\to\rB}\otimes\ketbra{k}{l}_\rR\right]\\
				&=\sum_{kl}\bra{l}_\rA\kappa^{i|x\dagger}_{\rA\to\rB}\kappa^{i|x}_{\rA\to\rB}\ket{k}_\rA\ketbra{k}{l}_\rR\\
				&=(\kappa^{i|x\dagger}_{\rA\to\rB}\kappa^{i|x}_{\rA\to\rB})^\top,
			\end{split}
		\end{equation}
		we obtain $\Tr_\rB C_{\rB\rR}(\cI^x_{\rA\to\rB})=\sum_i (\kappa^{i|x\dagger}_{\rA\to\rB}\kappa^{i|x}_{\rA\to\rB})^\top=E^{x\top}_\rA$, where the last equality comes from Eq.~\eqref{eq:E-instrument_Kraus}.
		
		\textit{If part}.
		Let us introduce a CP trace non-increasing map provided by Choi-Jamio\l kowski isomorphism~\cite{jamiolkowski1972linear,choi1975completely}
		\begin{equation}\label{eq:choi-jamiolkowski}
			\cJ^x_{\rA\to\rB}(\rho_\rA)=\Tr_\rR[(\one_\rB\otimes\rho^\top_\rR)C^x_{\rB\rR}].
		\end{equation}
		We now show that $\{\cJ^x_{\rA\to\rB}\}_{x\in\cX}$ is an $\E^x_\rA$-instrument such that $\sum_{x\in\cX}\cJ^x_{\rA\to\rB}=\Lambda_{\rA\to\rB}$.
		Taking the summation over all $x\in\cX$ of Eq.~\eqref{eq:choi-jamiolkowski} leads to
		\begin{equation}
			\begin{split}
				\sum_{x\in\cX}\cJ^x_{\rA\to\rB}(\rho_\rA)&=\sum_{x\in\cX}\Tr_\rR[(\one_\rB\otimes\rho^\top_\rR)C^x_{\rB\rR}]\\
				&=\Tr_\rR[(\one_\rB\otimes\rho^\top_\rR)C(\Lambda)_{\rB\rR}]\\
				&=\Lambda_{\rA\to\rB}(\rho_\rA).
			\end{split}
		\end{equation}
		We also have
		\begin{equation}
			\begin{split}
				\Tr \cJ^x_{\rA\to\rB}(\rho_\rA)&=\Tr[(\one_\rB\otimes\rho^\top_\rR)C^x_{\rB\rR}]\\
				&=\Tr[\rho^\top_\rA(\Tr_\rB C^x_{\rB\rR})]\\
				&=\Tr[\rho_\rA E^x_\rA]\quad(\forall \rho_\rA\in\cS(\cH_\rA)).
			\end{split}
		\end{equation}
		Hence $\Lambda_{\rA\to\rB}$ is realized by an $\E^\cX_\rA$-instrument, which completes the proof.
	\end{proof}    
	
	The RoI of a channel-POVM pair $(\Lambda_{\rA\to\rB},\E^\cX_\rA)$ is defined as follows:
	\begin{equation}
		\begin{split}
			R(\Lambda_{\rA\to\rB},\E^\cX_\rA):=\min\Big\{
			r\ge 0~\big|&~\cN_{\rA\to\rB}\in\sC(\rA\to\rB),\;\N^\cX_\rA\in\sM(\rA)\\
			&\mathrm{s.t.}\left(\frac{\Lambda_{\rA\to\rB}+r\cN_{\rA\to\rB}}{1+r},\frac{\E_{\rA}^\cX+r\N_{\rA}^{\cX}}{1+r}\right)\mbox{: compatible}\Big\}.
		\end{split}
	\end{equation} 
	Due to Proposition~\ref{proposition:compatible}, it can be rewritten as
	\begin{equation}
		\begin{split}
			R(\Lambda_{\rA\to\rB},\E^\cX_\rA)=\underset{C_{\rB\rR}(\cN_{\rA\to\rB}),N^\cX_\rA,C^x_{\rB\rR}}{\minimize} 
			\quad &r\ge 0\\
			\subto \quad &\frac{C_{\rB\rR}(\Lambda_{\rA\to\rB})+rC_{\rB\rR}(\cN_{\rA\to\rB})}{1+r}=\sum_{x\in\cX} C^x_{\rB\rR}\\
			&\frac{E^{x\top}_\rA + rN^{x\top}_\rA}{1+r} = \Tr_\rB[C^x_{\rB\rR}] \quad(\forall x\in\cX)\\
			&\Tr_\rB[C_{\rB\rR}(\cN_{\rA\to\rB})]=\one_{\rR},\quad C_{\rB\rR}(\cN_{\rA\to\rB})\succeq\zero_{\rB\rR}\\
			&N^x_\rA \succeq \zero_\rA \quad (\forall x\in\cX), \quad \sum_{x\in\cX} N^x_\rA =\one_\rA\\
			&C^x_{\rB\rR}\succeq\zero_{\rB\rR} \quad (\forall x\in\cX).
		\end{split}
	\end{equation}
    Here, $\rR$ has the same dimensional Hilbert space as $\rA$.
	Similarly to the channel-channel case, we can simplify the problem by defining
	\begin{equation}
	\tilde{C}_{\rB\rR}(\cN_{\rA\to\rB}) := rC_{\rB\rR}(\cN_{\rA\to\rB}),\quad
	\tilde{N}_\rA^{x} := r N_\rA^x, \quad
	\tilde{C}^x_{\rB\rR}:= (1+r)C^x_{\rB\rR}.
	\end{equation}
	Applying an argument analogous to Eq.~\eqref{eq:simplify_SDP_conditions} and using Proposition~\ref{proposition:compatible}, we can reduce the problem to 
	\begin{equation}\label{sdp:channel-POVM_robustness}
		\begin{split}
			R(\Lambda_{\rA\to\rB},\E^\cX_\rA)=\underset{\{\tilde{C}^x_{\rB\rR}\}}{\minimize} 
			\quad &r\ge 0\\
			\subto \quad &\sum_{x\in\cX} \Tr_\rB [\tilde{C}^x_{\rB\rR}] = (1+r)\one_\rR\\
			&\sum_{x\in\cX} \tilde{C}^x_{\rB\rR}-C_{\rB\rR}(\Lambda_{\rA\to\rB}) \succeq \zero_{\rB\rR}  \\
			&\Tr_\rB[\tilde{C}^x_{\rB\rR}]-E_\rA^{x\top} \succeq\zero_{\rA}\quad (\forall x\in\cX)\\
			&\tilde{C}^x_{\rB\rR}\succeq\zero_{\rB\rR} \quad (\forall x\in\cX).
		\end{split}
	\end{equation}
	We now derive the dual problem by constructing the following Lagrangian:
	\begin{equation}
		\begin{split}
			L := &~r-\Tr\Big[V_\rR \Big((1+r)\one_\rR - \sum_{x\in\cX} \Tr_\rB [\tilde{C}^x_{\rB\rR}]\Big)\Big] - \Tr\Big[X_{\rB\rR}\Big(\sum_{x\in\cX} \tilde{C}^x_{\rB\rR}-C_{\rB\rR}(\Lambda_{\rA\to\rB})\Big)\Big]\\
			&-\sum_{x\in\cX}\Tr\left[Y^x_\rR \left(\Tr_\rB[\tilde{C}^x_{\rB\rR}]-E_\rA^{x\top}\right)\right] - \sum_{x\in\cX}\Tr[Z^x_{\rB\rR}\tilde{C}^x_{\rB\rR}]\\
			= &\Tr[X_{\rB\rR}C_{\rB\rR}(\Lambda_{\rA\to\rB})] + \sum_{x\in\cX}\Tr\left[Y^x_\rR E_\rA^{x\top}\right] - \Tr[V_\rR] \\
			&- r (\Tr[V_\rR]-1) -\sum_{x\in\cX}\Tr[(X_{\rB\rR} +\one_\rB\otimes Y^x_\rR +Z^x_{\rB\rR} - \one_\rB\otimes V_\rR)\tilde{C}^x_{\rB\rR}],
		\end{split}
	\end{equation}
	where $V_\rR\in\cL_{\rH}(\cH_\rR)$, $X_{\rB\rR}\succeq\zero_{\rB\rR}$, $Y^x_\rR\succeq\zero_\rR$, $Z^x_{\rB\rR}\succeq\zero_{\rB\rR}\,(\forall x\in\cX)$, and $r\ge 0$.
	The dual problem is formulated as follows:
	\begin{equation}\label{eq:dual_robustness_channel-POVM}
		\begin{split}
			\underset{X_{\rB\rR},\{Y^x_\rR\},V_\rR}{\maximize} \quad &\Tr[X_{\rB\rR}C_{\rB\rR}(\Lambda_{\rA\to\rB})]+\sum_{x\in\cX}\Tr[Y^{x}_{\rR}E_\rA^{x\top}]-\Tr [V_\rR]\\
			\subto \quad & \Tr [V_\rR]\le1\\
			& \one_\rB \otimes V_\rR \succeq X_{\rB\rR} +\one_{\rB}\otimes Y^{x}_{\rR}\quad(\forall x\in\cX)\\
			& X_{\rB\rR}\succeq \zero_{\rB\rR}, \quad Y^x_{\rR}\succeq \zero_\rR \quad (\forall x\in\cX).
		\end{split}
	\end{equation}
	This problem is also strictly feasible; for instance, we can choose $V_\rR=\one_\rR/(d+1)$, $X_{\rB\rR}=\one_\rB\otimes\one_\rR/4d$, and $Y^x_\rR =\one_\rR/4d ~(\forall x\in\cX)$.
	Thus, from Slater's condition~\cite{slater2014lagrange}, the strong duality holds.
	The maximization is simplified as
	\begin{equation}\label{eq:dual_robustness_channel-POVM2}
		\begin{split}
			\underset{X_{\rB\rR},\{Y^x_\rR\},V_\rR}{\maximize} \quad &\Tr[X_{\rB\rR}C_{\rB\rR}(\Lambda_{\rA\to\rB})]+\sum_{x\in\cX}\Tr[Y^{x}_{\rR}E_\rA^{x\top}]-1\\
			\subto \quad & \Tr [V_\rR]=1\\
			& \one_\rB \otimes V_\rR \succeq X_{\rB\rR} +\one_{\rB}\otimes Y^{x}_{\rR}\quad(\forall x\in\cX)\\
			& X_{\rB\rR}\succeq \zero_{\rB\rR}, \quad Y^x_{\rR}\succeq \zero_\rR \quad (\forall x\in\cX).
		\end{split}
	\end{equation}
	We note that the optimization \eqref{eq:dual_robustness_channel-POVM2} is equal to the SDP \eqref{eq:sdp1'} for the channel pair $(\Lambda_{\rA\to\rB},\Gamma^{\E^\cX_\rA})$.
	In fact, setting $C_{\rX\rR}(\Gamma^{\E^\cX_\rA})=\sum_{x\in\cX} \ketbra{x}{x}\otimes E_\rA^{x\top}$ in $\eqref{eq:sdp1'}$ reproduces \eqref{eq:dual_robustness_channel-POVM2} and thus we can verify the formula $R(\Lambda_{\rA\to\rB},\E^\cX_\rA)=R(\Lambda_{\rA\to\rB},\Gamma^{\E^\cX_\rA})$ in Ref.~\cite{mori2020operational} also in our framework.
	
	\subsection{POVM-POVM pair case}
    As in the case of channel-POVM pairs, for a pair $(\E^\cX_\rA,\F^\cY_\rA)$ of POVMs, we can calculate its RoI by applying the SDP \eqref{eq:sdp1'} to the pair $(\Gamma^{\E^\cX_\rA}, \Gamma^{\F^\cY_\rA})$ of measurement channels.
    Yet, we again present a self-contained derivation without relying on this approach.

    The RoI $R(\E^\cX_\rA,\F^\cY_\rA)$ is given by the solution of the following optimization problem.
	\begin{equation}
		\begin{split}
			R(\E^\cX_\rA,\F^\cY_\rA)=\underset{\{N^x_\rA\},\{M^y_\rA\},\{G^{xy}_\rA\}}{\minimize} 
			\quad &r\ge 0\\
			\subto \quad &\frac{E^x_\rA + r N^x_\rA}{1+r} = \sum_{y\in\cY} G_\rA^{xy},\quad
			\frac{F^y_\rA + r M^y_\rA}{1+r} = \sum_{x\in\cX} G_\rA^{xy}\\
			&N^x_\rA \succeq \zero_\rA \quad (\forall x\in\cX), \quad \sum_{x\in\cX} N^x_\rA =\one_\rA\\
			&M^y_\rA \succeq \zero_\rA \quad (\forall y\in\cY), \quad \sum_{y\in\cY} M^y_\rA =\one_\rA\\
			&G_\rA^{xy}\succeq \zero_\rA \quad (\forall x\in\cX,\, \forall y\in\cY),\quad \sum_{x\in\cX,y\in\cY}G_\rA^{xy}=\one_\rA.
		\end{split}
	\end{equation}
    We recast the problem in a linear form using the variables:
	\begin{equation}
		\tilde{N}_\rA^{x} := r N_\rA^x, \quad \tilde{M}^y_\rA := rM^y_\rA, \quad \tilde{G}^{xy}_\rA:=(1+r)G^{xy}_\rA.
	\end{equation}
    A similar argument to Eq.~\eqref{eq:simplify_SDP_conditions} simplifies the problem to 
	\begin{equation}
		\begin{split}
			R(\E^\cX_\rA,\F^\cY_\rA)=\underset{\{\tilde{G}^{xy}_\rA\}}{\minimize} 
			\quad &r\ge 0\\
			\subto \quad &\sum_{x\in\cX,y\in\cY}\tilde{G}_\rA^{xy}=(1+r)\one_\rA\\
			&\sum_{y\in\cY} \tilde{G}_\rA^{xy} - E^x_\rA\succeq \zero_\rA \quad (\forall x\in\cX)\\
			&\sum_{x\in\cX} \tilde{G}_\rA^{xy} - F^y_\rA\succeq \zero_\rA \quad (\forall y\in\cY) \\
			&\tilde{G}_\rA^{xy}\succeq \zero_\rA \quad (\forall x\in\cX,\, \forall y\in\cY).
		\end{split}
	\end{equation}
	To derive the dual problem, we construct the Lagrangian as
	\begin{equation}
		\begin{split}
			L:=&~r - \Tr\Big[V_\rA \Big((1+r)\one_\rA - \sum_{x\in\cX,y\in\cY}\tilde{G}_\rA^{xy}\Big)\Big] - \sum_{x\in\cX}\Tr\Big[X^x_\rA \Big(\sum_{y\in\cY} \tilde{G}_\rA^{xy} - E^x_\rA\Big)\Big] \\
			&\qquad\qquad\qquad\qquad\qquad\qquad
			-\sum_{y\in\cY}\Tr\Big[Y^y_\rA \Big(\sum_{x\in\cX} \tilde{G}_\rA^{xy} - F^y_\rA\Big)\Big] - \sum_{x\in\cX,y\in\cY} \Tr[Z^{xy}_\rA \tilde{G}_\rA^{xy}]\\
			=& \sum_{x\in\cX}\Tr[X^x_\rA E^x_\rA] + \sum_{y\in\cY}\Tr[Y^y_\rA F^y_\rA] - \Tr[V_\rA]-r(\Tr[V_\rA]-1) -\sum_{x\in\cX,y\in\cY} \Tr[(X^x_\rA + Y^y_\rA + Z^{xy}_\rA - V_\rA)\tilde{G}_\rA^{xy}],
		\end{split}    
	\end{equation}
	where $V_\rA\in\cL_{\rH}(\cH_\rA)$, $X^x_\rA\succeq\zero_\rA$, $Y^y_\rA\succeq\zero_\rA$, $Z^{xy}_\rA\succeq\zero_\rA \, (\forall x\in\cX,\, \forall y\in \cY)$, and $r\ge 0$.
	This yields the dual problem.
	\begin{equation}
		\begin{split}
			\underset{\{X^x_\rA\},\{Y^y_\rA\},V_\rA}{\maximize} \quad &\sum_{x\in\cX}\Tr[X^x_\rA E^x_\rA] + \sum_{y\in\cY}\Tr[Y^y_\rA F^y_\rA]-\Tr [V_\rA]\\
			\subto \quad & \Tr [V_\rA]\le1\\
			&  V_\rA  \succeq X^x_\rA + Y^y_\rA  \quad(\forall x\in\cX,\, \forall y\in\cY)\\
			& X^x_\rA\succeq \zero_\rA \quad(\forall x\in\cX), \quad Y^y_{\rA}\succeq \zero_\rA \quad (\forall y\in\cY).
		\end{split}
	\end{equation}  
	Again, strictly feasible solutions exist for this problem; for example, setting $V_\rA=\one_\rA/(d+1)$, $X^x_{\rA}=Y^y_\rA=\one_\rA/4d\,(\forall x\in\cX,\,\forall y\in\cY)$ strictly satisfies all inequalities. 
	The maximization is equivalent to 
	\begin{equation}
		\begin{split}
			\underset{\{X^x_\rA\},\{Y^y_\rA\},V_\rA}{\maximize} \quad &\sum_{x\in\cX}\Tr[X^x_\rA E^x_\rA] + \sum_{y\in\cY}\Tr[Y^y_\rA F^y_\rA]-1\\
			\subto \quad & \Tr [V_\rA]=1\\
			&  V_\rA  \succeq X^x_\rA + Y^y_\rA  \quad(\forall x\in\cX,\, \forall y\in\cY)\\
			& X^x_\rA\succeq \zero_\rA \quad(\forall x\in\cX), \quad Y^y_{\rA}\succeq \zero_\rA \quad (\forall y\in\cY).
		\end{split}
	\end{equation}  	
	
	\subsection{SDP for weight of incompatibility}
	
	For a channel pair $(\Lambda_{\rA\to\rB_1},\Xi_{\rA\to\rB_2})$, its weight of incompatibility (WoI) $W(\Lambda_{\rA\to\rB_1},\Xi_{\rA\to\rB_2})$ is defined as follows:
	\begin{equation}
		\begin{split}
			W(\Lambda_{\rA\to\rB_1},\Xi_{\rA\to\rB_2}):=\underset{\Theta_{\rA\to\rB_1\rB_2},\cN^{(1)}_{\rA\to\rB_1},\cN^{(2)}_{\rA\to\rB_2}}{\minimize}\quad & w\ge 0\\
			\subto \quad &(1-w)\Tr_{\rB_2}\circ\Theta_{\rA\to\rB_1\rB_2}+w\cN^{(1)}_{\rA\to\rB_1}=\Lambda_{\rA\to\rB_1},\\
			&(1-w)\Tr_{\rB_1}\circ\Theta_{\rA\to\rB_1\rB_2}+w\cN^{(2)}_{\rA\to\rB_2}=\Xi_{\rA\to\rB_2},\\
			&\cN^{(1)}_{\rA\to\rB_1}\in\sC(\rA\to\rB_1),\quad \cN^{(2)}_{\rA\to\rB_2}\in\sC(\rA\to\rB_2),\\
			&\Theta_{\rA\to\rB_1\rB_2}\in\sC(\rA\to\rB_1\rB_2).
		\end{split}
	\end{equation}
	In the formalism of Choi operators, it is
	\begin{equation}
		\begin{split}
			W(\Lambda_{\rA\to\rB_1},\Xi_{\rA\to\rB_2})=\underset{C_{\rB_1\rB_2\rR},C_{\rB_1\rR}(\cN^{(1)}_{\rA\to\rB_1}),C_{\rB_2\rR}(\cN^{(2)}_{\rA\to\rB_2})}{\minimize}\quad & w\ge 0\\
			\subto\quad &(1-w)\Tr_{\rB_2}[C_{\rB_1\rB_2\rR}]+wC_{\rB_1\rR}(\cN^{(1)}_{\rA\to\rB_1})=C_{\rB_1\rR}(\Lambda_{\rA\to\rB_1})\\
			&(1-w)\Tr_{\rB_1}[C_{\rB_1\rB_2\rR}]+wC_{\rB_2\rR}(\cN^{(2)}_{\rA\to\rB_2})=C_{\rB_2\rR}(\Xi_{\rA\to\rB_2})\\
			&\Tr_{\rB_1}[C_{\rB_1\rR}(\cN_{\rA\to\rB_1}^{(1)})]=\one_{\rR},\quad C_{\rB_1\rR}(\cN_{\rA\to\rB_1}^{(1)})\succeq\zero_{\rB_1\rR}\\
			&\Tr_{\rB_2}[C_{\rB_2\rR}(\cN_{\rA\to\rB_2}^{(2)})]=\one_{\rR},\quad C_{\rB_2\rR}(\cN_{\rA\to\rB_2}^{(2)})\succeq\zero_{\rB_2\rR}\\
			&C_{\rB_1\rB_2\rR}\succeq\zero_{\rB_1\rB_2\rR}.
		\end{split}
	\end{equation}
	By introducing 
	\begin{equation}
		\tilde{C}_{\rB_1\rB_2\rR}:=(1-w)C_{\rB_1\rB_2\rR},
	\end{equation}
	and assuming $w>0$, we obtain
	\begin{equation}
		\begin{split}
			W(\Lambda_{\rA\to\rB_1},\Xi_{\rA\to\rB_2})=\underset{\tilde{C}_{\rB_1\rB_2\rR},C_{\rB_1\rR}(\cN^{(1)}_{\rA\to\rB_1}),C_{\rB_2\rR}(\cN^{(2)}_{\rA\to\rB_2})}{\minimize}\quad & w>0\\
			\subto\quad &C_{\rB_1\rR}(\cN^{(1)}_{\rA\to\rB_1})=\frac{1}{w}\left(C_{\rB_1\rR}(\Lambda_{\rA\to\rB_1})-\Tr_{\rB_2}[\tilde{C}_{\rB_1\rB_2\rR}]\right)\\
			&C_{\rB_2\rR}(\cN^{(2)}_{\rA\to\rB_2})=\frac{1}{w}\left(C_{\rB_2\rR}(\Xi_{\rA\to\rB_2})-\Tr_{\rB_1}[\tilde{C}_{\rB_1\rB_2\rR}]\right)\\
			&\Tr_{\rB_1}[C_{\rB_1\rR}(\cN_{\rA\to\rB_1}^{(1)})]=\one_{\rR},\quad C_{\rB_1\rR}(\cN_{\rA\to\rB_1}^{(1)})\succeq\zero_{\rB_1\rR}\\
			&\Tr_{\rB_2}[C_{\rB_2\rR}(\cN_{\rA\to\rB_2}^{(2)})]=\one_{\rR},\quad C_{\rB_2\rR}(\cN_{\rA\to\rB_2}^{(2)})\succeq\zero_{\rB_2\rR}\\
			&\tilde{C}_{\rB_1\rB_2\rR}\succeq\zero_{\rB_1\rB_2\rR}.
		\end{split}
	\end{equation}
	Thus computing the WoI becomes the following SDP:
	\begin{equation}
		\begin{split}
			W(\Lambda_{\rA\to\rB_1},\Xi_{\rA\to\rB_2})=\underset{\tilde{C}_{\rB_1\rB_2\rR}}{\minimize}\quad & w\ge0\\
			\subto\quad             &\Tr_{\rB_1\rB_2}[\tilde{C}_{\rB_1\rB_2\rR}]=(1-w)\one_{\rR}\\ &C_{\rB_1\rR}(\Lambda_{\rA\to\rB_1})-\Tr_{\rB_2}[\tilde{C}_{\rB_1\rB_2\rR}]\succeq\zero_{\rB_1\rR}\\
			&C_{\rB_2\rR}(\Xi_{\rA\to\rB_2})-\Tr_{\rB_1}[\tilde{C}_{\rB_1\rB_2\rR}]\succeq\zero_{\rB_2\rR}\\
			&\tilde{C}_{\rB_1\rB_2\rR}\succeq\zero_{\rB_1\rB_2\rR}.
		\end{split}
	\end{equation}
	The Lagrangian is
	\begin{equation}
		\begin{split}
			L&:=w-\Tr[V_\rR(\Tr_{\rB_1\rB_2}[\tilde{C}_{\rB_1\rB_2\rR}]-(1-w)\one_\rR)] -\Tr[X_{\rB_1\rR}(C_{\rB_1\rR}(\Lambda_{\rA\to\rB_1})-\Tr_{\rB_2}[\tilde{C}_{\rB_1\rB_2\rR}])] \\
			&\quad-\Tr[Y_{\rB_2\rR}(C_{\rB_2\rR}(\Xi_{\rA\to\rB_2})-\Tr_{\rB_1}[\tilde{C}_{\rB_1\rB_2\rR}])] -\Tr[Z_{\rB_1\rB_2\rR}\tilde{C}_{\rB_1\rB_2\rR}]\\
			&=\Tr[V_\rR]-\left( \Tr[X_{\rB_1\rR}C_{\rB_1\rR}(\Lambda_{\rA\to\rB_1})]+\Tr[Y_{\rB_2\rR}C_{\rB_2\rR}(\Xi_{\rA\to\rB_2})]\right)\\
			&\quad -w(\Tr[V_\rR]-1)-\Tr[(-X_{\rB_1\rR}\otimes\one_{\rB_2}-\one_{\rB_1}\otimes Y_{\rB_2\rR}+\one_{\rB_1\rB_2}\otimes V_\rR+Z_{\rB_1\rB_2\rR})\tilde{C}_{\rB_1\rB_2\rR}],
		\end{split}
	\end{equation}
	where $V_\rR\in\cL_\rH(\cH_\rR)$, $X_{\rB_1\rR}\succeq\zero_{\rB_1\rR}$, $Y_{\rB_2\rR}\succeq\zero_{\rB_2\rR}$, $Z_{\rB_1\rB_2\rR}\succeq\zero_{\rB_1\rB_2\rR}$, and $w\ge0$. 
	Therefore, the dual SDP is as follows:
	\begin{equation}
		\begin{split}
			\underset{X_{\rB_1\rR},Y_{\rB_2\rR},V_\rR}{\text{Maximize}} \quad & \Tr[V_\rR]-\Big( \Tr[X_{\rB_1\rR}C_{\rB_1\rR}(\Lambda_{\rA\to\rB_1})]+\Tr[Y_{\rB_2\rR}C_{\rB_2\rR}(\Xi_{\rA\to\rB_2})]
			\Big)\\
			\text{subject to} \quad 
			& \Tr[V_{\mathrm{R}}]\le 1, V_\rR\in\cL_\cH(\cH_\rR)\\
			& X_{\rB_1\rR}\otimes\one_{\rB_2}+\one_{\rB_1}\otimes Y_{\rB_2\rR}\succeq\one_{\rB_1\rB_2}\otimes V_{\rR}\\
			& X_{\rB_1\rR}\succeq\zero_{\rB_1\rR},\quad Y_{\rB_2\rR}\succeq\zero_{\rB_2\rR}.
		\end{split}
	\end{equation}
	The same argument as RoI makes the above problem as follows:
	\begin{equation}
		\begin{split}
			\underset{X_{\rB_1\rR},Y_{\rB_2\rR},V_\rR}{\text{Maximize}} \quad & 1-\Big( \Tr[X_{\rB_1\rR}C_{\rB_1\rR}(\Lambda_{\rA\to\rB_1})]+\Tr[Y_{\rB_2\rR}C_{\rB_2\rR}(\Xi_{\rA\to\rB_2})]
			\Big)\\
			\text{subject to} \quad 
			& \Tr[V_{\mathrm{R}}]=1, V_\rR\in\cL_\cH(\cH_\rR)\\
			& X_{\rB_1\rR}\otimes\one_{\rB_2}+\one_{\rB_1}\otimes Y_{\rB_2\rR}\succeq\one_{\rB_1\rB_2}\otimes V_{\rR}\\
			& X_{\rB_1\rR}\succeq\zero_{\rB_1\rR},\quad Y_{\rB_2\rR}\succeq\zero_{\rB_2\rR}.
		\end{split}
	\end{equation}
	
	Unlike RoI, the inequality in the second constraint is reversed for WoI.
	Therefore, we cannot develop a direct connection to the diamond norm distance.

    \section{Joint-realization tradeoffs from channel incompatibility}

    \subsection{Proof of Theorem~\ref{theorem:mur}}
    \begin{proof}[Proof of Theorem~\ref{theorem:mur}]
        Let $\G^{(1)\cX}_\rA$ and $\G^{(2)\cY}_\rA$ be compatible POVMs.
        By using Theorem~\ref{thm:UR_GR} and Proposition~\ref{proposition:channel-measurement}, we get
        \begin{equation}\label{eq:2R(POVM)_diamond}
			\begin{split}
				2R\left(\E^\cX_\rA,\F^\cY_\rA\right)\le \left\|\Gamma^{\E^\cX_\rA}-\Gamma^{\G^{(1)\cX}_\rA}\right\|_\diamond+\left\|\Gamma^{\F^\cY_\rA}-\Gamma^{\G^{(2)\cY}_\rA}\right\|_\diamond.
			\end{split}
		\end{equation}

        Next, we evaluate these diamond norm-based terms by using the operator norm.
        For general POVMs $\E^\cX_\rA$ and $\G^\cX_\rA$, define $D^x_\rA:=E^x_\rA-G^x_\rA$.
        We have the following: 
		\begin{equation}
            \begin{split}
                \|\Gamma^{\E^\cX_\rA}-\Gamma^{\G^\cX_\rA}\|_\diamond &=\sup_{\rho_{\rA\rR}\in\cS(\cH_\rA\otimes\cH_\rR)} \left\|\sum_{x\in\cX}\Tr_\rA[(E^x_\rA\otimes\one_\rR)\rho_{\rA\rR}]\otimes\ketbra{x}{x}_\rX-\sum_{x'\in\cX}\Tr_\rA[(G^{x'}_\rA\otimes\one_\rR)\rho_{\rA\rR}]\otimes\ketbra{x'}{x'}_\rX\right\|_1\\
                &=\sup_{\rho_{\rA\rR}\in\cS(\cH_\rA\otimes\cH_\rR)} \left\|\sum_{x\in\cX}\Tr_\rA[(D^x_\rA\otimes\one_\rR)\rho_{\rA\rR}]\otimes\ketbra{x}{x}_\rX\right\|_1\\
                &=\sup_{\rho_{\rA\rR}\in\cS(\cH_\rA\otimes\cH_\rR)} \sum_{x\in\cX}\!\left\| \Tr_\rA[(D^x_\rA\otimes \one_{\rR})\rho_{\rA\rR}] \right\|_1.
            \end{split}
		\end{equation}
        By using the relationship between trace norm and operator norm (see e.g.,\cite[Section 1.1.3]{watrous2018theory}), we can write
        \begin{equation}
            \begin{split}
                \left\| \Tr_\rA[(D^x_\rA\otimes \one_{\rR})\rho_{\rA\rR}] \right\|_1&=\max_{\|V_\rR\|\le 1}|\Tr\{V_\rR\Tr_\rA[(D^x_\rA\otimes \one_{\rR})\rho_{\rA\rR}]\}|\\
                &=\max_{\|V_\rR\|\le 1}|\Tr[(D^x_\rA\otimes V_\rR)\rho_{\rA\rR}]|.
            \end{split}
        \end{equation}
        For any state $\rho_{\rA\rR}\in\cS(\cH_\rA\otimes\cH_\rR)$ and operator $V_\rR\in\cL(\cH_\rR)$ such that $\|V_\rR\|\le 1$, it holds that
        \begin{equation}
            \begin{split}
                |\Tr[(D^x_\rA\otimes V_\rR)\rho_{\rA\rR}]|&\le\|D^x_\rA\otimes V_\rR\|\|\rho_{\rA\rR}\|_1\\
                &=\|D^x_\rA\otimes V_\rR\|\\
                &=\|D^x_\rA\|\|V_\rR\|\le\|D^x_\rA\|.
            \end{split}
        \end{equation}
        The inequality is H\"{o}lder's inequality.
        Hence we have
        \begin{equation}
            \|\Gamma^{\E^\cX_\rA}-\Gamma^{\G^\cX_\rA}\|_\diamond \le\sum_{x\in\cX}\|D^x_\rA\|=\sum_{x\in\cX}\|E^x_\rA-G^x_\rA\|.
        \end{equation}
        Combining this inequality and Eq.~\eqref{eq:2R(POVM)_diamond} completes the proof.
    \end{proof}

    \section{information-error-disturbance tradeoff}
	\subsection{Proof of Proposition~\ref{proposition:RoM_vs_RoI}}
	\begin{proof}[Proof of Proposition~\ref{proposition:RoM_vs_RoI}]
		The proof proceeds by choosing a certain dual feasible solution of $R(\rmid_\rA,\E^\cX_\rA)$ that attains $\left(\sqrt{R(\E^\cX_\rA)+1}-1\right)^2/(d-1)$.
		From SDP~\eqref{eq:dual_robustness_channel-POVM2}, the dual problem of $R(\rmid_\rA,\E^\cX_\rA)$ is 
		\begin{equation}
			\begin{split}
				\underset{X_{\rA\rR},\{Y_{\rR}^x\},V_\rR}{\maximize} \quad &\Tr[X_{\rA\rR}\ketbra{\phi}{\phi}_{\rA\rR}]+\sum_{x\in\cX}\Tr[Y^{x}_\rR E^{x\top}_\rA]-1\\
				\subto \quad & \Tr [V_\rR]=1,\;V_\rR\in\cL_{\rH}(\cH_\rR)\\
				& \one_{\rA}\otimes V_\rR \succeq X_{\rA\rR} +\one_{\rA}\otimes Y_{\rR}^{x} \;(\forall x\in\cX)\\
				& X_{\rA\rR}\succeq \zero_{\rA\rR},\;Y_{\rR}^{x}\succeq \zero_{\rR}\;(\forall x\in\cX).
			\end{split}
		\end{equation}
		Here $\ket{\phi}_{\rA\rR}=\sum_{i=0}^{d-1}\ket{i}_\rA\otimes\ket{i}_\rR$ is the maximally entangled state where $\{\ket{i}_\rA\}_{i=0}^{d-1}$ and $\{\ket{i}_\rR\}_{i=0}^{d-1}$ are standard basis of $\cH_\rA$ and $\cH_\rR$, respectively.
        Set $V_\rR=\one_\rR/d,X_{\rA\rR}=\alpha\ketbra{\phi}{\phi}_{\rA\rR}, Y^x_\rR=\beta\ketbra{\psi^*_x}{\psi^*_x}_\rR$ where $\alpha,\beta\ge 0$ and $\ket{\psi^*_x}_\rR$ is the eigenvector corresponding to the maximal eigenvalue of $E^{x\top}_\rA$.
		The constraint is then
		\begin{equation}
			\frac{1}{d}\one_\rA\otimes\one_\rR-\alpha\ketbra{\phi}{\phi}_{\rA\rR}-\beta\one_\rA\otimes\ketbra{\psi^*_x}{\psi^*_x}_\rR\succeq\zero_{\rA\rR}.
		\end{equation}
		Let $U^x_\rR$ be a unitary operator on $\cH_\rR$ such that $U^x_\rR\ket{\psi^*_x}_\rR=\ket{0}_\rR$.
        Let $U^x_\rA$ denote a unitary on $\cH_\rA$ represented by the same matrix as $U^x_\rR$ in the chosen basis.
        Let $U^{x*}_\rA$ be the entry-wise complex conjugation of $U^x_\rA$ with respect to the standard basis.
		Since it holds that $U^{x*}_\rA\otimes U^x_\rR\ket{\phi}_{\rA\rR}=\ket{\phi}_{\rA\rR}$, the constraint is equivalent to
		\begin{equation}
			M_{\rA\rR}:=\frac{1}{d}\one_\rA\otimes\one_\rR-\alpha\ketbra{\phi}{\phi}_{\rA\rR}-\beta\one_\rA\otimes\ketbra{0}{0}_\rR\succeq\zero_{\rA\rR}.
		\end{equation}
		For the standard basis $\{\ket{i}_\rA\}_{i=0}^{d-1}$ and $\{\ket{j}_\rR\}_{j=0}^{d-1}$ of $\cH_\rA$ and $\cH_\rR$, respectively, we have
		\begin{equation}
			\bra{i,j}M_{\rA\rR}\ket{k,l}_{\rA\rR}=\frac{1}{d}\delta_{i,k}\delta_{j,l}-\alpha\delta_{i,j}\delta_{k,l}-\beta\delta_{i,k}\delta_{j,0}\delta_{l,0}.
		\end{equation}
		Let us first focus on a subspace whose basis is$\{\ket{i,i}_{\rA\rR}\}_{i=0}^{d-1}$.
		Then
		\begin{equation}
			\bra{i,i}M_{\rA\rR}\ket{j,j}_{\rA\rR}=\frac{1}{d}\delta_{i,j}-\alpha-\beta\delta_{i,j}\delta_{i,0}.
		\end{equation}
		Therefore, one can introduce a matrix on $\mbC^d$
		\begin{equation}
			M^\cD_{\mbC^d}:=\frac{1}{d}\one_{\mathbb C^d}-\alpha J_d-\beta\ketbra{0}{0}
		\end{equation}
		where $J_d$ is a matrix whose elements are all $1$.
		Next, consider a subspace whose basis is $\{\ket{i,0}\}_{i=1}^{d-1}$.
		Then 
		\begin{equation}
			\bra{i,0}M_{\rA\rR}\ket{j,0}_{\rA\rR}=\frac{1}{d}\delta_{i,j}-\beta\delta_{i,j}.
		\end{equation}
		Finally, consider a subspace whose basis is $\{\ket{i,j}_{\rA\rR}\}_{\substack{i=0,j=1\\(i\neq j)}}^{d-1}$.
		Then
		\begin{equation}
			\bra{i,j}M_{\rA\rR}\ket{k,l}_{\rA\rR}=\frac{1}{d}\delta_{i,k}\delta_{j,l}.
		\end{equation}
		Therefore, $M_{\rA\rR}$ has the following decomposition:
		\begin{equation}
			M_{\rA\rR}= M^\cD_{\mbC^d}\oplus\left(\frac{1}{d}-\beta\right)\one_{\mbC^{d-1}}\oplus\frac{1}{d}\one_{\mathbb C^{(d-1)^2}}.
		\end{equation}
		
		Now, let us consider the condition for $M_{\rA\rR}\succeq\zero_{\rA\rR}$, which is equivalent to $M^\cD_{\mbC^d}\succeq\zero_{\mbC^d}$ and $\left(\frac{1}{d}-\beta\right)\one_{\mbC^{d-1}}\succeq\zero_{\mbC^{d-1}}$.
		The latter condition is clearly equivalent to $\frac{1}{d}\ge\beta$, so we will consider $M^\cD_{\mbC^d}\succeq\zero_{\mbC^d}$.
		Let us introduce a basis $\left\{\ket{0}_{\mbC^d},\frac{1}{\sqrt{d-1}}\sum_{i=1}^{d-1}\ket{i}_{\mbC^d}\right\}$.
		The orthogonal complement of $\mathrm{span}\left\{\ket{0}_{\mbC^d},\frac{1}{\sqrt{d-1}}\sum_{i=1}^{d-1}\ket{i}_{\mbC^d}\right\}$ consists of vectors such as $\ket{w}_{\mbC^d}$ where $\braket{0|w}_{\mbC^d}=0$ and $\frac{1}{\sqrt{d-1}}\sum_{i=1}^{d-1}\braket{i|w}_{\mbC^d}=0$.
		Note that for any $\ket{\varphi}\in\mathrm{span}\left\{\ket{0}_{\mbC^d},\frac{1}{\sqrt{d-1}}\sum_{i=1}^{d-1}\ket{i}_{\mbC^d}\right\}^\perp$,
		\begin{equation}
			\bra{\varphi}M^\cD_{\mbC^d}\ket{\varphi}=\frac{1}{d}-\alpha\bra{\varphi}J_d\ket{\varphi}-\beta\braket{\varphi|0}\braket{0|\varphi}=\frac{1}{d}>0.
		\end{equation}
		Thus, it is enough to consider only the $2$-dimensional subspace $\mathrm{span}\left\{\ket{0}_{\mbC^d},\frac{1}{\sqrt{d-1}}\sum_{i=1}^{d-1}\ket{i}_{\mbC^d}\right\}$.
		On this subspace, $M^\cD_{\mbC^d}$ reduces to a $2\times 2$ matrix
		\begin{equation}
			K:=\begin{pmatrix}
				\frac{1}{d}-\alpha-\beta & -\alpha\sqrt{d-1}\\
				-\alpha\sqrt{d-1} & \frac{1}{d}-\alpha(d-1)
			\end{pmatrix}.
		\end{equation}
		$K\succeq\zero$ if and only if
		\begin{align}
			\frac{1}{d}-\alpha-\beta &\ge 0,\\
			\frac{1}{d}-\alpha(d-1)&\ge 0,\\
			\left(\frac{1}{d}-\alpha-\beta\right)\left(\frac{1}{d}-\alpha(d-1)\right)-\alpha^2(d-1)&\ge 0.
		\end{align}
		Then it must hold that
		\begin{equation}
			\beta\le\frac{1-\alpha d^2}{d(1-\alpha d(d-1))},
		\end{equation}
		so we choose
		\begin{equation}
			\beta(\alpha):=\frac{1-\alpha d^2}{d(1-\alpha d(d-1))}\le\frac{1}{d}.
		\end{equation}
		In this case, the objective function becomes
		\begin{align}
			f(\alpha)&:=\alpha\Tr[\ketbra{\phi}{\phi}\ketbra{\phi}{\phi}]+\sum_{x\in\cX}\beta\Tr[\ketbra{\psi^*_x}{\psi^*_x}E^{x\top}_\rA]-\Tr\left[\frac{\one_\rR}{d}\right]\\
			&=       
			d^2\alpha+\frac{\sum_{x\in\cX}\|E^{x\top}_\rA\|}{d}\frac{1-\alpha d^2}{1-\alpha d(d-1)}-1.
		\end{align}
		Its derivative with respect to $\alpha$ is
		\begin{equation}
			f'(\alpha)=d^2-\frac{\sum_{x\in\cX}\|E^{x\top}_\rA\|}{(1-\alpha d(d-1))^2}.
		\end{equation}
		Then, for
		\begin{equation}
			\alpha^*=\frac{d-\sqrt{\sum_{x\in\cX}\|E^{x\top}_\rA\|}}{d^2(d-1)},
		\end{equation}
		we have $f'(\alpha^*)=0$.
		In this case, 
		\begin{equation}
			\beta(\alpha^*)=\frac{\sqrt{\sum_{x\in\cX}\|E^{x\top}_\rA\|}-1}{\sqrt{\sum_{x\in\cX}\|E^x_\rA\|}(d-1)}
		\end{equation}
		Since $1\le\sum_{x\in\cX}\|E^{x\top}_\rA\|\le d$, $\alpha^*,\beta(\alpha^*)\ge 0$.
		Also,
		\begin{equation}
			\frac{1}{d}-\alpha^*-\beta(\alpha^*)=\frac{\left(d-\sqrt{\sum_{x\in\cX}\|E^{x\top}_\rA\|}\right)^2}{d^2(d-1)\sqrt{\sum_{x\in\cX}\|E^{x\top}_\rA\|}}\ge 0
		\end{equation}
		and
		\begin{equation}
			\frac{1}{d}-\alpha^*(d-1)=\frac{\sqrt{\sum_{x\in\cX}\|E^{x\top}_\rA\|}}{d^2}\ge 0.
		\end{equation}
		Thus, $\alpha^*$ and $\beta(\alpha^*)$ lead to a dual feasible solution.
		Now we get
		\begin{equation}
			\begin{split}
				f(\alpha^*)=\frac{(\sqrt{\sum_{x\in\cX}\|E^{x\top}_\rA\|}-1)^2}{d-1}
			\end{split}
		\end{equation}
		Since the dual value of a feasible solution lower bounds the optimal value, we have
		\begin{equation}
			R(\rmid_\rA,\E^\cX_\rA)\ge \frac{(\sqrt{\sum_{x\in\cX}\|E^{x\top}_\rA\|}-1)^2}{d-1}
		\end{equation}
		Notice that the robustness of measurement is analytically calculated as follows~\cite{skrzypczyk2019robustness}:
		\begin{equation}
			R(\E^\cX_\rA)=\sum_{x\in\cX}\|E^{x}_\rA\|-1=\sum_{x\in\cX}\|E^{x\top}_\rA\|-1.
		\end{equation}
		Therefore, we get the desired inequality
		\begin{equation}
			R(\rmid_\rA,\E^\cX_\rA)\ge \frac{\left(\sqrt{R(\E^\cX_\rA)+1}-1\right)^2}{d-1}.
		\end{equation}
	\end{proof}
	
	\subsection{Comparison with Heinosaari--Miyadera (HM) bound}
	When $d:=\dim\cH_\rA\le 6$, the disturbance bound in Theorem~\ref{theorem:i-d-e} dominates the HM bound~\cite{heinosaari2013qualitative}, that is,
	\begin{equation}
		\label{eq:vsHM}
		\frac{2\Big(\sqrt{R(\E^\cX_\rA)+1}-1\Big)^{2}}{d-1}\ge\frac{1}{16}\sup_{x\in\cX}(\|E_\rA^x\|+\|\one_\rA-E_\rA^x\|-1)^{2}\quad(\forall \E^\cX_\rA\in\sM(\rA)).
	\end{equation}
	To show this, we use the expression $R(\E^\cX_\rA)=\sum_{x\in\cX}\|E^{x}_\rA\|-1$ ~\cite{skrzypczyk2019robustness}.
	From the triangle inequality, we have
	\begin{equation}
		\sum_{x\in\cX}\|E^x_\rA\|\ge \sup_{x\in\cX}\big(|E^x_\rA\|+\left\|\one_\rA-E^x_\rA\right\|\big).
	\end{equation}
	Setting $s:=\sqrt{\sup_{x\in\cX}(\|E_\rA^x\|+\|\one_\rA-E_\rA^x\|)}$ ($1\le s\le \sqrt{2}$), we can evaluate our lower bound as
	\begin{equation}
		\frac{2\left(\sqrt{R(\E^\cX_\rA)+1}-1\right)^2}{d-1}\ge\frac{2\left(s-1\right)^2}{d-1}.       
	\end{equation}
	On the other hand, the HM bound is expressed as
	\begin{equation}
		\frac{1}{16}\sup_{x\in\cX}(\|E_\rA^x\|+\|\one_\rA-E_\rA^x\|-1)^{2}=\frac{1}{16}(s^2-1)^2
	\end{equation}
	For $d=2,\dots,6$, we can see that
	\begin{equation}
	\frac{2\left(s-1\right)^2}{d-1}\ge\frac{1}{16}(s^2-1)^2
	\end{equation}
	holds for $1\le s\le \sqrt{2}$, which implies Eq.~\eqref{eq:vsHM}.
	
	\subsection{Equivalence between incompatibility and informativeness in Example~\ref{example:unbiased}}
	We can show that Example~\ref{example:unbiased}, the unbiased binary Pauli-Z POVM, achieves the equality of Eq.~\eqref{eq:RvsRoI}.
	The RoI $R(\rmid_\rA,\mathsf{Z}(\eta)_\rA)$ is computed by the following SDP from Eq.~\eqref{sdp:channel-POVM_robustness}.
	\begin{equation}\label{eq:robustness_channel_povm}
    \begin{split}
		\underset{\tilde{C}_{\rA\rR}^{j}}{\minimize} \quad &r\ge 0\\
		\subto \quad &\sum_{j=\pm 1}\Tr_\rA\tilde{C}_{\rA\rR}^{j}=(1+r)\one_\rR,\\
		&\sum_{j=\pm 1}\tilde{C}_{\rA\rR}^{j}-\ketbra{\phi}{\phi}_{\rA\rR}\succeq \zero_{\rA\rR},\\
		&\Tr_{\rA}\tilde{C}_{\rA\rR}^{j}-P_Z(\eta)^{j\top}_\rA\succeq \zero_{\rA}\quad(j=\pm1)\\
		&\tilde{C}_{\rA\rR}^{j}\succeq\zero_{\rA\rR}\quad(j=\pm1).
        \end{split}
	\end{equation}
	
	Now we show $R(\rmid_\rA,\mathsf{Z}(\eta)_\rA)= (\sqrt{1+\eta}-1)^2$.
	One of the feasible solutions of Eq.~\eqref{eq:robustness_channel_povm} is
	\begin{align}
		\tilde{C}_{\rA\rR}^{+1}&= 8(\alpha\ket{00}_{\rA\rR}+\beta\ket{11}_{\rA\rR})(\alpha\bra{00}_{\rA\rR}+\beta\bra{11}_{\rA\rR})=8
		\begin{pmatrix}
			\alpha^2 &0&0& \alpha\beta\\
			0&0&0&0\\
			0&0&0&0\\
			\alpha\beta&0&0&\beta^2
		\end{pmatrix},\\
		\tilde{C}_{\rA\rR}^{-1}&=8(\beta\ket{00}_{\rA\rR}+\alpha\ket{11}_{\rA\rR})(\beta\bra{00}_{\rA\rR}+\alpha\bra{11}_{\rA\rR})=8
		\begin{pmatrix}
			\beta^2 &0&0& \alpha\beta\\
			0&0&0&0\\
			0&0&0&0\\
			\alpha\beta&0&0&\alpha^2
		\end{pmatrix},
	\end{align}
	where $\alpha=\frac{\sqrt{1+\eta}}{4}$ and $\beta=\frac{2-\sqrt{1+\eta}}{4}$.
	We will later use the following formulas:
	\begin{align}
		8(\alpha^2+\beta^2)&=3+\eta-2\sqrt{1+\eta}=1+(\sqrt{1+\eta}-1)^2,\\
		16\alpha\beta&=2\sqrt{1+\eta}-(1+\eta)=1-(\sqrt{1+\eta}-1)^2.
	\end{align}
	Let us confirm that the choices $\{\tilde{C}_{\rA\rR}^j\}_{j=\pm1}$ are feasible.
	First, they are clearly positive semidefinite.
	We also have
	\begin{align}
		\sum_{j=\pm1}\tilde{C}^j_{\rA\rR}-\ketbra{\phi}{\phi}_{\rA\rR}&=(\sqrt{1+\eta}-1)^2(\ket{00}_{\rA\rR}-\ket{11}_{\rA\rR})(\bra{00}_{\rA\rR}-\bra{11}_{\rA\rR})\succeq\zero_{\rA\rR},\\
		\Tr_{\rA}\tilde{C}_{\rA\rR}^{1}-Z(\eta)^{1\top}_\rA&=
		\begin{pmatrix}
			0 & 0\\
			0 & (\sqrt{1+\eta}-1)^2
		\end{pmatrix}
		\succeq \zero_{\rA},\quad
		\Tr_{\rA}\tilde{C}_{\rA\rR}^{-1}-Z(\eta)^{-1\top}_\rA=\begin{pmatrix}
			(\sqrt{1+\eta}-1)^2 & 0\\
			0 & 0
		\end{pmatrix}\succeq \zero_{\rA},
	\end{align}
	and
	\begin{equation}
    \sum_{j=\pm1}\Tr_\rA\tilde{C}_{\rA\rR}^{j}=8(\alpha^2+\beta^2)\one_\rR=(3+\eta-2\sqrt{1+\eta})\one_\rR=\{1+(\sqrt{1+\eta}-1)^2\}\one_\rR.
	\end{equation}
	Thus they are feasible solutions, and from the first constraint of Eq.~\eqref{eq:robustness_channel_povm}, for this solution, $r$ can be $(\sqrt{1+\eta}-1)^2$ that attains the equality of Eq.~\eqref{eq:RvsRoI}.

    \subsection{Another example}
    \begin{example}
		We consider the following POVM \cite{saini2025completeness}:
		\begin{equation}\label{eq:povm_example}
        \begin{split}
            &\left\{\frac{1-p}{3}\ketbra{+x}{+x}+\frac{p\one_\rA}{6},\frac{1-p}{3}\ketbra{-x}{-x}+\frac{p\one_\rA}{6},\right.\\
			&\left.\quad\frac{1-p}{3}\ketbra{+y}{+y}+\frac{p\one_\rA}{6},\frac{1-p}{3}\ketbra{-y}{-y}+\frac{p\one_\rA}{6},\right.\\
			&\left.\quad\frac{1-p}{3}\ketbra{+z}{+z}+\frac{p\one_\rA}{6},\frac{1-p}{3}\ketbra{-z}{-z}+\frac{p\one_\rA}{6}\right\}\quad (p\in[0,1]).
        \end{split}
		\end{equation}
		One has
        \begin{align}
            \left\|\frac{1-p}{3}\ketbra{+x}{+x}+\frac{p\one_\rA}{6}\right\|=\frac{2-p}{6},\quad
            \left\|\one_\rA-\left(\frac{1-p}{3}\ketbra{+x}{+x}+\frac{p\one_\rA}{6}\right)\right\|=\frac{6-p}{6}.
        \end{align}
		Therefore, the robustness of measurement is $1-p$ and our lower bound is $2(\sqrt{2-p}-1)^2$, while
		the HM bound is $\frac{(1-p)^{2}}{144}$.
		These two bounds are compared in Fig.~\ref{fig:povm_example}. 
	\end{example}
	\begin{figure}[h]
		\centering
		\includegraphics[width=0.4\linewidth]{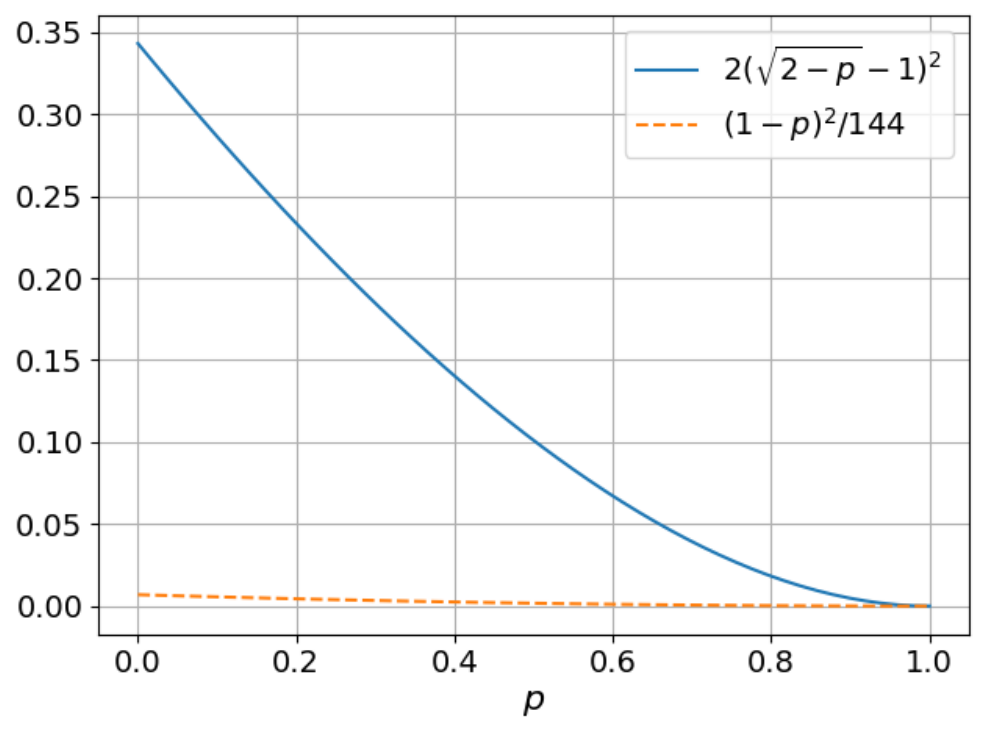}
		\caption{Comparison with our robustness-based bound $2(\sqrt{2-p}-1)^2$ and Heinosaari--Miyadera's bound $(1-p)^2/144$ for the POVM Eq.~\eqref{eq:povm_example}.}
		\label{fig:povm_example}
	\end{figure}
	
\end{document}